\documentclass[
    twocolumn,
    superscriptaddress,
    longbibliography,
    nofootinbib,
    amsmath,
    amssymb,
    aps,
    prx
]{revtex4-2}

\usepackage[utf8]{inputenc}

\usepackage{amsmath,amsthm,amsfonts,amssymb}
\usepackage{mathtools}

\usepackage{verbatim}
\usepackage[colorlinks=true,linkcolor=magenta,urlcolor=blue,citecolor=blue,anchorcolor=green,pdfusetitle]{hyperref}
\usepackage{dsfont}
\usepackage{xcolor}
\usepackage{graphicx}
\usepackage{hhline}

\usepackage[shortlabels]{enumitem}
\usepackage{cleveref}
\Crefformat{enumi}{#2\textup{#1}#3} 

\usepackage{tikz}
\usepackage{pgfplots}
\pgfplotsset{compat=1.17}
\usetikzlibrary{external}
\usepackage{tikz-network}
\SetVertexStyle[Shape=circle,InnerSep=0pt,MinSize=1.5pt,FillColor=black]
\SetEdgeStyle[LineWidth=0.75pt]

\usepackage{braket}

\allowdisplaybreaks

\newtheorem{theorem}{Theorem}\Crefname{theorem}{Theorem}{Theorems}
\newtheorem{proposition}[theorem]{Proposition}\Crefname{proposition}{Proposition}{Propositions}
\newtheorem{lemma}[theorem]{Lemma}\Crefname{lemma}{Lemma}{Lemmas}
\newtheorem{corollary}[theorem]{Corollary}\Crefname{corollary}{Corollary}{Corollaries}

\newtheorem*{theorem*}{Theorem}
\newtheorem*{proposition*}{Proposition}
\newtheorem*{lemma*}{Lemma}
\newtheorem*{corollary*}{Corollary}

\theoremstyle{definition}
\newtheorem{definition}[theorem]{Definition}\Crefname{definition}{Definition}{Definitions}
\newtheorem*{definition*}{Definition}
\newtheorem*{remark}{Remark}
\newtheorem*{remark*}{Remark}
\newtheorem{example}[theorem]{Example}\Crefname{example}{Example}{Examples}
\newtheorem*{example*}{Example}

\newcommand{\mbf}{\mathbf}
\newcommand{\mbb}{\mathbb}
\newcommand{\mc}{\mathcal}

\newcommand{\tr}{\mathrm{Tr}}
\newcommand{\Tr}{\mathrm{Tr}}
\newcommand{\id}{\mathrm{id}}
\newcommand{\op}[2]{|#1\rangle\langle #2|}
\newcommand{\ip}[2]{\langle #1| #2 \rangle}
\newcommand{\1}{\mathds{1}}

\newcommand{\supp}{\mathrm{supp}}

\newcommand{\wt}{\widetilde}
\newcommand{\rk}{\text{rk}}

\newcommand{\EPR}[1]{{#1}\text{-}{\mathrm{EPR}}}
\newcommand{\POVM}{\mathrm{POVM}}

\newcommand{\bC}{\mathbb{C}}
\newcommand{\bR}{\mathbb{R}}

\newcommand{\bZ}{\mathbb{Z}}

\newcommand{\cC}{\mathcal{C}}
\newcommand{\cE}{\mathcal{E}}
\newcommand{\cG}{\mathcal{G}}

\newcommand{\cS}{\mathcal{S}}

\newcommand{\cN}{\mathcal{N}}
\newcommand{\cR}{\mathcal{R}}

\newcommand{\cU}{\mathcal{U}}
\newcommand{\q}{\quad}

\newcommand{\pbt}{{\mathsf{PBT}}}
\newcommand{\pbts}{\mathsf{PBT}_{\mathrm{cl}}}
\newcommand{\pbtcl}{\mathsf{PBT}_{\mathrm{cl}}}
\newcommand{\pbtq}{\mathsf{PBT}_{\mathrm{q}}}
\newcommand{\pbtds}{\mathsf{PBT}_{\max}}
\newcommand{\pbtmax}{\mathsf{PBT}_{\max}}

\DeclarePairedDelimiter{\ceil}{\lceil}{\rceil}
\DeclarePairedDelimiter{\floor}{\lfloor}{\rfloor}

\newcommand{\iso}{\cong}
\newcommand{\End}{\mathrm{End}}
\newcommand{\Mat}{\mathrm{Mat}}
\newcommand{\Span}{\mathrm{Span}}

\newcommand{\rank}{\mathrm{rank}}
\newcommand{\wh}[1]{\widehat{#1}}
\newcommand{\pure}[1]{|{#1}\rangle\hspace{-1pt}\langle{#1}|}
\newcommand{\bs}{\backslash}

\newcommand{\+}{\wh{+}}
\newcommand{\ds}{\displaystyle}
\newcommand{\F}{\mathsf{F}}

\setcounter{MaxMatrixCols}{24}

\definecolor{cool_green}{rgb}{0.0, 0.5, 0.0}
\definecolor{cb_blue}{HTML}{0077BB}
\definecolor{cb_teal}{HTML}{009988}
\definecolor{cb_red}{HTML}{CC3311}
\definecolor{cb_magenta}{HTML}{EE3377}
\definecolor{cb_orange}{HTML}{EE7733}
\definecolor{cb_purple}{HTML}{AA4499}

\renewcommand{\emph}[1]{\textit{#1}}

\begin{document}

\title{A resource theory of asynchronous quantum information processing}

\author{Chloe Kim}
\affiliation{Department of Electrical and Computer Engineering, University of Illinois Urbana-Champaign}
\author{Eric Chitambar}
\affiliation{Department of Electrical and Computer Engineering, University of Illinois Urbana-Champaign}
\affiliation{Illinois Quantum Information Science and Technology Center, University of Illinois Urbana-Champaign}
\author{Felix Leditzky}
\affiliation{Illinois Quantum Information Science and Technology Center, University of Illinois Urbana-Champaign}
\affiliation{Department of Mathematics, University of Illinois Urbana-Champaign}

\date{\today}

\begin{abstract}
In standard quantum teleportation, the receiver must wait for a classical message from the sender before subsequently processing the transmitted quantum information. However, in port-based teleportation (PBT), this local processing can begin before the classical message is received, thereby allowing for asynchronous quantum information processing. Motivated by resource-theoretic considerations and practical applications, we propose different communication models that progressively allow for more powerful decoding strategies while still permitting asynchronous distributed quantum computation, a salient feature of standard PBT. Specifically, we consider PBT protocols augmented by free classical processing and/or different forms of quantum pre-processing, and we investigate the maximum achievable teleportation fidelities under such operations. Our analysis focuses specifically on the PBT power of isotropic states, bipartite graph states, and symmetrized EPR states, and we compute tight bounds on the optimal teleportation fidelities for such states. We finally show that, among this hierarchy of communication models consistent with asynchronous quantum information processing, the strongest resource theory is equally as powerful as any one-way teleportation protocol for surpassing the classical teleportation threshold. Thus, a bipartite state can break the one-way classical teleportation threshold if and only if it can be done using the trivial decoding map of discarding subsystems.
\end{abstract}

\maketitle

\section{Introduction}

Quantum teleportation is one of the most important tasks in quantum information theory \cite{Bennett_1993}.
Besides its potential use in future communication technologies, it illustrates the interplay among three different types of resources for information processing: quantum communication, classical communication, and shared entanglement \cite{Devetak_2004}.
By sharing entanglement and having access to a classical channel, two distant parties can simulate a quantum channel to exchange quantum information. In resource-theoretic terms, teleportation transforms a ``static'' resource, e.g.~an EPR state, into a ``dynamic’’ resource, e.g.~a noiseless qubit channel, using local quantum operations and classical communication (LOCC) \cite{Bennett_1996, Plenio_2007, Chitambar_2019}.

        \begin{figure*}
\includegraphics[width=17cm]{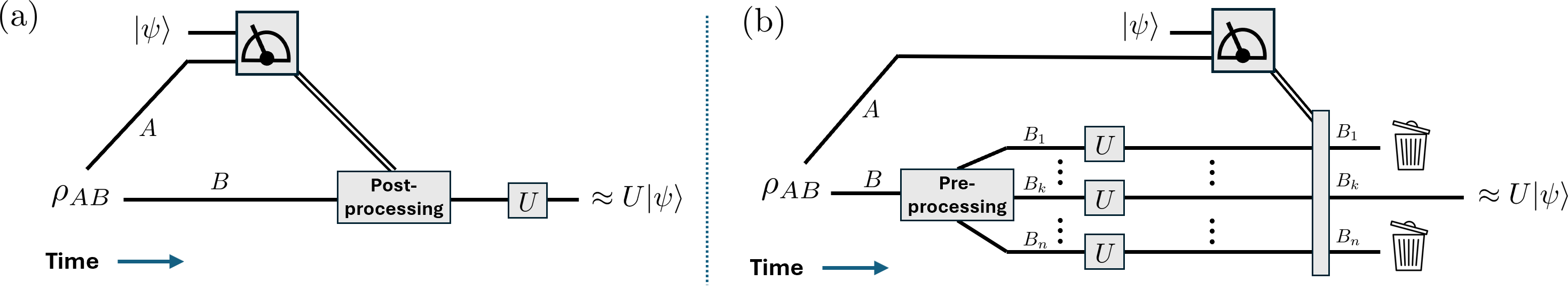}
\caption{Using teleportation in distributed computation. (a) In standard teleportation or general one-way LOCC protocols, any quantum circuit $U$ that Bob runs must be applied on his system \textit{after} he receives communication from Alice.
(b) In PBT, Bob can perform the circuit \textit{before} Alice communicates, or even before Alice receives her input $\ket{\psi}$; this allows for asynchronous distributed computation.
The fidelity between Bob's actual output and the ideal output $U\ket{\psi}$ will depend on the initial state $\rho_{AB}$.
The operational models of $\pbt$, $\pbtcl$, and $\pbtq$ do not allow Bob an arbitrary pre-processing step and assume the initial state already has a port structure $\rho_{AB^n}$.
On the other hand, $\pbtmax$ allows for pre-processing and enables the use of any initial state $\rho_{AB}$ in a PBT protocol.
}
\label{Fig:teleport_models}
        \end{figure*}   

However, static-to-dynamic quantum transformations can be achieved in a more restricted operational setting than general (two-way) LOCC.
First, if the target is a quantum channel from Alice to Bob, then it is sufficient to consider one-way LOCC, with classical communication from Alice to Bob.
Second, in the original teleportation protocol of Bennett \textit{et al.}~\cite{Bennett_1993}, Bob only needs to perform local Pauli gates; hence, one can narrow the operational model to one-way LOCC with Bob's local quantum operations being limited to single-qubit Pauli gates.
Remarkably, the model can be restricted even further while still being able to simulate a noiseless quantum channel with arbitrary precision.
The most extreme restriction is the port-based teleportation (PBT) protocol introduced by Ishizaka and Hiroshima \cite{Ishizaka_2008,Ishizaka_2009}.
In PBT, Bob's local operations only involve discarding quantum systems conditioned on Alice's classical message.
While more entanglement is needed for PBT compared to the original scheme, high-fidelity quantum communication is achievable with minimal quantum capabilities on Bob's part.

These minimal requirements for Bob's allowed operations in PBT enable the realization of asynchronous distributed quantum information processing (see Fig.~\ref{Fig:teleport_models}(b)), a capability that can be used to attack quantum position verification (QPV) protocols \cite{Beigi_2011}, simplify channel discrimination \cite{Pirandola_2019}, prepare states remotely \cite{muguruza2024}, store and retrieve quantum operations \cite{Bisio-2010a, Sedl_k_2019}, and invert the action of quantum circuits \cite{Quintino-2019a}. 
For example, attacks on QPV require adversaries Alice and Bob to perform distributed quantum computation with pre-shared entanglement and one round of simultaneous communication.
This communication constraint prohibits interactive local processing.  
However, using PBT, any time-ordered interactive attack can be replaced by Alice and Bob pre-processing their shared entanglement prior to exchanging classical communication \cite{Beigi_2011}.

Inspired by the operational capabilties of PBT, we develop a hierarchy of one-way quantum communication models and assess their performance.  
In the simplest model, denoted by $\pbt$, the sender (Alice) can perform arbitrary local quantum operations on her half of a shared entangled state $\rho_{AB_1B_2\dots B_n}$, while the receiver (Bob) can only discard subsystems $B_i$ based on a classical message from Alice.
We then generalize this model by considering two different motivations.  

First, from a resource-theoretic perspective one might be interested in understanding what type of distributed quantum information processing is possible if Bob's quantum capabilities are at the bare minimum of trivially discarding systems.
However in this case, it is reasonable to assume that Bob still has unconstrained \textit{classical} control, allowing him to prepare arbitrary classical systems (like random coins) in a teleportation protocol.
This suggests a natural operational theory that augments $\pbt$ by also allowing Bob to perform arbitrary classical operations, a model we denote by $\pbtcl$.

Second, more powerful communication models are physically relevant when considering the use of PBT in time-sensitive communication tasks, such as those referenced above.
This motivates a one-way communication model that builds on $\pbt$ by allowing Bob to perform arbitrary local pre-processing on his half of an initially shared state prior to receiving any communication from Alice.
We refer to this operational setting as $\pbtmax$ since it arguably contains the broadest range of physical operations that still enable asynchronous processing, the key feature of PBT.

Even though $\pbtmax$ is strictly more powerful than $\pbtcl$, they both characterize physically interesting situations.
$\pbtcl$ reflects a communication scenario in which the receiver essentially has no machinery available to process the held quantum information, while $\pbtmax$ reflects a scenario in which the receiver does not have enough \textit{time} to interactively process the held quantum information.
This is in sharp contrast to standard teleportation, in which the receiver must apply a Pauli correction conditioned on the classical message from the sender before passing the held information along for subsequent use in some protocol.

The main objective of this paper is to investigate how well a given bipartite entangled state can simulate a $d$-dimensional noiseless quantum channel using PBT and the other operational classes just introduced.
One fundamental question we address is when it becomes feasible for a bipartite state to generate a quantum channel whose entanglement fidelity exceeds the so-called classical threshold \cite{Horodecki_1999a}.
This threshold describes the largest fidelity to a noiseless $d$-dimensional quantum channel achievable using classical communication and no entanglement; numerically, its value is given by $\frac{1}{d}$. In the foundational work of Ref.~\cite{Horodecki_1999a}, Horodecki \textit{et al.} showed that a bipartite state $\rho_{AB}$ can violate the classical threshold using some one-way LOCC teleportation protocol if and only if it can be done using the following steps:
\begin{enumerate}
    \item[(i)] $\rho_{AB}$ is transformed using one-way LOCC into a state $\sigma_{AB}$ whose singlet fraction exceeds $\frac{1}{d}$, i.e. $\bra{\Phi^+_d}\sigma_{AB}\ket{\Phi^+_d}>\frac{1}{d}$, where $\ket{\Phi^+_d}=\frac{1}{\sqrt{d}}\sum_{i=1}^d\ket{ii}$;
    \item[(ii)] Alice and Bob perform a random joint unitary rotation $U\otimes\overline{U}$ that ``twirls'' $\sigma_{AB}$ into an isotropic state,
    \begin{equation}
        \rho_{d,f}\coloneqq f\Phi^+_d+\frac{1-f}{d^2-1}(\1\otimes\1-\Phi^+_d),
    \end{equation}
    with $\Phi^+_d\coloneqq\op{\Phi^+_d}{\Phi^+_d}$ and $f=\bra{\Phi^+_d}\sigma_{AB}\ket{\Phi^+_d}>\frac{1}{d}$;
    \item[(iii)] The standard teleportation protocol from \cite{Bennett_1993} is performed on $\rho_{d,f}$.
\end{enumerate}
This result implies that certain entangled states cannot break the classical teleportation threshold using one-way LOCC.
In particular, since achieving a singlet fraction larger than $\frac{1}{d}$ in step (i) is also sufficient for the distillability of $\rho_{AB}$ \cite{Horodecki-1997a}, it follows that every non-distillable state cannot violate the classical teleportation threshold.
It remains a long-standing open problem to efficiently characterize the types of entangled states that admit a non-classical teleportation advantage and construct explicit protocols that realize this advantage. 
Partial progress towards resolving this question was recently reported in \cite{Chitambar_2023} which derived a necessary and sufficient criterion for states to admit non-classical teleportation advantage." 

One of the main results presented here sheds new light on this problem by showing that $\pbtmax$ is sufficient for surpassing the classical threshold.
More precisely, we show that a bipartite state $\rho_{AB}$ can violate the classical threshold using some one-way LOCC teleportation protocol if and only if it can be done using the following steps:
\begin{enumerate}
    \item[(i')] Bob performs some pre-processing map on his half of $\rho_{AB}$ that prepares $B$ as a collection of subsystems $B_1\dots B_n$;
    \item[(ii')] Alice makes a local measurement on her system $A$ and the teleported state;
    \item[(iii'')] Bob discards all but one of his subsystems.
\end{enumerate}
Steps (ii') and (iii') are just the PBT protocol, while (i') involves an optimization of Bob's state prior to measurement (see Fig.~\ref{Fig:teleport_models}(b)).
While the PBT protocol of Ishizaka and Hiroshima shows that perfect teleportation can be achieved with arbitrarily small infidelity by minimal post-processing on Bob's side (i.e., just discarding subsystems), our work establishes the complementary result: By including a pre-processing step, all feasible instances of non-classical teleportation are also possible by minimal post-processing on Bob's side.  

Overall, the core contributions of this work are summarized as follows.
\begin{itemize}
    \item We fully characterize the communication channels generated by $\pbt$ and the variants $\pbtcl$ and $\pbtmax$  described above.
    We also introduce another type of PBT model, denoted as $\pbtq$, which sits in between  $\pbtcl$ and $\pbtmax$.
    While $\pbtcl$ allows Bob to prepare arbitrary classical states, $\pbtq$ expands this capability to allow for the preparation of general quantum states in the pre-processing step.
    We show that the classes form a strict hierarchy  
    \begin{align} 
    \pbt \subsetneq \pbtcl \subsetneq \pbtq \subsetneq \pbtmax
    \end{align}
    in terms of the channels they can generate.
    \item For a fixed state $\rho_{AB^n}$, we derive bounds on the quantities 
    \begin{equation}
    \label{Eq:teleportation-fidelity}
    \F_{\Omega}(\rho_{AB^n})\coloneqq\max_{\Lambda \in \Omega(\rho_{AB^n},C_0\to B)}\F(\Lambda),
    \end{equation}
    where $\Omega(\rho_{AB^n}, C_0\to B)$ denotes the set of quantum channels $\Lambda\colon C_0\to B$ that are generated from $\rho_{AB^n}$ using the operational class $\Omega \in \{\pbt,\pbtcl,\pbtq\}$, and $\F(\Lambda)$ is the entanglement fidelity of the channel $\Lambda$ (defined in Eq.~\eqref{Eq:entanglement-fidelity}).
    We characterize both $\F_{\pbt}(\rho_{AB^n})$ and $\F_{\pbtcl}(\rho_{AB^n})$ as semi-definite programs (SDPs).

    \item We explicitly compute $\F_{\pbt}$, $\F_{\pbtcl}$, and $\F_{\pbtq}$ for the family of $d$-dimensional isotropic states $\rho_{d,f}$. In the computation of $\F_{\pbtq}(\rho_{d,f})$, we derive a closed-form expression for a convex-roof extended entanglement measure for $\rho_{d,f}$, which might of independent interest.
    \item The PBT protocol for $\rho_{d,f}$ is generalized to any bipartite state $\rho_{AB}$ with suitable pre-processing.
    We show that $\F_{\max}(\rho_{AB},|C|)>\frac{1}{|C|}$ whenever there exists a map $\mc{E}\colon B\to C$ on Bob's side that yields a state which violates the reduction criterion \cite{Horodecki-1999b}. The existence of such a map was recently shown to be necessary and sufficient for the state $\rho_{AB}$ to break the $|C|$-dimensional classical teleportation threshold \cite{Chitambar_2023}.
    Thus, we establish that $\pbtmax$ is just as strong as general one-way LOCC for achieving non-classical teleportation.
    \item  We perform a detailed analysis of the PBT capabilities of bipartite graph states.
    The performance of PBT is determined by the number of isolated EPR pairs contained in the graph. If there is no isolated pair, then the entanglement fidelity of the resulting teleportation protocol cannot exceed the classical threshold.
    \item Specifically, we derive upper and lower bounds on $F_{\pbtcl}$ of $n$-EPR pair states.  
    We find that already one EPR pair can surpass the classical threshold in the $\pbtcl$ model.  Equivalently, the four-qubit state $\Phi^+_{AB_1}\otimes \pure{0}_{B_2}\otimes \pure{1}_{B_3}$ can break the classical bound by standard PBT, improving upon the smallest previously known example of non-classical PBT using a six-qubit state \cite{Mozrzymas_2018}.
    \item Motivated by applications that utilize unitary covariant channels, we consider states built by symmetrizing $k$ EPR pairs over $n\geq k$ ports.   
    We explicitly compute the optimal PBT fidelity for those symmetrized states, and we find that one EPR pair symmetrized over three ports can generate a state that exceeds the classical bound.
\end{itemize}

The paper proceeds by first providing a more detailed background and review of prior work on PBT in \Cref{sec:background}.
The operational frameworks for $\pbt$, $\pbtcl$, $\pbtq$ and $\pbtmax$ are presented in \Cref{sec:framework}, along with their associated entanglement fidelity measures.
Specific case studies of PBT are then carried out in \Cref{sec:examples}.
We begin in \Cref{sec:isotropic} by analyzing $d$-dimensional isotropic states shared on a single port.
We then consider in \Cref{sec:bipartite_graph} bipartite graph states as multi-port resources in PBT protocols.
A simple class of permutation-symmetric states constructed from EPR pairs is proposed and analyzed in \Cref{sec:symmetrized_EPR}.
Finally, in \Cref{sec:general_fidelity} we demonstrate how PBT can be used to surpass the classical teleportation threshold for general one-way-feasible states.
Concluding remarks are given in \Cref{Section:Conclusion}, and the Appendices \ref{Appendix-Cardinality bound}--\ref{sec:A:1-epr-Fpbt} contain the technical details and proofs of the main results.

\section{Background and Prior Work}\label{sec:background}

In the following we give a compressed but, to the best of our knowledge, comprehensive account of previous works studying port-based teleportation and its applications.
Port-based teleportation (PBT) was introduced and discussed by \textcite{Ishizaka_2008,Ishizaka_2009} as a modification of a related teleportation protocol implemented using linear optics \cite{Knill_2001,franson2002highfidelity}.
It was shown to give an asymptotic implementation of a universal programmable processor \cite{Ishizaka_2008}, thereby circumventing a known no-go theorem valid for finite resources \cite{Nielsen_1997}.

Converse bounds for the success probability of probabilistic PBT and the fidelity of deterministic PBT were derived in \cite{Pitalua-Garcia_2013,Ishizaka_2015,majenz2018entropy,kubicki2019resource}, and exact formulas in special cases were given in \cite{Wang_2016}. A representation-theoretic characterization of success probability and fidelity of the standard PBT protocol (operating on a collection of maximally entangled states) and fully optimized PBT protocols, yielding exact expressions for these quantities in the general case, were given in \cite{Studzinski_2017,Mozrzymas_2018}.
These results were rederived using different methods in \cite{Leditzky_2020}, where it was also explicitly shown that the pretty good measurement is optimal for standard and fully optimized PBT (see also \cite{Mozrzymas_2018}).
The asymptotic behavior of both probabilistic and deterministic PBT protocols was derived in \cite{Mozrzymas_2018,Christandl_2018}.
Further representation-theoretic characterizations of PBT protocols were studied in \cite{Mozrzymas2018simplified,Mozrzymas2024frobenius,studzinski2025irreducible}, and the recycling and degradation of ports is discussed in \cite{Strelchuk_2013,Studzinski2022squareroot,kopszak2025entanglementrecyclingprobabilisticportbased}.

Different variants of PBT have been introduced in the literature since the inception of the original protocol in \cite{Ishizaka_2008,Ishizaka_2009}.
These include multiport-based teleportation protocols \cite{Kopszak_2021,Mozrzymas2021optimalmultiport,Studzinski_2022}, continuous-variable PBT protocols \cite{boisselle2014pbtcv,Pereira_2023}, restricted or noisy PBT protocols \cite{anshu2017coherent,Jeong_2020,Collins2024teleportationofpost,muguruza2024,Eum_Kim_2024asymptotic}, and hybrid protocols combining probabilistic and deterministic PBT \cite{Strelchuk_2023}.
Recently, a series of works derived efficient quantum algorithms for performing PBT measurements using their representation-theoretic characterization \cite{Fei_2023,Wills_2023,grinko2023gelfandtsetlinbasispartiallytransposed,grinko2024efficientquantumcircuitsportbased,nguyen2023mixedschurtransformefficient}.

The relevance of PBT is due to various applications that are offered by its simplified decoding step, in contrast to the standard teleportation protocol \cite{Bennett_1993}.
Apart from the aforementioned asymptotic implementation of universal programmable quantum processors \cite{Ishizaka_2008,Banchi2020convex}, PBT was used to demonstrate violations of Bell inequalities using communication complexity advantage \cite{Buhrman_2016}, and to implement attacks on position-based cryptography based on instantaneous non-local computation \cite{Beigi_2011,dolev2022nonlocalcomputationquantumcircuits,dolev2022holographyresourcenonlocalquantum,May2022complexity}.
PBT also finds applications in quantum channel discrimination and simulation \cite{Pirandola_2019,Pereira_2021}, various tasks to manipulate unitaries \cite{Sedl_k_2019,Quintino2019probabilistic,Quintino2022deterministic,yoshida2024}, storage and retrieval of quantum programs \cite{grosshans2024multicopyquantumstateteleportation} and measurements \cite{lewandowska2022storage}, entanglement distribution \cite{Kim_2024noisy}, and telecloning \cite{okada2025portbased}.  

All of these applications utilize in some way the possibility for asynchronous quantum information processing enabled by PBT.
Other scenarios of asynchronous distributed processing have been studied in the context of state discrimination \cite{Ballester-2008a, Carmeli-2018a}, measurement incompatibility \cite{Buscemi-2020a, Ji-2024a}, and Bell nonlocality \cite{Chitambar-2021a}.  
The work presented here contributes to this broader research program in understanding the role of interaction and time delays in quantum information processing. 

\section{Framework} \label{sec:framework}

\subsection{A resource-theoretic approach to PBT }

In this section, we present four different resource theories of quantum communication inspired by asynchronous processing and PBT.
The theories of $\pbt$, $\pbtcl$, and $\pbtq$ are distinguished from $\pbtmax$ in that they assume a \textit{port structure} for Bob's system.
A port structure is a decomposition of Bob's total system into $n$ subsystems, or ports, for some integer $n$.
We denote a port structure by $B^n=B_1B_2\dots B_n$, where (by embedding smaller subsystems into large ones) it is assumed without loss of generality that each subsystem $B_i$ is a copy of the same $|B|$-dimensional system $B$.
We let $D(A\otimes B^n)$ denote the collection of density matrices $\rho_{AB^n}$ shared between Alice and Bob with a given port structure $B^n$.
Note that it may be possible for the same physical state to be realized on different port structures.
For example, $B_{\mathrm{I}}^{2n}$ might denote a port structure of $2n$ qubits while $B_{\mathrm{II}}^n$ denotes a port structure of $n$ ququart $(d=4)$ systems; yet, $D(A\otimes B_{\mathrm{I}}^{2n})$ and $D(A\otimes B_{\mathrm{II}}^n)$ represent the same collections of physical states.
When writing $\rho_{AB^n}$, it is implied that we are considering a density matrix with port structure $B^n$.
For $\pbtmax$, no port structure for Bob's system needs to be specified, and we consider arbitrary bipartite states $\rho_{AB}$.

\subsubsection{Standard PBT}

Every quantum resource theory is characterized by a set of free operations that reflect whatever actions can be performed given the experimental or algorithmic constraints \cite{Chitambar_2019}.
We first propose a quantum resource theory (QRT) for bipartite quantum communication with the following free operations:
\begin{enumerate}[(i)]
    \item Arbitrary local quantum operations by Alice; \label{it:free-op-1}
    \item Classical communication from Alice to Bob; \label{it:free-op-2}
    \item Relabeling and discarding of subsystems by Bob. \label{it:free-op-3}
\end{enumerate}
When limited to these three types of action, we refer to the QRT as \textit{standard PBT}. Under this QRT, every $\rho_{AB^n}\in D(A\otimes B^n)$ can be transformed into a quantum channel $\Lambda_{C_0\to B}$ by the following procedure.
Alice first performs a measurement on systems $AC_0$ described by a POVM $\{\Pi^{(i)}_{AC_0}\}_{i=1}^n$.
She announces the measurement outcome $i$ to Bob, who then discards all subsystems except $B_i$ and relabels the latter to $B$.
This generates the overall channel $\Lambda_{C_{0}\to B}$ whose action is given by 
\begin{equation}
\label{Eq:PBT_channel}
    \Lambda_{C_{0}\to B}(\tau_{C_0})=\sum_{i=1}^n\tr_{AC_0}[\Pi^{(i)}_{AC_0}(\rho^{(i)}_{AB}\otimes\tau_{C_0})],
\end{equation}
where $\rho^{(i)}_{AB}\coloneqq\tr_{B_i^c}(\rho_{AB^n})$ and $B_i^c \coloneqq B^n\setminus B_i$.  In Eq. \eqref{Eq:PBT_channel}, we implicitly assume that a relabeling map is performed after the partial trace that relabels $B_i$ to $B$.  In principle, we could also consider protocols in which Alice's message $i$ does not correspond to a single port, but rather to a subset of ports. 
For such a protocol, Bob discards all ports except those identified in this subset.
However, we do not consider maps of this form in this paper for the following reason.
If there were two non-disjoint subsets with, say, labels $i$ and $j$ in such a protocol, then it would generally not be possible to send the ports in parallel through the same quantum circuit prior to Alice's message, as in Fig.~\ref{Fig:teleport_models}(b), since Alice's message would specify would specify which subset of qubits need to be grouped together in the circuit; so protocols of this form are less physically motivated.
On the other hand, if all the subsets of ports were disjoint in such a protocol, then we could just identify each collection of composite ports as a single port, and we would have a standard PBT protocol using the same physical state except with a different port structure; hence the resulting channel would still have the form of Eq.~\eqref{Eq:PBT_channel}.

For a given state $\rho_{AB^n}$, we let $\pbt(\rho_{AB^n},C_0\to B)$ denote the collection of all channels $\Lambda_{C_0\to B}$ that can be built from $\rho_{AB^n}$ in this way. 
Since the output dimension of every channel built in standard PBT using $\rho_{AB^n}$ will always be $|B|$, we can unambiguously write $\text{PBT}(\rho_{AB^n})$ to denote the collection of channels whose input and output dimensions are both $|B|$, i.e. $\pbt(\rho_{AB^n})\coloneqq\pbt(\rho_{AB^n},B'\to B)$.
Here and throughout the paper we use the convention that labels differing by primes, e.g. $B$, $B'$, and $B''$, always refer to systems of the same dimension, $|B|=|B'|=|B''|$.

\subsubsection{PBT with classical processing}

From a resource-theoretic perspective, it is also natural to regard classical ancilla registers and classical randomness as being free since they are relatively easy to produce.
We thus also consider a QRT with the following free operation in addition to \Cref{it:free-op-1}--\Cref{it:free-op-3}:
\begin{enumerate}[(i)]
\setcounter{enumi}{3}
    \item Preparation of classical states by Bob.
\end{enumerate}
We denote this resource-theoretic QRT by $\pbts$, and Bob's enhanced capability in $\pbts$ allows for a richer family of channels to be built.
Alice now has reason to consider measurements on systems $AC_0$ with $s\geq n$ number of outcomes. 
Like before, for outcomes $i=1,\dots ,n$ Bob discards all subsystems except $B_i$.
But for outcomes $i=n+1,\dots,s$ Bob is now free to discard all systems in $B$, prepare some $|B|$-dimensional classical state $\omega^{(i)}$ on an ancilla register, and then relabel this register as $B$.
By a classical state, we mean that each $\omega^{(i)}$ is diagonal in the fixed computational basis.
Every channel $\Lambda_{C_0\to B}$ obtained from $\rho_{AB^n}$ in $\pbts$ thus has the form
\begin{multline}
    \Lambda_{C_0\to B}(\tau_{C_0})= \sum_{i=1}^n\tr_{AC_0}[\Pi^{(i)}_{AC_0}(\rho^{(i)}_{AB}\otimes\tau_{C_0})]\\
    {}+\sum_{i=n+1}^s\tr[\Pi^{(i)}_{AC_0}(\rho_{A}\otimes\tau_{C_0})]\omega^{(i)}_{B}.
    \label{Eq:PBT*_channel}
\end{multline}
Observe that we can express each classical state as $\omega^{(i)}_{B}=\sum_{j=1}^{|B|}p(j|i)\op{j}{j}_{B}$ for conditional distributions $p(j|i)$.
Substituting this into the last term in Eq.~\eqref{Eq:PBT*_channel} gives
\begin{align}
\label{Eq:PBT*_channel2}
    \sum_{i=n+1}^s&\tr[\Pi^{(i)}_{AC_0}(\rho_A\otimes\tau_{C_0})]\omega^{(i)}_{B}\notag\\
    &=\sum_{i=n+1}^s\sum_{j=1}^{d}p(j|i)\tr[\Pi^{(i)}_{AC_0}(\rho_A\otimes\tau_{C_0})]\op{j}{j}_{B}\\
    &=\sum_{j=1}^{d}\tr[\tilde{\Pi}^{(j)}_{AC_0}(\rho_A\otimes\tau_{C_0})]\op{j}{j}_{B},
\end{align}
where $\tilde{\Pi}^{(j)}_{AC_0}=\sum_{i=n+1}^sp(j|i)\Pi^{(i)}_{AC_0}$.
It is straightforward to verify that the set $\{\Pi_{\mc{A}C_0}^{(i)}\}_{i=1}^n\cup\{\tilde{\Pi}_{\mc{A}C_0}^{(j)}\}_{j=1}^{d}$ constitutes a valid POVM.
We can thus restrict to $s=n+d$ and $\omega_B^{(n+i)}=\op{i}{i}$ in Eq.~\eqref{Eq:PBT*_channel} without loss of generality.
We let $\pbtcl(\rho_{AB^n}, C_0\to B)$ and $\pbtcl(\rho_{AB^n})\coloneqq\pbtcl(\rho_{AB^n}, B'\to B)$ be the family of channels obtained from $\rho_{AB^n}$ using $\pbts$.    

A careful comparison of Eqns.~\eqref{Eq:PBT_channel}--\eqref{Eq:PBT*_channel2} shows that 
\begin{align}
    &\pbts(\rho_{AB^n},C_0\to B)\notag\\
    &{}=\pbt(\rho_{AB^n}\otimes\op{1}{1}_{B_{1}'}\otimes \dots\otimes\op{d}{d}_{B_{d}'},C_0\to B),
    \label{Eq:PBT*=PBT}
\end{align}
where $B'_{1}\dots B'_{d}$ are local ancilla registers held by Bob.
Equation \eqref{Eq:PBT*=PBT} offers an appealing physical interpretation of channels in $\pbtcl(\rho_{AB^n},C_0\to B)$.
The simulation of any $\Lambda\in \pbts(\rho_{AB^n},C_0\to B)$ can be accomplished by Bob first preparing $d$ additional ports independent of $\rho_{AB^n}$, with the additional port $n+i$ being in classical state $\op{i}{i}$.  
Alice then performs a POVM on $AC_0$ having $n+d$ outcomes, and just as in standard PBT, upon receiving outcome $k$ Bob discards everything except the content of port $k$.
In this way, the channels of $\pbtcl$ can have the temporal structure of Fig.~\ref{Fig:teleport_models}(b).

\subsubsection{Maximal PBT}

For a third QRT, we consider another variation to the operational model.
Motivated by applications of PBT in which Bob needs to perform post-processing before receiving Alice's classical communication, we can increase Bob's operational capability while still allowing him to respect this temporal constraint.
Namely, we allow for:
\begin{enumerate}
    \item[(v)] Arbitrary local pre-processing by Bob prior to him receiving Alice's classical communication.
\end{enumerate}
Note that by the discussion at the end of the previous section, (v) supersedes (iv).
The key restriction in (v) is that Bob's choice of local pre-processing map cannot depend on Alice's message.
The only action that Bob can perform conditioned on Alice's message is a trivial discarding of subsystems.  We refer to this operational setting as \textit{maximal} PBT, denoted by $\pbtmax$, since it captures the most general physical processes feasible under the constraint of trivial post-processing for Bob.

Adding (v) as an allowed operation now enables any bipartite state $\rho_{AB}$ to be used in a PBT protocol.  In both standard PBT and $\pbtcl$, one requires that Alice and Bob initially share a state $\rho_{AB^n}$ with Bob's system already possessing a multi-port structure.  But in $\pbtmax$, a mapping of Bob's system $B$ into an arbitrary number of ports can be facilitated by a pre-processing map (see Fig.~\ref{Fig:teleport_models}(b)).  The collection of channels from input system $C_0$ to output system $C$ generated from $\rho_{AB}$ using $\pbtmax$ is given by
\begin{align}
    \pbtmax(\rho_{AB},C_0\to C)=\bigcup_{\sigma_{AC^n}}\pbt(\sigma_{AC^n},C_0\to C),
\end{align}
where the union is taken over all $\sigma_{AC^n}$ (and all integers $n$) such that 
\begin{align}
    \sigma_{AC^n}=\text{id}_A\otimes \mc{E}_{B\to C^n}(\rho_{AB})
\end{align}
for some completely-positive trace-preserving (CPTP) map $\mc{E}_{B\to C^n}$.

We also consider a subset of $\pbtmax(\rho_{AB},C_0\to C)$ that is similar to $\pbtcl$ in its operational form.
We refer to this as $\pbtq$, and it involves upgrading (iv) to a more general form:
\begin{itemize}
\item[(iv')] Preparation of arbitrary states by Bob.
\end{itemize}
In other words, $\pbtq$ consists of all PBT protocols for which Bob's pre-processing is restricted to just preparing additional ports in arbitrary states that are independent of Alice's input.
The physical motivation for studying the resource theory of $\pbtq$ is less clear, but it could reflect some scenario in which Bob's $n$ ports in $\rho_{AB^n}$ are stored in some quantum memory that does not allow for reliable quantum processing, and preparing independent auxiliary states is relatively easier to do.
Nevertheless, as we will below, $\pbtq$ is mathematically much simpler to analyze than $\pbtmax$, and it is still sufficiently strong to enable non-classical teleportation fidelity using many states.
The hierarchy among the operational classes introduced here is
\begin{align}
    \pbt\subset\pbtcl\subset\pbtq\subset\pbtmax.
\end{align}

Letting $\pbtq(\rho_{AB^n},C_0\to B)$ and $\pbtq(\rho_{AB^n})\coloneqq\pbtq(\rho_{AB^n},B'\to B)$ denote the channels generated by $\rho_{AB^n}$ using $\pbtq$, the channels in $\pbtq(\rho_{AB^n}, C_0\to B)$ will have the same form as Eq.~\eqref{Eq:PBT*_channel} except with replacing the classical states $\omega_{B}^{(i)}$ by arbitrary quantum states $\sigma_{B}^{(i)}$.
It is also possible to place a cardinality bound on $s$.
\begin{proposition}\label{prop:cardinality-bound}
    Every channel in $\pbtq(\rho_{AB}, C_0\to B)$ can be generated by a POVM of Alice having $s\leq |C_0|^2(|A|^2+|B|^2)$ outcomes.
\end{proposition}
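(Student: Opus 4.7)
The plan is to view each POVM outcome as contributing a pair — its POVM element on $AC_0$ together with the Choi matrix of its associated sub-channel from $C_0$ to $B$ — in a real vector space whose dimension matches the stated bound, and then invoke Carath\'eodory's theorem for convex cones.

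Concretely, given a $\pbtq$ protocol for $\Lambda$ with outcomes $i=1,\dots,s$, I would associate outcome $i$ with the pair $(P_i,Q_i)$, where $P_i=\Pi^{(i)}$ is its POVM element and $Q_i$ is the Choi matrix of the sub-channel that outcome $i$ contributes to $\Lambda$. These pairs live in the real vector space $V:=L(AC_0)_{\mathrm{H}}\oplus L(C_0\otimes B)_{\mathrm{H}}$, of dimension $|A|^2|C_0|^2+|C_0|^2|B|^2=|C_0|^2(|A|^2+|B|^2)$. Let $\mc{K}\subset V$ denote the set of pairs that arise from a single legal $\pbtq$ outcome: either a port outcome with $Q$ equal to the Choi matrix of $\tau\mapsto\tr_{AC_0}[P(\rho^{(p)}_{AB}\otimes\tau)]$ for some port index $p\in\{1,\dots,n\}$, or a preparation outcome with $Q=M^T\otimes\sigma$, where $M=\tr_A[P(\rho_A\otimes\1_{C_0})]$ and $\sigma$ is a state on $B$. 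Once the port label $p$ or the state $\sigma$ is treated as a fixed auxiliary label, the $Q$-coordinate depends linearly on the $P$-coordinate, so $\mc{K}$ is closed under nonnegative scalar multiplication.

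The POVM completeness $\sum_i P_i=\1_{AC_0}$ together with additivity of Choi matrices $\sum_i Q_i=J(\Lambda)$ places the aggregate pair $(\1_{AC_0},J(\Lambda))$ in the convex cone $\cone(\mc{K})\subset V$. Carath\'eodory's theorem for convex cones then asserts that $(\1_{AC_0},J(\Lambda))$ can be expressed as a nonnegative combination of at most $\dim V$ linearly independent elements of $\mc{K}$. Invoking the scale-invariance to absorb the nonnegative coefficients into the individual pairs yields a decomposition $(\1_{AC_0},J(\Lambda))=\sum_{j=1}^{s'}(P'_j,Q'_j)$ with each $(P'_j,Q'_j)\in\mc{K}$ and $s'\leq|C_0|^2(|A|^2+|B|^2)$, which reads off directly as a $\pbtq$ protocol realizing the same channel $\Lambda$ with $s'$ outcomes.

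The main conceptual point, and the principal obstacle in seeing the proof, is that one must track the POVM element and the Choi contribution \emph{jointly}; tracking only the POVM element or only the Choi contribution would give a weaker bound of the form $|A|^2|C_0|^2$ or $|C_0|^2|B|^2$ in isolation. The remaining verification is that $P\mapsto Q$ is linear for each outcome type — for port outcomes, $Q=L_p(P)$ for the linear map $L_p$ determined by $\rho^{(p)}_{AB}$; for preparation outcomes, $Q=\tr_A[P(\rho_A\otimes\1_{C_0})]^T\otimes\sigma$ is manifestly linear in $P$ with $\sigma$ fixed — which ensures that $\mc{K}$ is scale-invariant so that the Carath\'eodory argument applies.
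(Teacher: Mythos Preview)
Your argument is correct and is essentially the same Carath\'eodory-in-the-right-space idea as the paper's proof: both track the POVM element on $AC_0$ jointly with the induced operator on $C_0'B$ (your Choi block, the paper's $\Upsilon_{C_0'B}$) and reduce inside the $|C_0|^2(|A|^2+|B|^2)$-dimensional ambient space. The only technical difference is that the paper normalizes the POVM elements and applies convex-hull Carath\'eodory to the preparation outcomes $i>n$ alone, whereas you use the cone version and treat port and preparation outcomes uniformly---which in fact delivers the stated bound on the \emph{total} number of outcomes more directly than the paper's argument.
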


The proof is provided in \Cref{Appendix-Cardinality bound}.  Similar to Eq.~\eqref{Eq:PBT*=PBT}, we thus conclude that
\begin{align}
    &\pbtq(\rho_{AB^n},C_0\to B)\notag\\
    &=\bigcup_{\{\sigma^{(i)}\}_{i=1}^{\kappa}}\pbt(\rho_{AB^n}\otimes\sigma^{(1)}_{B'_1}\otimes \dots\otimes\sigma^{(\kappa)}_{B'_{\kappa}},C_0\to B),
    \label{Eq:PBT**=PBT}
\end{align}
where the union is taken over all sets $\{\sigma^{(i)}_{B'_i}\}_{i=1}^{(\kappa)}$ of quantum states that Bob could prepare in his lab prior to Alice's communication.
Consequently, we have that $\pbtq(\rho_{AB^n},C_0\to B)$ is a topologically closed family of quantum channels.

\subsection{Entanglement Fidelity}

The performance of standard PBT is commonly measured by its entanglement fidelity.
In general, the \emph{entanglement fidelity} of a channel $\Lambda\colon B'\to B$ is defined as
\begin{equation}
\label{Eq:entanglement-fidelity}
    \F(\Lambda) \coloneqq \bra{\Phi^+}\id \otimes \Lambda(\Phi^{+}_{B''B'})\ket{\Phi^{+}}_{B''B},
\end{equation}
where $\Phi^+_{B''B'}=\frac{1}{\sqrt{|B|}}\sum_{i=1}^{|B|}\ket{ii}$ is the canonical maximally entangled state and $\Phi^+=\op{\Phi^+}{\Phi^+}$.
Because of its linear relationship with the average input-to-output fidelity $\F_{\text{avg}}$ \cite{Horodecki_1999a},
\begin{align}
  \F_{\text{avg}} = \frac{|B|\F + 1}{|B|+1},
\end{align}
the entanglement fidelity of a quantum teleportation channel is often the main figure of merit when quantifying the performance of a teleportation protocol.
Ishizaka and Hiroshima showed that the entanglement fidelity of the standard PBT protocol with the shared resource state $\rho_{AB^n}$ and positive operator-valued measure (POVM) $\Pi=\{\Pi^{(i)}_{AB'}\}_{i=1}^n$ on Alice's side is equivalent to the success probability of using $\Pi$ to discriminate the ensemble of states $\{\rho^{(i)}_{AB} = \Tr_{B_i^c} (\rho_{AB^n})\}_{i=1}^n$ with uniform prior distribution \cite{Ishizaka_2008}.
More precisely, if $\Lambda$ is the teleportation channel generated by $\rho_{AB^n}$ and $\Pi$ is Alice's measurement, then
\begin{equation}
    \F(\Lambda)=\frac{1}{|B|^2}\sum_{i=1}^n\tr[\rho^{(i)}_{AB}\Pi^{(i)}_{AB}].
\end{equation}
Optimizing over all POVMs $\Pi$ gives the ultimate performance of $\rho_{AB^n}$ under PBT,
\begin{align}
    \F_{\pbt}(\rho_{AB^n})&\coloneqq\max\{\F(\Lambda)\;|\;\Lambda\in\pbt(\rho_{AB})\} \label{eq:F-PBT}\\
    &=\max_{\{\Pi^{(i)}_{AB}\}_{i=1}^n} \frac{1}{|B|^2}\sum_{i=1}^n\tr[\rho^{(i)}_{AB}\Pi^{(i)}_{AB}].
\end{align}
Following the same reasoning (see also \cite{Chitambar_2023}), we can define 
\begin{widetext}
\begin{align}
    \F_{\pbtcl}(\rho_{AB^n})
    &\coloneqq\max\{\F(\Lambda)\;|\;\Lambda\in\pbtcl(\rho_{AB^n})\} \label{eq:F-PBTcl}\\
    &=\frac{1}{|B|^2}\max_{\{\Pi^{(i)}_{AB}\}_{i=1}^{n+|B|}}\bigg\{\sum_{i=1}^n\tr[\Pi^{(i)}_{AB} \rho^{(i)}_{AB}] + \sum_{i=1}^{|B|}\tr[\Pi^{(n+i)}_{AB}(\rho_{A}\otimes\op{i}{i}_{B})]\bigg\} \label{eq:Fpbtcl-SDP-primal}\\    
    \F_{\pbtq}(\rho_{AB^n})&\coloneqq\max\{\F(\Lambda)\;|\;\Lambda\in\pbtq(\rho_{AB^n})\} \\
    &=\frac{1}{|B|^2}\max_{\{\Pi^{(i)}_{AB}\}_{i=1}^{n+\kappa'},\;\{\sigma^{(i)}_B\}_{i=1}^{\kappa'}}\bigg\{\sum_{i=1}^n\tr[\Pi^{(i)}_{AB} \rho^{(i)}_{AB}] + \sum_{i=1}^{\kappa'}\tr[\Pi^{(n+i)}_{AB}(\rho_{A}\otimes\sigma^{(i)}_{B})]\bigg\},
\end{align}
\end{widetext}
where the value of $\kappa'=|A|^2|C|^2$ follows again by another application of Carath\'{e}dory's theorem \cite{Davies-1978a}.
We observe that both $\F_{\pbt}(\rho_{AB^n})$ and $\F_{\pbtcl}(\rho_{AB^n})$ are computable as a semi-definite program (SDP), and we exploit this structure when making explicit calculations in \Cref{sec:examples}.

For $\pbtmax$, the channel dimension is not limited to dimension of Bob's system in the shared state $\rho_{AB}$ due to his pre-processing capability.
Hence, the ultimate entanglement fidelity achievable by $\rho_{AB}$ is dimension dependent, and we define it as
\begin{multline}
    \F_{\pbtmax} (\rho_{AB},|C|)\\
    =\max\{\F(\Lambda)\;|\;\Lambda\in\pbtmax(\rho_{AB},C'\to C)\}.
\end{multline}

We close this section by observing general bounds on  the teleportation entanglement fidelities under $\pbtcl$ and $\pbtq$.
\begin{proposition}
\label{Prop:PBT-bounds}
For any state $\rho_{AB^n}$ and entanglement fidelity $\F=\F_{\pbtcl}(\rho_{AB^n})$ or $\F_{\pbtq}(\rho_{AB^n})$, we have
\begin{align}
\label{Eq:PBT-bounds}
    \frac{1}{|B|}\leq \F\leq \frac{1}{|B|}+\alpha
\end{align}
where
\begin{align}
    \alpha=\max_{\substack{\{\Pi^{(i)}_{AB}\}_{i=1}^n\\ \sum_{i=1}^n\Pi^{(i)}_{AB}\leq\1_{AB}}}\frac{1}{|B|^2}\sum_{i=1}^n\tr[\Pi^{(i)}_{AB}(\rho^{(i)}_{AB}-\rho_{A}\otimes\1_{B})]. \label{eq:alpha}
\end{align}
\end{proposition}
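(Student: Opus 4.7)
The plan is to prove the two inequalities in~\eqref{Eq:PBT-bounds} separately, leveraging the inclusion $\pbtcl \subset \pbtq$: it suffices to establish the lower bound for $\pbtcl$ and the upper bound for $\pbtq$, and each inequality then automatically transfers to the other model.

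For the lower bound, I will exhibit an explicit $\pbtcl$ protocol attaining entanglement fidelity $1/|B|$, namely the trivial ``measure-and-prepare'' strategy in which Alice ignores her share $A$ entirely and measures the teleported register $C_0$ in the computational basis, while Bob, upon receiving outcome $i$, discards all of his ports and prepares the classical state $\op{i}{i}_B$. In the notation of Eq.~\eqref{Eq:PBT*_channel} this corresponds to $\Pi^{(i)}_{AC_0} = 0$ for $i \leq n$ and $\Pi^{(n+i)}_{AC_0} = \1_A \otimes \op{i}{i}_{C_0}$ for $i = 1, \ldots, |B|$. A direct computation of the resulting Choi state $\tfrac{1}{|B|}\sum_{i} \op{i}{i}_{B''} \otimes \op{i}{i}_B$ shows that its overlap with $\Phi^+_{B''B}$ is exactly $1/|B|$.

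For the upper bound, I start from the expression for $\F_{\pbtq}(\rho_{AB^n})$ and use that each preparation $\sigma^{(j)}_B$ is a density operator, hence $\sigma^{(j)}_B \leq \1_B$. This immediately yields $\tr[\Pi^{(n+j)}_{AB}(\rho_A \otimes \sigma^{(j)})] \leq \tr[\Pi^{(n+j)}_{AB}(\rho_A \otimes \1_B)]$. Summing over $j$ and invoking POVM completeness $\sum_{j=1}^{\kappa'} \Pi^{(n+j)}_{AB} = \1_{AB} - \sum_{i=1}^n \Pi^{(i)}_{AB}$, the ``classical branch'' sum collapses to $|B| - \sum_i \tr[\Pi^{(i)}_{AB}(\rho_A \otimes \1_B)]$. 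Adding the ``quantum branch'' contribution $\sum_i \tr[\Pi^{(i)}_{AB} \rho^{(i)}_{AB}]$ and dividing by $|B|^2$ leaves $\tfrac{1}{|B|} + \tfrac{1}{|B|^2} \sum_i \tr[\Pi^{(i)}_{AB}(\rho^{(i)}_{AB} - \rho_A \otimes \1_B)]$ as an upper bound on the objective for every feasible choice of POVM. Maximizing over all $\{\Pi^{(i)}_{AB}\}_{i=1}^n$ with $\sum_i \Pi^{(i)}_{AB} \leq \1_{AB}$ --- the only constraint that survives, since any such family can be completed to a valid $(n+\kappa')$-outcome POVM --- produces exactly the quantity $\alpha$ in~\eqref{eq:alpha}.

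No single step is particularly delicate; the entire argument rests on the operator inequality $\sigma \leq \1$ for density operators and on cleanly separating the ``PBT branch'' from the ``preparation branch'' via POVM completeness. The point worth being careful about is that the feasible set for the optimization in~\eqref{eq:alpha} is characterized by the \emph{inequality} $\sum_i \Pi^{(i)}_{AB} \leq \1_{AB}$ rather than by a full POVM equality, precisely because the ancillary elements $\Pi^{(n+j)}_{AB}$ have been absorbed entirely by the trace bound $\sigma^{(j)} \leq \1_B$; any slack in this inequality corresponds to the probability mass that the original protocol assigned to the preparation outcomes.
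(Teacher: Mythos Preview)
Your proposal is correct and follows essentially the same argument as the paper: the upper bound via $\sigma^{(j)}_B\leq\1_B$ combined with POVM completeness, and the lower bound via the explicit classical measure-and-prepare protocol. Your presentation is in fact slightly more careful than the paper's, particularly in writing out the POVM elements as $\1_A\otimes\op{i}{i}_{C_0}$ and in explicitly invoking the inclusion $\pbtcl\subset\pbtq$ to transfer each bound to the other model.
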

\begin{proof}
Since $\sigma^{(i)}_{B}\leq \1$, we have
\begin{align}
     &\sum_{i=1}^{|B|}\tr[\Pi^{(n+i)}_{AB}(\rho_{A}\otimes\sigma^{(i)}_{B})]\notag\\
     &\qquad {} \leq\sum_{i=n+1}^{n+|B|}\tr[\Pi^{(i)}_{AB}(\rho_{A}\otimes\1_B)] \\
     &\qquad {} =|B|-\sum_{i=1}^n\tr[\Pi^{(i)}_{AB}(\rho_{A}\otimes\1_B)], 
\end{align}
where we have used the completion relation of the POVM.
Substituting into the definition of $\F_{\pbtq}(\rho_{AB^n})$ proves the upper bound in Eq.~\eqref{Eq:PBT-bounds}.

For the lower bound, observe that $\pbtcl$ allows for the optimal classical protocol to be performed.
That is, Alice is able to simply measure her input in the computational basis, announce the outcome to Bob, and then Bob prepares in system $B$ the corresponding computational basis state; this protocol attains a teleportation fidelity of $|B|^{-1}$, thereby establishing the lower bound in Eq.~\eqref{Eq:PBT-bounds}.
The lower bound of $|B|^{-1}$ can also be seen explicitly in the definition of $\F_{\pbtcl}$ by setting $\Pi^{(i)}_{AB}=0$ for $i=1,\dots, n$ and $\Pi_{AB}^{i}=\op{i-n}{i-n}$ for $i=n+1,\dots,n+|B|$.
\end{proof}

Each PBT class $\Omega\in\{\pbt,\pbtcl,\pbtq,\pbtmax\}$  induces an ordering on the set of density operators through the teleportation fidelity as
\begin{align}
  \rho \preceq_{\Omega} \rho' \!\!\iff \F_{\Omega}(\rho_{AB}) \leq \F_{\Omega}(\rho').
\end{align}
We say $\rho_{AB}$ and $\rho_{AB}'$ are \emph{$\Omega$-equivalent}, or $\rho_{AB} \sim_{\Omega} \rho_{AB}'$, if and only if $\F_{\Omega}(\rho_{AB}) = F_{\Omega}(\rho_{AB}')$.
From the definition of $\F_{\Omega}(\rho_{AB})$, it is immediately clear that 
\begin{equation}
    \label{Eq:fidelity-monotone}
    \F_{\Omega}(\rho_{AB})\geq\F_{\Omega}(\mc{E}_A\otimes\text{id}_B(\rho_{AB}))
\end{equation}
for every CPTP map on Alice's side.
This implies that $\rho_{AB} \sim_{\Omega} \rho_{AB}'$ whenever $\rho_{AB}$ and $\rho_{AB}'$ are related by a local isometry on Alice's side.
Since all purifications of Bob's reduced state $\rho_B$ are related by some isometry on Alice's system, we arrive at the following observation.
\begin{proposition}
    \label{prop:fidelity-pure-state}
    For any pure state $\ket{\psi}_{AB}$ and $\Omega\in\{\pbt,\pbtcl,\pbtq\}$, the PBT teleportation fidelity $\F_\Omega(\ket{\psi}_{AB})$ depends only on Bob's marginal state $\rho_B=\tr_A\op{\psi}{\psi}_{AB}$.
    For $\pbtmax$, the fidelity $\F_{\pbtmax}(\ket{\psi}_{AB})$ depends only on the spectrum of $\rho_B$.
\end{proposition}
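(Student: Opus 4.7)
The plan is to combine Equation~\eqref{Eq:fidelity-monotone}---the monotonicity of $\F_\Omega$ under local CPTP maps on Alice's side---with the standard fact that any two purifications of the same mixed state are related by an isometry on the purifying system. Applying the monotone along that isometry in both directions will yield equality.

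For $\Omega \in \{\pbt,\pbtcl,\pbtq\}$, fix Bob's system together with its port structure and let $\ket{\psi}_{AB}$ and $\ket{\psi'}_{A'B}$ be two purifications of $\rho_B$. Without loss of generality $|A|\leq |A'|$, so there is an isometry $V\colon A \to A'$ with $\ket{\psi'}_{A'B} = (V \otimes \1_B)\ket{\psi}_{AB}$. Then $\mc{E}(X) = V X V^\dagger$ is a CPTP map from $A$ to $A'$ that sends $\ket{\psi}$ to $\ket{\psi'}$, which gives $\F_\Omega(\ket{\psi}_{AB}) \geq \F_\Omega(\ket{\psi'}_{A'B})$ via \eqref{Eq:fidelity-monotone}. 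For the opposite inequality I would use the CPTP map
\begin{equation}
\mc{E}'(Y) = V^\dagger Y V + \tr\bigl[(\1_{A'} - V V^\dagger) Y\bigr]\tau_A,
\end{equation}
for any fixed state $\tau_A$: this is manifestly CP, trace-preserving (since $V^\dagger V = \1_A$), and because $\ket{\psi'}$ lies in the range of $V \otimes \1_B$, the completion term annihilates $\op{\psi'}{\psi'}$, leaving $(\mc{E}' \otimes \id_B)(\op{\psi'}{\psi'}) = \op{\psi}{\psi}$. A second application of \eqref{Eq:fidelity-monotone} then yields the reverse inequality.

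The $\pbtmax$ case follows by reusing the same isometry argument on Alice's side and then exploiting the additional freedom on Bob's side. Since $\pbtmax$ permits arbitrary local CPTP pre-processing on $B$, an analogue of \eqref{Eq:fidelity-monotone} holds for Bob as well; in particular $\F_{\pbtmax}$ is invariant under isometric embeddings of $B$ into a larger Hilbert space and under local unitaries on $B$. Two density operators with identical spectrum can be embedded isometrically into a common ambient space on which they become unitarily related, so chaining these Bob-side invariances with the already-established Alice-side independence gives $\F_{\pbtmax}(\ket{\psi}_{AB}) = \F_{\pbtmax}(\ket{\phi}_{A'B'})$ whenever $\ket{\psi}$ and $\ket{\phi}$ purify states with the same spectrum.

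The argument is essentially a bookkeeping exercise. The only technical point requiring care is verifying that the reversing map $\mc{E}'$ correctly recovers $\op{\psi}{\psi}$ from $\op{\psi'}{\psi'}$, which hinges on the identity $(\1_{A'} - V V^\dagger)(V \otimes \1_B)\ket{\psi} = 0$. No substantive obstacle remains; distinguishing $\pbtmax$ from the other three classes simply amounts to noting that only $\pbtmax$ contains nontrivial free CPTP operations on Bob's side, which is precisely what promotes ``marginal'' to ``spectrum of the marginal'' in the conclusion.
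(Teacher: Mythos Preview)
Your proof is correct and follows essentially the same approach as the paper: both use the monotonicity of $\F_\Omega$ under Alice-side CPTP maps (Eq.~\eqref{Eq:fidelity-monotone}) together with the fact that purifications of the same $\rho_B$ are related by an Alice-side isometry, and for $\pbtmax$ both invoke Bob's additional freedom to apply local unitaries (you also allow isometric embeddings on $B$, which is a harmless generalization). The paper's argument is terser and leaves the reverse inequality implicit, whereas you make it explicit by constructing the left-inverse channel $\mc{E}'$; this is a welcome clarification but not a different idea.
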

\noindent The reason why we have the stronger statement for $\pbtmax$ is because Bob can also perform arbitrary local unitaries on $\ket{\psi}_{AB}$ as part of his pre-processing map.
Hence, 
\begin{equation}
\label{Eq:PBT-max-pure-LU}
    \ket{\psi}_{AB} \sim_{\text{LU}} \ket{\psi'}_{AB} \Longrightarrow\ket{\psi}_{AB} \sim_{\pbtmax} \ket{\psi'}_{AB} ,
\end{equation}
where $\sim_{\text{LU}}$ denotes a local unitary equivalence of states.

\subsubsection{The classical threshold}

The classical threshold of the entanglement fidelity for teleportation channels $\Lambda\colon B\to B$ is $|B|^{-1}$.
As shown in Proposition \ref{Prop:PBT-bounds}, this is the fidelity that can be attained by Alice simply sending a classical message to Bob after measuring her quantum input in a specified basis, and Bob preparing the corresponding basis state.
Unlike $\pbt$, the enhanced capability of $\pbtcl$ allows it to at least attain this threshold for every state $\rho_{AB^n}$.
The next natural question is to understand which states $\rho_{AB^n}$ can surpass this threshold and thereby offer a truly non-classical communication advantage in teleportation.
Interestingly, Proposition \ref{Prop:PBT-bounds} implies that not all entangled states offer such an advantage.

\begin{corollary}
\label{Cor:reduction}
    If $\rho_{AB^n}$ satisfies $\rho^{(i)}_{AB}\leq\rho_A\otimes\1$ for all $i=1,\dots,n$ (where $\rho^{(i)}_{AB}=\tr_{B_i^c}(\rho_{AB^n}))$, then 
    \begin{align} 
    	\F_{\pbtcl}(\rho_{AB^n})=\F_{\pbtq}(\rho_{AB^n})=\frac{1}{d}.
    \end{align}    
\end{corollary}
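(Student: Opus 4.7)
The plan is to derive the corollary as an essentially immediate consequence of the upper and lower bounds in \Cref{Prop:PBT-bounds}. The lower bound $\F \geq 1/|B|$ holds unconditionally for both $\F_{\pbtcl}$ and $\F_{\pbtq}$, so the only real work is to show that the hypothesis $\rho^{(i)}_{AB}\leq \rho_A \otimes \1_B$ forces the quantity $\alpha$ defined in Eq.~\eqref{eq:alpha} to vanish; the identification $d=|B|$ then yields the claim.

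First I would observe that the hypothesis makes each operator $\rho^{(i)}_{AB} - \rho_A \otimes \1_B$ negative semidefinite. Since any candidate POVM element $\Pi^{(i)}_{AB}$ in the optimization defining $\alpha$ is positive semidefinite, and the trace of the product of a positive semidefinite operator with a negative semidefinite operator is non-positive, every summand $\tr[\Pi^{(i)}_{AB}(\rho^{(i)}_{AB}-\rho_A\otimes\1_B)]$ is $\leq 0$. Hence $\alpha \leq 0$ under the stated assumption.

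Next I would note that the trivial choice $\Pi^{(i)}_{AB} = 0$ for all $i$ is feasible (it satisfies $\sum_i \Pi^{(i)}_{AB} \leq \1_{AB}$) and attains the value $0$, so $\alpha \geq 0$ in general. Combining these two observations gives $\alpha = 0$ when the hypothesis of the corollary holds. Plugging $\alpha = 0$ into the upper bound of \Cref{Prop:PBT-bounds} yields $\F_{\pbtcl}(\rho_{AB^n}), \F_{\pbtq}(\rho_{AB^n}) \leq 1/|B|$, while the lower bound from the same proposition gives the matching inequality in the other direction. Identifying $d=|B|$ finishes the proof.

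There is essentially no obstacle here, since all the hard work has been packaged into \Cref{Prop:PBT-bounds}. The only subtle point worth stating explicitly is that $\tr(XY) \leq 0$ whenever $X \succeq 0$ and $Y \preceq 0$, which follows from rewriting $\tr(XY) = -\tr(X(-Y))$ and using that the trace inner product of two positive semidefinite operators is non-negative.
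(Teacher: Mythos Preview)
Your proposal is correct and follows essentially the same approach as the paper's proof, which likewise appeals to \Cref{Prop:PBT-bounds} and observes that the hypothesis forces $\alpha\leq 0$. Your additional remark that $\alpha\geq 0$ always (via the zero POVM) is a nice touch but not strictly needed, since $\alpha\leq 0$ alone already pushes the upper bound down to $1/|B|$.
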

\begin{proof}
This fact follows from the definition of $\alpha$ in \eqref{eq:alpha}: if $\rho^{(i)}_{AB}\leq\rho_A\otimes\1$ $\forall i$, then $\alpha$ will be non-positive.
\end{proof}

We can state the conditions of this corollary in a more familiar way. Recall that a bipartite density matrix $\rho_{AB}$ is said to satisfy the reduction criterion \cite{Horodecki-1999b} if 
\begin{align}
    \rho_A\otimes\1-\rho_{AB}\geq 0.
\end{align}
It is known that all separable states respect this inequality, but the converse is not true.
Furthermore, a violation of the reduction criterion ensures that $\rho_{AB}$ is distillable \cite{Horodecki-1999b}.
Corollary \ref{Cor:reduction} says that $\rho_{AB^n}$ cannot break the classical threshold by $\pbtq$ if each of its marginal states $\rho^{(i)}_{AB}$ satisfies the reduction criterion.
It is already known that the classical bound holds for general one-way LOCC teleportation protocols whenever the full state $\rho_{AB^n}$ satisfies the reduction criterion \cite{Horodecki_1999a}.
Corollary \ref{Cor:reduction} presents an analogous result that is specialized to the setting of $\pbtq$.

\section{Examples of the PBT teleportation fidelities}

\label{sec:examples}

In the remainder of this paper we compute the PBT teleportation entanglement fidelities for different quantum states.
In most cases we exploit symmetries in the given states to make the calculation feasible.
An underlying goal is to identify which states are capable of breaking the classical threshold in the different PBT resource theories.

Note that for any bipartite state $\rho_{AB}$ shared on a single port of dimension $|B|$, the definition of $\F_{\pbt}$ immediately implies that
\begin{equation}
    \F_{\pbt}(\rho_{AB})=\frac{1}{|B|^2}.
\end{equation}
Hence, it might be tempting to that $\rho_{AB}$ has no PBT resourcefulness when using just a single port.
However, as we will see in the first example of isotropic states, it is possible to ``activate'' the PBT resourcefulness of some states on a single port using $\pbtcl$ and $\pbtq$.
In \Cref{sec:general_fidelity} we will see that it is, in fact, possible to activate all one-way teleportation resource states shared on a single port using $\pbtmax$.

\subsection{Isotropic states}

\label{sec:isotropic}

The $d$-dimensional isotropic states \cite{Horodecki-1999b} constitute a one-parameter family of states defined as
\begin{align}
\label{Eq:isotropic-state}
    \rho_{d,f}=f\Phi^+_d+\frac{1-f}{d^2-1}\left(\1_{AB}-\Phi^+_d\right),
\end{align}
where $\Phi^+_d=\op{\Phi^+_d}{\Phi^+_d}$ with $\ket{\Phi_d^+}=\frac{1}{\sqrt{d}}\sum_{i=1}^d\ket{ii}$, and $f=\tr[\rho_{d,f}\Phi^+_d]\in[0,1]$ is the singlet fraction.
The isotropic state $\rho_{d,f}$ is entangled if and only if $f>1/d$ \cite{Horodecki-1999b}.
All members of this family share the same symmetries as $\ket{\Phi^+_d}$.
To make this explicit,  let $\text{U}(d)$ denote the group of unitaries on $\mbb{C}^d$ and define the twirling map $\mc{T}_{U\otimes\overline{U}}(X)=\int_{U\in\text{U}(d)}dU(U\otimes \overline{U})X(U\otimes \overline{U})^\dagger$, where $dU$ denotes the Haar measure on $\text{U}(d)$.
The action of $\mc{T}_{U\otimes\overline{U}}$ on an arbitrary operator $X$ transforms it into isotropic form
\begin{align}
    \mc{T}_{U\otimes\overline{U}}(X)&=\tr[\Phi^+_d X]\Phi^+_d+\frac{1-\tr[\Phi^+_d X]}{d^2-1}\left(\1_{AB}-\Phi^+_d\right). 
\end{align}
We note that by the theory of unitary $t$-designs, the group $\text{U}(d)$ can be replaced by a finite group such that the map $\mc{T}_{U\otimes\overline{U}}$ can be equivalently implemented by sampling over a finite set of unitaries.  

In the following, we assume that a single copy of $\rho_{d,f}$ is shared between Alice and Bob, and explicitly compute both $\pbtcl(\rho_{d,f})$ and $\pbtq(\rho_{d,f})$ in this setting.
While the fidelities go to zero as $d$ grows large, we are interested in deciding whether it goes to zero strictly slower than the classical threshold of $1/d$.
We find that for both $\pbtcl(\rho_{d,f})$ and $\pbtq(\rho_{d,f})$ this is indeed the case, thereby demonstrating the positive utility of isotropic states in finite-dimensional PBT protocols with a single port.
As for $\pbtmax(\rho_{d,f})$, since $\ket{\Phi_d^+}$ is locally equivalent to $n\approx \log d$ EPR pairs, Alice and Bob can rotate into $\ket{\Phi_2^+}^{\otimes n}$ and apply the original PBT protocol of Ishizaka and Hiroshima \cite{Ishizaka_2008}.
This implies that $\pbtmax(\rho_{d,f})\to f$ as $d\to \infty$.

\subsubsection{$\pbtcl$ fidelity}

\label{sec:isotropic-cl}

We start by considering the optimal teleportation fidelity of an isotropic state $\rho_{d,f}$ in the $\pbtcl$ framework.

\begin{theorem}\label{thm:iso-pbtcl}
    The classical-processing PBT fidelity of the isotropic state $\rho_{d,f}$ is given by
	\begin{align}
		\F_{\pbtcl}(\rho_{d,f}) = \begin{cases}\displaystyle
			\frac{1}{d} + \frac{(1-df)^2}{d^2(d^2f-1)} & \text{\upshape if $f > \frac{1}{d}$},\\[1em]
			\displaystyle\frac{1}{d} & \text{\upshape otherwise.}
		\end{cases} 
	\end{align}
	For $f>\frac{1}{d}$, this fidelity exceeds the classical fidelity, $\F_{\pbtcl} > \frac{1}{d}$.
\end{theorem}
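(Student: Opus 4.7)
The quantity $\F_{\pbtcl}(\rho_{d,f})$ is the optimum of the SDP in Eq.~\eqref{eq:Fpbtcl-SDP-primal}, specialized to $n=1$, $\rho_{AB}=\rho_{d,f}$, and $\rho_A=\1_A/d$, with $d+1$ POVM outcomes on $AB$. For $f\leq 1/d$, an eigenvalue check shows $\rho_{d,f}\leq \1_{AB}/d=\rho_A\otimes \1_B$, so the reduction criterion holds and \Cref{Cor:reduction} yields $\F_{\pbtcl}(\rho_{d,f})=1/d$, matching the claim in this regime. The remainder of the plan addresses the case $f>1/d$, where the closed-form expression must be obtained by solving the SDP explicitly.

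The key simplification is the symmetric-group symmetry of the objective: for every $\pi\in S_d$, conjugation by $P_\pi\otimes P_\pi$ (with $P_\pi$ the permutation matrix) fixes $\rho_{d,f}$ and $\rho_A$ and permutes the constraint family $\{\1_A\otimes\op{i}{i}_B\}_{i=1}^d$ among itself. The joint action $\Pi^{(k)}\mapsto (P_\pi\otimes P_\pi)\Pi^{(k)}(P_\pi\otimes P_\pi)^\dagger$ together with the relabeling $1+i\mapsto 1+\pi(i)$ therefore preserves the POVM constraints and the objective. Averaging any feasible POVM over $S_d$ in this way, I may restrict to symmetric configurations in which $\Pi^{(1)}$ lies in the diagonal $S_d$-commutant on $\bC^d\otimes\bC^d$, and $\Pi^{(1+i)}=(P_{\sigma_i}\otimes P_{\sigma_i})\Pi^{(2)}(P_{\sigma_i}\otimes P_{\sigma_i})^\dagger$ for a single $S_{d-1}$-invariant operator $\Pi^{(2)}$ and any $\sigma_i$ with $\sigma_i(1)=i$. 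The classical sum then collapses to one term proportional to $\tr[\Pi^{(2)}(\1_A\otimes\op{1}{1}_B)]$, leaving an SDP in just a few scalar parameters.

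I plan to solve this reduced SDP by exhibiting a matching primal--dual pair. The dual reads
\begin{equation*}
\F_{\pbtcl}(\rho_{d,f})=\tfrac{1}{d^2}\min\{\tr Y \,:\, Y\geq \rho_{d,f},\; Y\geq \tfrac{1}{d}\1_A\otimes \op{i}{i}_B\;\forall i\},
\end{equation*}
and the same symmetrization forces $Y$ into the four-dimensional $S_d$-commutant spanned by the permutation-invariant operators $\sum_i\op{ii}{ii}$, $\sum_{i\neq j}\op{ii}{jj}$, $\sum_{i\neq j}\op{ij}{ij}$, and $\sum_{i\neq j}\op{ij}{ji}$. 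By $S_d$-covariance of $Y$, the second family of constraints collapses to the single one at $i=1$. I would guess a dual witness saturating $Y\geq \rho_{d,f}$ on the orthogonal complement of $\ket{\Phi^+_d}$ and $Y\geq \tfrac{1}{d}\1_A\otimes\op{1}{1}_B$ on a suitable subspace transverse to that, and then fix the scalar coefficients so that both PSD inequalities hold with equality on their active eigenspaces. Evaluating $\tr Y/d^2$ should yield $\tfrac{1}{d}+\tfrac{(1-df)^2}{d^2(d^2f-1)}$. Complementary slackness then dictates the supports of the primal POVM, and a direct computation verifies that the corresponding primal value matches.

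The main obstacle I anticipate is decomposing the second dual constraint $Y\geq \tfrac{1}{d}\1_A\otimes\op{1}{1}_B$ into scalar inequalities: because $\1_A\otimes \op{1}{1}_B$ is not $S_d$-invariant, the operator $Y-\tfrac{1}{d}\1_A\otimes\op{1}{1}_B$ lives in the larger $S_{d-1}$-commutant and becomes a block-matrix positivity condition coupling the ``diagonal'' span of $\{\ket{ii}\}$ and the ``off-diagonal'' span of $\{\ket{ij}:i\neq j\}$ inside $\bC^d\otimes\bC^d$. Once this block decomposition is written out and the tight components are identified, the remaining scalar optimization reduces to elementary algebra, and the closed-form expression follows.
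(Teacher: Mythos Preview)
Your proposal is correct and follows essentially the same approach as the paper: both solve the SDP by exhibiting a matching primal--dual pair, with the dual variable living in the $S_d$-commutant (the paper's ansatz for $K$ in Eq.~\eqref{eq:K} is precisely a linear combination of $\1$, $\sum_i\op{ii}{ii}$, and $|\gamma\rangle\langle\gamma|$), and both identify the nontrivial step as verifying the non-invariant constraint $Y\geq \tfrac{1}{d}\1_A\otimes\op{1}{1}_B$ via a block-matrix eigenvalue computation in the $S_{d-1}$-commutant (the paper's matrix $M$ in Eq.~\eqref{eq:M}). Your explicit invocation of the $S_d$ symmetrization is a cleaner justification for the ansatz than the paper's direct guess, but the resulting calculation is identical.
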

\noindent Since the isotropic state $\rho_{d,f}$ is separable for $f\leq 1/d$, \Cref{Cor:reduction} immediately gives $\F_{\pbtcl}(\rho_{d,f})=\frac{1}{d}$ when $f\leq 1/d$.
The argument for the case $f > 1/d$ is presented in \Cref{app:iso-pbtcl-proof}.

\subsubsection{$\pbtq$ fidelity}

\label{sec:isotropic-q}

Turning to $\pbtq$, we establish the following expression for the entanglement fidelity.
\begin{theorem}
\label{thm:iso-fidelity-q}
The optimal PBT fidelity with quantum state preparation of the isotropic state $\rho_{d,f}$ is equal to
    \begin{equation}
         \label{Eq:PBT**-EPR-text}
         \F_{\pbtq}(\rho_{d,f})=
         \begin{cases}\displaystyle \frac{1}{d}+\frac{f}{d^2}-\frac{1}{d^3} & \text{\upshape if } f\geq 1/d\\[1em]
         \displaystyle\frac{1}{d} & \text{\upshape otherwise.}
         \end{cases}
    \end{equation}
\end{theorem}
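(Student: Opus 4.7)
The plan is to match upper and lower bounds, handling the separable and non-separable regimes separately. For $f \leq 1/d$ the isotropic state is separable and its eigenvalues (namely $f$ on $\Phi^+_d$ and $(1-f)/(d^2-1)$ on the complement) are both bounded by $1/d$, so $\rho_{d,f} \leq \rho_A \otimes \1_B$; Corollary \ref{Cor:reduction} then yields $\F_{\pbtq}(\rho_{d,f}) = 1/d$ immediately.

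For $f > 1/d$ I would first extract the matching upper bound from Proposition \ref{Prop:PBT-bounds}. Specializing $\alpha$ in Eq.~\eqref{eq:alpha} to $n=1$ and using $\rho_A = \1_A/d$,
$$\rho_{d,f} - \rho_A \otimes \1_B = \Bigl(f - \tfrac{1}{d}\Bigr)\Phi^+_d + \Bigl(\tfrac{1-f}{d^2-1} - \tfrac{1}{d}\Bigr)(\1_{AB} - \Phi^+_d),$$
whose positive part is $(f - 1/d)\Phi^+_d$ because the second coefficient is negative precisely when $f > 1/d$. Hence the maximum of $\tr[\Pi(\rho_{d,f}-\rho_A\otimes\1_B)]$ over $0\leq\Pi\leq\1$ is $(f-1/d)$, attained at $\Pi=\Phi_d^+$; so $\alpha = (f-1/d)/d^2$ and $\F_{\pbtq}(\rho_{d,f}) \leq \tfrac{1}{d} + \tfrac{f}{d^2} - \tfrac{1}{d^3}$.

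For the matching lower bound I would exhibit an explicit $\pbtq$ protocol. Take $\Pi^{(1)} = \Phi^+_d$, contributing $\tr[\Phi^+_d \rho_{d,f}] = f$ to the fidelity. The complementary operator $\1_{AB} - \Phi^+_d = (d^2-1)\rho_{d,0}$ is a positive multiple of the separable isotropic state $\rho_{d,0}$, so it admits a decomposition $\1_{AB} - \Phi^+_d = \sum_i \mu_i \op{a_i b_i}{a_i b_i}$ into rank-one product operators with $\sum_i \mu_i = d^2 - 1$. Setting the remaining POVM elements to $\Pi^{(1+i)} = \mu_i \op{a_i b_i}{a_i b_i}$ and Bob's preparations to $\sigma^{(i)} = \op{b_i}{b_i}$, a direct calculation using $\rho_A = \1_A/d$ and unit vectors gives $\tr[\Pi^{(1+i)}(\rho_A \otimes \sigma^{(i)})] = \mu_i/d$. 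Summing yields fidelity $\tfrac{1}{d^2}(f + (d^2-1)/d) = \tfrac{1}{d} + \tfrac{f}{d^2} - \tfrac{1}{d^3}$, matching the upper bound.

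The main obstacle is justifying the product-state decomposition of $\1_{AB} - \Phi^+_d$. This is where the ``convex-roof extended entanglement measure'' alluded to in the introduction enters: after eliminating the preparations $\sigma^{(i)}$, the $\pbtq$ fidelity can be recast as $\F_{\pbtq}(\rho) = \tfrac{1}{d^2}\max_{0 \leq Q \leq \1}\{\tr[Q\rho] + \tfrac{1}{d}G(\1_{AB}-Q)\}$, where $G(M) := \sup \sum_i \|\tr_A E_i\|_\infty$ ranges over PSD decompositions $M = \sum_i E_i$ and acts as a concave roof of the largest Schmidt coefficient squared. Whenever $M$ is a positive multiple of a separable state, $G(M) = \tr M$ is saturated by any product decomposition, and the classical separability of $\rho_{d,0}$ \cite{Horodecki-1999b} supplies the required one, closing the argument.
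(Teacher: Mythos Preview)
Your proof is correct and substantially simpler than the paper's. Both arguments treat the separable regime $f\leq 1/d$ identically via \Cref{Cor:reduction}. For $f>1/d$, however, the paper takes a long detour: it first uses $U\otimes\overline U$ symmetry to parametrize $\Pi^{(1)}$ in isotropic form by scalars $r,s$, then rewrites the residual optimization over the remaining POVM elements as a convex-roof problem for the $(d-1)$-Ky-Fan entanglement measure $\hat{\mathsf E}_{d-1}$, computes $\hat{\mathsf E}_{d-1}(\rho_{d,t})$ in closed form (\Cref{lem:E-isotropic}), and finally performs a case-by-case optimization over $(r,s)$. Your approach shortcuts all of this: the upper bound falls directly out of \Cref{Prop:PBT-bounds} by evaluating the positive part of $\rho_{d,f}-\rho_A\otimes\1$, and the matching lower bound is exactly the explicit protocol the paper later abstracts into \Cref{Prop:fidelity-general}, specialized to $\ket{\varphi}=\ket{\Phi^+_d}$ and the known separability of $\1-\Phi^+_d$. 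What you lose is \Cref{lem:E-isotropic} itself, which the paper flags as potentially of independent interest; what you gain is a two-line argument that avoids the entanglement-measure machinery entirely. One minor quibble: the second coefficient $\tfrac{1-f}{d^2-1}-\tfrac{1}{d}$ is in fact negative for all $f\geq 0$ (not merely for $f>1/d$), so your ``precisely when'' is slightly off, though this does not affect the argument. Your closing paragraph is also more elaborate than necessary: the separability of $\1-\Phi^+_d$ is immediate from $\rho_{d,0}$ being separable, as you note at the very end, and does not require the concave-roof reformulation you sketch.
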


\begin{remark}
It follows from Theorems \ref{thm:iso-pbtcl} and \ref{thm:iso-fidelity-q} that there is a strict separation between $\F_{\pbts}(\rho_{d,f})$ and $\F_{\pbtq}(\rho_{d,f})$ for all $f>1/d$.
\end{remark}

The full proof of \Cref{thm:iso-fidelity-q} is given in \Cref{app:iso-q-optimization}. Here, we note that a key step in the argument is computing a convex-roof extended entanglement monotone evaluated on isotropic states. The underlying measure defined on a bipartite pure state $\ket{\phi}$ is the $(d-1)$-Ky-Fan entanglement measure, defined as
\begin{align} 
\mathsf{E}_{d-1}(\phi)=\sum_{k=2}^{d}\lambda_k^\downarrow(\phi),
\end{align}
where $\ket{\phi} = \sum_{k=1}^d \sqrt{\lambda_k^\downarrow(\phi)} \ket{\alpha_k}\ket{\beta_k}$ is the Schmidt decomposition of $\ket{\phi}$ with its Schmidt coefficients $\lambda_1^\downarrow(\phi)\geq \dots \geq \lambda_d^\downarrow(\phi)$ enumerated in non-increasing order \cite{Vidal-1999a}.  The convex-roof extension of this entanglement measure for mixed states is defined as
\begin{align}
\hat{\mathsf{E}}_{d-1}(\rho)=\min_{\sum_i p_i\psi_i=\rho}\sum\nolimits_i p_i\mathsf{E}_{d-1}(\psi_i), \label{eq:Ky-Fan-measure-mixed}
\end{align}
where the minimization is taken over all pure state ensembles $\{\ket{\psi_i}, p_i\}$ that generate $\rho$ \cite{Vidal-2000a}.  The entanglement of formation of $\rho_{d,f}$ (which is also a convex-roof extended entanglement measure) has been computed in Ref.~\cite{Terhal-2000a}, so there is hope that we can also compute $\hat{\mathsf{E}}_{d-1}(\rho_{d,f})$.
In fact, we perform such a computation in \Cref{app:E-isotropic} and obtain the following result.
\begin{lemma}
\label{lem:E-isotropic}
The entanglement measure $\hat{\mathsf{E}}_{d-1}$ defined in \eqref{eq:Ky-Fan-measure-mixed} assumes the following values on isotropic states $\rho_{d,f}$:
\begin{multline}
    \hat{\mathsf{E}}_{d-1}(\rho_{d,f})\\
    =\begin{cases}
        1 & \text{\upshape for }  f\leq \frac{1}{d}\\
        1-f+\frac{-1+2(f+\sqrt{(d-1)(1-f)f})}{d} &\text{\upshape for } f>\frac{1}{d}.
    \end{cases}
\end{multline}    
\end{lemma}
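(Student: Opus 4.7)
The plan is to adapt the Terhal--Vollbrecht technique~\cite{Terhal-2000a} for computing convex-roof entanglement measures on isotropic states. Two structural facts drive the argument: $\rho_{d,f}$ is invariant under the twirl $U\otimes\overline U$, and $\mathsf{E}_{d-1}$ is a function only of the Schmidt spectrum and hence is invariant under all local unitaries.

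I would begin by symmetrising. Given an arbitrary pure-state decomposition $\rho_{d,f}=\sum_i p_i\pure{\psi_i}$, replacing each $\ket{\psi_i}$ by its Haar orbit under $U\otimes \overline U$ preserves both the average state (by $U\otimes \overline U$-invariance of $\rho_{d,f}$) and the average of $\mathsf{E}_{d-1}$ (since Schmidt coefficients are unchanged by any local unitary). One may therefore restrict the infimum to ensembles of the form $\{(U\otimes \overline U)\ket{\phi}:\,U\sim dU\}$ for a single reference pure state $\ket{\phi}$. The only remaining constraint is that the twirl of $\pure{\phi}$ equals $\rho_{d,f}$, which is equivalent to $\bra{\phi}\Phi^+_d\ket{\phi}=f$. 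This reduces the problem to
\begin{align}
\hat{\mathsf{E}}_{d-1}(\rho_{d,f})=\mathrm{co}[\epsilon](f),\qquad \epsilon(f)\coloneqq\min_{\ket{\phi}:\,\bra{\phi}\Phi^+_d\ket{\phi}=f}\mathsf{E}_{d-1}(\ket{\phi}),
\end{align}
where $\mathrm{co}[\epsilon]$ denotes the convex envelope.

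To evaluate $\epsilon(f)$, I would parametrise pure states by their Schmidt coefficients $\mu=(\mu_1,\ldots,\mu_d)$, with $\sum_k \mu_k^2=1$ and $\mu_k\geq 0$. Using the standard identity that the maximum singlet fraction under local unitary alignment of a given Schmidt vector equals $\frac{1}{d}(\sum_k \mu_k)^2$, the fidelity constraint can be imposed as $\sum_k \mu_k=\sqrt{df}$ at the optimally aligned state. Since $\mathsf{E}_{d-1}$ depends monotonically on a single extremal Schmidt coefficient, a Schur-majorisation / Lagrange-multiplier argument shows that the optimal Schmidt vector has the ``flower'' form $\mu=(\mu_1,\mu,\mu,\ldots,\mu)$ with $\mu_1\geq \mu\geq 0$. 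Imposing both scalar constraints yields the quadratic $d\mu_1^2-2\sqrt{df}\,\mu_1+(df-d+1)=0$, whose relevant root on the branch $f\geq 1/d$ is
\begin{align}
\mu_1=\frac{\sqrt{df}+\sqrt{d(d-1)(1-f)}}{d}.
\end{align}
Substituting back into $\mathsf{E}_{d-1}$ and simplifying produces the stated closed-form expression on the $f>1/d$ branch. For $f\leq 1/d$ the state is separable, and one obtains the constant branch value directly from the action of $\mathsf{E}_{d-1}$ on a product-state decomposition of $\rho_{d,f}$.

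The main difficulty I anticipate is rigorously reducing the multi-variable extremal problem over Schmidt vectors to the two-parameter flower ansatz, since the Schmidt simplex admits many vectors satisfying the two scalar constraints with several distinct Schmidt coefficients. The natural resolution is a pairing argument that collapses any pair of equal-type coefficients onto their mean without increasing the objective, coupled with a verification that $\epsilon(f)$ is already convex on $(1/d,1]$ so that no nontrivial convex envelope is needed there. A secondary bookkeeping task is to confirm that the two branches paste together continuously at the boundary $f=1/d$.
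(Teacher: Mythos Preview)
Your approach is essentially the same Terhal--Vollbrecht reduction the paper uses: symmetrise by twirling, reduce to a single Schmidt vector with the constraint $\sum_k\sqrt{\mu_k}=\sqrt{df}$, collapse to the flower form $(\mu_1,\nu,\ldots,\nu)$, and solve the resulting quadratic. The paper arrives at the identical quadratic and root.

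One point worth noting: the ``main difficulty'' you anticipate---reducing a multi-term ensemble (or multi-parameter Schmidt vector) to a single flower vector and avoiding a separate convexity check on $\epsilon$---is handled in the paper in one stroke by the concavity of $f(\mathbf{x})=(\sum_k\sqrt{x_k})^2$. Applying Jensen's inequality once shows that any ensemble $\{\ket{\psi_i}=\sum_k\sqrt{\mu_{k,i}}\ket{kk}\}$ satisfying the fidelity constraint can be replaced by the single vector with $\mu_{k,0}=\sum_i\mu_{k,i}$ without decreasing the objective; applying it a second time (grouping the $d-1$ smallest coefficients) yields the flower form directly. This concavity argument simultaneously removes the need to verify a posteriori that $\epsilon$ is convex on $(1/d,1]$ and replaces your proposed pairing/Lagrange step with a two-line inequality. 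Your route is fine, but the paper's version is shorter and self-contained.
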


\subsection{Bipartite graph states}

\label{sec:bipartite_graph}

Multi-qubit graph states are an appealing family of quantum states to consider in PBT protocols given their general utility for quantum information processing \cite{Raussendorf_2001, Raussendorf_2003, Hein_2006} and their natural port structure.
    Here we specifically focus on bipartite graphs with $m+n$ nodes, $m$ of which belong to Alice and $n$ belong to Bob.
    Thus, we consider bipartite systems $A^mB^n$ with each of Alice and Bob's subsystems being a qubit.
    A brief overview of graph states is provided in \Cref{app:graph_states}.

Let $\cG_{m,n}$ denote the set of bipartite simple graphs with partition $(m,n)$.
This means that for any $\Gamma\in\cG_{m,n}$, its edges will only be connecting an Alice node to a Bob node.
    For example, the graph of $n$ independent EPR pairs shared between Alice and Bob belongs to $\cG_{n,n}$, but the fully connected GHZ graph state does not.
    For every $\Gamma\in\cG_{m,n}$, we let $\ket{\Gamma}$ denote the corresponding quantum graph state.
    Using the fact that $\Gamma_{B^n}\coloneqq\tr_{A^m}\op{\Gamma}{\Gamma}_{A^mB^n}$ is a normalized projection, we can write
\begin{align}
  \ket{\Gamma}_{A^mB^n} = \sqrt{2^{m+n}} \1 \otimes \Gamma_{B^n}\ket{\Phi^+_n}_{A^nB^n}.
\end{align}

From Proposition \ref{prop:fidelity-pure-state}, the PBT fidelities of $\ket{\Gamma}_{A^nB^n}$ depend only on $\Gamma_{B^n}$.
    Furthermore, two graph states $\ket{\Gamma}$ and $\ket{\Gamma'}$ will have the same reduced density for Bob iff there exists an isometry $U_{A^m}$ for Alice such that $\ket{\Gamma'}=U_{A^m}\otimes\1\ket{\Gamma}$.
    Additionally, the PBT fidelities are invariant under permutations on Bob's subsystems. In the following we propose a canonical form for any graph state based on these two unitary degrees of freedom.
    The canonical form will then be used to compute the PBT fidelities.

\subsubsection{PBT canonical form of graphs}

Our goal is to introduce a unique canonical graph $\wh{\Gamma} \in \cG_{r,n}$ for any graph $\Gamma \in \cG_{m,n}$ with $r\coloneqq\rk(\Gamma_{B^n})$ such that $\ket{\Gamma}\sim_{\Omega}\ket{\wh{\Gamma}}$ for $\Omega\in\{\pbt,\pbtcl,\pbtq\}$.
    While the form we propose is specific to bipartite graphs shared between Alice and Bob, its generalization to arbitrary graphs is straightforward.

Let $\Gamma \in \cG_{m+n}$ be a graph with vertex set $\{A_1,\dots,A_m,B_1,\dots,B_n\}$ and adjacency matrix $N_\Gamma \in \Mat(\bZ_2^{m+n})$ of the form
\begin{align}
  &N_{\Gamma} = \begin{pmatrix}
    N_{A^mA^m} & N_{A^mB^n}\\
    N_{A^mB^n}^{T} & N_{B^nB^n}
  \end{pmatrix}.
\end{align}  
It is not difficult to show that $r=\rk(\Gamma_{B^n})=\rk(N_{A^mB^n})$ \cite{Hein_2006}.
\begin{definition}
We say that a graph $\Gamma $ with vertices $\{A_1,\dots,A_r,B_1,\dots,B_n\}$ is in \emph{PBT canonical form} (\emph{PCF}) if 
\begin{enumerate}[(i)]
    \item There are no edges among the $A^r$; \label{it:no-edge-Alice}
    \item There are no isolated vertices among the $A^r$;
    \item $B_i$ is only connected to $A_i$ for $i \leq r$ \label{it:epr-pairs}.
\end{enumerate}
\end{definition}
\noindent In terms of the adjacency matrix, conditions (i) --- (iii) of a PCF graph $\Gamma$ can be equivalently expressed as
\begin{enumerate}
    \item[(i)] $\iff$ $N_{A^rA^r}=0$;
    \item[(ii)]  + (iii) $\iff$ Up to a permutation of columns, 
    \begin{equation}
    \label{Eq:adjacency-reduced-form}N_{A^rB^n}=\begin{pmatrix}I_r&S\end{pmatrix}_{r\times n},\end{equation}
\end{enumerate}
where $S$ is an $r\!\times\! (n-r)$ matrix with elements in $\mbb{Z}_2$ and $I_r$ is the $r\times r$ identity matrix.
    Note that it follows from the definition of the PCF graph that $\Gamma_{B^n}$ has rank $2^r$ for any PCF graph $\Gamma \in \cG_{r,n}$, and hence necessarily $r\leq n$.

We now show that every graph state defined in terms of a simple graph $G$ is unitarily equivalent to a graph state defined in terms of a PCF which is uniquely determined by $G$.
We give the proof of this result in the main text, since it illustrates the construction of the PCF for a given graph.

\begin{proposition}\label{P:min-G}
  Let $\Gamma$ be a simple graph with vertices $A^mB^n$.
  Then there exists a unique PCF graph $\wh{\Gamma}$ with vertices $A^rB^n$ and $r = \rk(\Gamma_{B^n})$, a  unitary $U_{A^m}$ on Alice's side, and a permutation of subsystems $U_\pi$ on Bob's side such that
  \begin{equation}
  \ket{\wh{\Gamma}}_{A^rB^n} \otimes \ket{\wh{+}}_{A^{m-r}} =U_{A^m}\otimes U_\pi\ket{\Gamma},
  \end{equation}
  where $\ket{\wh{+}}_{A^{m-r}}\coloneqq\ket{+}^{\otimes(m-r)}_{A^{m-r}}$.
    Hence, $\ket{\Gamma} \sim_{\Omega} \ket{\wh{\Gamma}}$ for $\Omega\in\{\pbt,\pbtcl,\pbtq\}$.
\end{proposition}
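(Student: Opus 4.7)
The plan is to reduce the construction to a linear-algebra problem over $\bF_2$ concerning the biadjacency matrix $M \coloneqq N_{A^m B^n}$ of $\Gamma$. Expanding Alice's side in the Hadamard basis and propagating the controlled-$Z$ gates gives
\begin{align}
\ket{\Gamma}_{A^m B^n} = \frac{1}{\sqrt{2^m}} \sum_{x \in \bF_2^m} \ket{x}_{A^m} \otimes Z^{M^T x}\ket{+}^{\otimes n}_{B^n},
\end{align}
from which the reduced state $\Gamma_{B^n} = \frac{1}{2^r} \sum_{y \in V} Z^y \op{+}{+}^{\otimes n} Z^y$ is seen to be determined entirely by the subspace $V \coloneqq \im(M^T) \subset \bF_2^n$, which has dimension $r = \rk(M) = \rk(\Gamma_{B^n})$.

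Next I would choose a permutation $\pi$ of the $n$ Bob-side coordinates such that the projection of $V$ onto the first $r$ coordinates is surjective; such a $\pi$ exists because any $r$-dimensional subspace of $\bF_2^n$ has a coordinate support of size $r$ on which the projection is an isomorphism (take $r$ linearly independent columns of any basis matrix for $V$). In the permuted coordinates, $V$ therefore admits a unique basis $\{e_i + s_i : 1 \leq i \leq r\}$ with $s_i$ supported on the last $n-r$ coordinates, and I define $\wh M \coloneqq [I_r \mid S]$ with rows $(e_i, s_i)$. The bipartite graph $\wh\Gamma \in \cG_{r,n}$ whose biadjacency matrix is $\wh M$ manifestly satisfies the three PCF conditions (i)--(iii), since $\wh \Gamma$ is bipartite, the first $r$ columns of $\wh M$ form $I_r$, and no row of $\wh M$ vanishes.

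The equivalence of quantum states then follows from Uhlmann's theorem. By construction, $\wh\Gamma_{B^n}$ coincides with $\Gamma_{B^n}$ after the Bob-side permutation by $U_\pi$. Embedding $A^r$ as a subsystem of $A^m$, both $\ket{\wh\Gamma}_{A^r B^n}\otimes\ket{\wh +}_{A^{m-r}}$ and $(\1_{A^m} \otimes U_\pi)\ket{\Gamma}$ are purifications of the same mixed state on $B^n$ on purifying systems of identical dimension, and are therefore related by a unitary $U_{A^m}$ on Alice's side. This yields the claimed identity, and the PBT-equivalence $\ket{\Gamma} \sim_\Omega \ket{\wh\Gamma}$ for $\Omega \in \{\pbt, \pbtcl, \pbtq\}$ follows from \Cref{prop:fidelity-pure-state}, which ensures that local isometries on Alice preserve PBT fidelities.

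For uniqueness, any PCF biadjacency matrix $[I_r \mid S']$ is already in reduced row echelon form with pivots in the first $r$ columns, so two such matrices with the same row space must coincide, forcing $\wh\Gamma$ to be uniquely determined once the permutation $\pi$ is fixed. The main subtlety to flag is that the combinatorial row reduction of $M$ does not itself lift to a Clifford operation that preserves the graph-state form on Alice's side --- such moves would in general introduce spurious Alice--Alice edges that violate condition (i) --- so the Alice-side unitary $U_{A^m}$ must be produced abstractly from Uhlmann's theorem rather than constructed as an explicit sequence of CZ or local-complementation moves.
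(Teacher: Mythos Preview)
Your approach matches the paper's: both identify $\Gamma_{B^n}$ as depending only on the row space of $N_{A^mB^n}$, put that row space into the form $[I_r\mid S]$ via reduced row echelon form together with a Bob-side column permutation, and then invoke the purification/Uhlmann argument to produce $U_{A^m}$, with PBT-equivalence following from \Cref{prop:fidelity-pure-state}. One minor caveat: your displayed expansion of $\ket{\Gamma}$ and your bipartite $\wh\Gamma\in\cG_{r,n}$ tacitly assume $\Gamma$ has no $A$--$A$ or $B$--$B$ edges; the paper's proof covers the general simple-graph case by first absorbing the $A$--$A$ CZ gates into $U_{A^m}$ and carrying the permuted $B$--$B$ block $\pi N_{B^nB^n}\pi^T$ into the adjacency matrix of $\wh\Gamma$, but this is a trivial adjustment to your argument.
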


\begin{proof}

Starting from the state $\ket{\Gamma}_{A^mB^n}$, Alice first disconnects any edges connecting under subsystems $A^m$ by a local unitary.
    The resulting graph $\Gamma'$ will have an adjacency matrix of the form
\begin{align}
  &N_{\Gamma'} = \begin{pmatrix}
    0 & N_{A^mB^n}\\
    N^{T}_{A^mB^n} & N_{B^nB^n}
  \end{pmatrix}.
  \end{align}

For every $\mbf{b}=b_1b_2\dots b_n\in\mbb{Z}_2^n$, we identify a subset of Bob's nodes $f(\mbf{b})\subset B^n$ such that $B_i\in f(\mbf{b})$ iff $b_i=1$.
    Using this representation, define a $\bZ_2$-linear subspace $\cN$ of $\bZ_2^n$ as
\begin{align}
  \cN(\Gamma) \coloneqq \Span_{\bZ_2} \{ \mbf{b}\;|\; f(\mbf{b})=N(A_j)\cap B^n,\; j=1,\dots, m\},
\end{align}
where $N(A_j)$ denotes the set of neighbors of the node $A_{j}$.
In other words, $\cN(\Gamma)$ is the row space of $N_{A^{m}B^{n}}$.
Then \Cref{L:reduced-graph} allows us to express Bob's reduced density matrix $\Gamma_{B^n}$ as
\begin{align}
  \Gamma_{B^n}=\Gamma'_{B^n} = \frac{1}{2^m}\sum_{\mbf{b} \in \cN(\Gamma)} Z_{f(\mbf{b})}\left( E \pure{\wh{+}}_{B^n} E\right) Z_{f(\mbf{b})},
\end{align}
where $E \coloneqq \prod_{e\in \cE_{B^n}} CZ_e$ with $\cE_{B^n}$ defined as the set of edges among the $B^n$ in $\Gamma$. 
 The key observation is that the summation depends only on the subspace $\mc{N}(\Gamma)$.
    We can therefore replace the $N(A_i)$ with any other basis for $\mc{N}(\Gamma)$.
In particular, the row space of the reduced row echelon form $R$ of $N_{A^mB^n}$ is $\mc{N}(\Gamma)$, up to a permutation of columns.
The form of $R$ is given by
\begin{equation}
    R = \begin{pmatrix}
        I_r & S \\ 0 & 0
    \end{pmatrix}_{m\times n},
\end{equation}
and a column permutation is equivalent to a relabeling of Bob's nodes $B_i\mapsto B_{\pi(i)}$, which is an allowed operation in our PBT framework.
Hence, with $\wh{R} = \begin{pmatrix} I_r & S \end{pmatrix}_{r\times n}$ there exists a graph $\wh{\Gamma}$ whose adjacency matrix is
\begin{align}
    N_{\wh{\Gamma}}=\begin{pmatrix}0&\wh{R}\\ \wh{R}^{T}&\pi N_{B^nB^n}\pi^{T}\end{pmatrix},
\end{align}
which is in PCF.
Since $\wh{\Gamma}_{B^n}=U_{\pi} \Gamma_{B^n} U_{\pi}^{\dagger}$, where $U_{\pi}$ applies the permutation $\pi$ to Bob's subsystems, there must exist some unitary $U_{A^m}$ on Alice's side such that $\ket{\wh{\Gamma}}_{A^r B^n} \otimes \ket{\wh{+}}_{A^{m-r}}=U_{A^m}\otimes U_\pi\ket{\Gamma}$.
By \Cref{prop:fidelity-pure-state}, this completes the proof.
\end{proof}

\begin{example}
  Consider a bipartite graph $\Gamma \in \cG_{3,3}$ and its adjacency graph $N_{A^3B^3}$
  \begin{align}
  \Gamma = \begin{tikzpicture}[baseline=-2em]
    \Vertex{A1}
    \Vertex[y=-0.5]{A2}
    \Vertex[y=-1]{A3}
    \Vertex[x=1]{B1}
    \Vertex[x=1,y=-0.5]{B2}
    \Vertex[x=1,y=-1]{B3}
    \Edge(A1)(B1)
    \Edge(A1)(B2)
    \Edge(A2)(B1)
    \Edge(A2)(B3)
    \Edge(A3)(B1)
    \Edge(A3)(B3)
    \end{tikzpicture} \q\q\q
      N_{A^3B^3} = \begin{pmatrix}
        1 & 1 & 0 \\
        1 & 0 & 1 \\
        1 & 0 & 1
      \end{pmatrix}
    \end{align}
    Its associated PCF graph $\wh{\Gamma}$ is then
    \begin{align}
    \wh{\Gamma} = \begin{tikzpicture}[baseline=-2em]
    \Vertex{A1}
    \Vertex[y=-0.5]{A2}
    \Vertex[x=1]{B1}
    \Vertex[x=1,y=-0.5]{B2}
    \Vertex[x=1,y=-1]{B3}
    \Edge(A1)(B1)
    \Edge(A1)(B3)
    \Edge(A2)(B2)
    \Edge(A2)(B3)
    \end{tikzpicture}\q\q\q
      \wh{N}_{A^2B^3} = \begin{pmatrix}
        1 & 0 & 1 \\
        0 & 1 & 1 \\
      \end{pmatrix}.
    \end{align}
\end{example}

\begin{remark}
In general, every PCF graph $\Gamma \in \cG_{r,n}$ consists of $k$-EPR pairs ($k$ isolated edges) for some $k \leq \rk(\Gamma_{B^n})$ along with some connected component having more than a single edge.
Denoting $\Gamma_{\EPR{k}}$ as the graph consisting of $k$ EPR pairs and $\Gamma_0$ as the remaining subgraph with multiple connected edges, the graph state $\Gamma$ is expressed as
\begin{equation}
\label{Eq:PCF-k-iso-form}
    \Gamma \equiv \Gamma_{\EPR{k}} \otimes \Gamma_0.
\end{equation}
Without loss of generality, we assume that $\Gamma_{\EPR{k}}$ is held on nodes $\{A_i,B_i\}_{i=1}^k$ while $\Gamma_0$ belongs to nodes $\{A_{k+1},\dots,A_m,B_{k+1},\dots,B_n\}$.

\end{remark}

\begin{example} \label{ex:graph-2}
  \begin{align}
  \Gamma = \q\begin{tikzpicture}[baseline=-2.3em]
    \Vertex{A1}
    \Vertex[y=-0.5]{A2}
    \Vertex[y=-1]{A3}
    \Vertex[y=-1.5]{A4}
    \Vertex[x=1]{B1}
    \Vertex[x=1,y=-0.5]{B2}
    \Vertex[x=1,y=-1]{B3}
    \Vertex[x=1,y=-1.5]{B4}
    \Edge(A1)(B1)
    \Edge(A2)(B2)
    \Edge(A2)(B3)
    \Edge(A2)(B4)
    \Edge(A3)(B3)
    \Edge(A3)(B4)
    \Edge(A4)(B3)
    \Edge(A4)(B4)
    \end{tikzpicture} 
    \q\q\q
      N_{A^4B^4} = \begin{pmatrix}
        1 & 0 & 0 & 0\\
        0 & 1 & 1 & 1\\
        0 & 0 & 1 & 1\\
        0 & 0 & 1 & 1
      \end{pmatrix}
    \end{align}
    Its associated PCF graph $\wh{\Gamma}$ decomposes as
    \begin{align}
    \wh{\Gamma} &= \q\begin{tikzpicture}[baseline=-2.3em]
    \Vertex{A1}
    \Vertex[y=-0.5]{A2}
    \Vertex[y=-1]{A3}
    \Vertex[x=1]{B1}
    \Vertex[x=1,y=-0.5]{B2}
    \Vertex[x=1,y=-1]{B3}
    \Vertex[x=1,y=-1.5]{B4}
    \Edge(A1)(B1)
    \Edge(A2)(B2)
    \Edge(A3)(B3)
    \Edge(A3)(B4)
    \end{tikzpicture}
    \q\q\q
      N_{A^3B^4} = \begin{pmatrix}
        1 & 0 & 0 & 0\\
        0 & 1 & 0 & 0\\
        0 & 0 & 1 & 1
      \end{pmatrix}\\
    &= (\Gamma_{\EPR{2}})_{A_1A_2B_1B_2} \otimes (\wh{\Gamma}_{0})_{A_3B_3B_4}
    \end{align}
\end{example}

\Cref{P:min-G} implies that it is sufficient to consider the PCF of a graph when computing its PBT fidelities, which is what we do in the next section. 
But before doing so, we characterize the marginal states of PCF graphs.

\begin{lemma}\label{L:reduced-pbt-ens}
  Let $\Gamma \in \cG_{r,n}$ be in PBT canonical form with $k$-EPR pairs.
  Then the reduced states $\Gamma^{(i)}_{A^r B}\coloneqq\tr_{B_i^c}\op{\Gamma}{\Gamma}$ are
  \begin{align}
     \Gamma^{(i)}_{A^rB}=\begin{cases}
      \ds \Psi^+_{A_iB} \otimes \wh{\1}_{A^r\setminus A_i} & i \leq k\\
       \\
      \ds   E^{(i)}(\wh{\1}_{A^r} \otimes \pure{+}_{B})E^{(i)} & i > k
    \end{cases}
  \end{align}
  where $\wh{\1}$ denotes the maximally mixed state (i.e. the normalized identity), $\ket{\Psi^+}\coloneqq\frac{1}{\sqrt{2}}(\ket{0+}+\ket{1-})$, and $E^{(i)}$ is the unitary operator
  \begin{equation}\label{eq:edge-op-text}
    E^{(i)} = \prod_{A_j\in N(B_i)}CZ_{A_jB}.
  \end{equation}
\end{lemma}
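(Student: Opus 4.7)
The plan is to exploit the PCF tensor-factorization $\Gamma \equiv \Gamma_{\EPR{k}} \otimes \Gamma_0$ from \eqref{Eq:PCF-k-iso-form}, which at the level of graph states reads $\ket{\Gamma} = \ket{\Psi^+}^{\otimes k}_{A_1 B_1 \dots A_k B_k} \otimes \ket{\Gamma_0}_{A_0 B_0}$, with $A_0=\{A_{k+1},\dots,A_r\}$ and $B_0=\{B_{k+1},\dots,B_n\}$. Because $\Gamma\in\cG_{r,n}$ is bipartite, $\Gamma_0$ has no Bob-Bob edges, and its Alice-Bob adjacency inherits the PCF shape $(I_{r-k}\mid S')$ of full row rank $r-k$. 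For $i \leq k$, tracing out $B_i^c$ decouples the EPR pairs from the core: each EPR pair $A_j B_j$ with $j \leq k$, $j \neq i$, contributes $\wh{\1}_{A_j}$ after $B_j$ is removed, while $A_i B_i$ stays as $\Psi^+_{A_i B}$. The entire Bob side $B_0$ of $\Gamma_0$ is simultaneously traced out, producing $\wh{\1}_{A_0}$ by the general identity that a bipartite graph state with full row-rank Alice-Bob adjacency has a maximally mixed Alice-marginal. Stacking these tensor factors yields $\Psi^+_{A_i B} \otimes \wh{\1}_{A^r \setminus A_i}$.

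For $i > k$, the key step is the controlled-$Z$ factorization $\ket{\Gamma_0} = E^{(i)} (\ket{\Gamma_0''} \otimes \ket{+}_{B_i})$, where $E^{(i)} = \prod_{A_j\in N(B_i)} CZ_{A_jB_i}$ collects all controlled-$Z$ gates of $\Gamma_0$ incident to $B_i$, and $\Gamma_0''$ is $\Gamma_0$ with $B_i$ and its incident edges deleted. Since $E^{(i)}$ acts only on $A_0 \cup \{B_i\}$, it commutes with the partial trace over $B_0 \setminus B_i$, which gives
\begin{equation*}
    \tr_{B_0\setminus B_i}\pure{\Gamma_0} = E^{(i)}\!\left[\left(\tr_{B_0\setminus B_i}\pure{\Gamma_0''}\right) \otimes \pure{+}_{B_i}\right]E^{(i)}.
\end{equation*}
Combining with the $\wh{\1}_{A_1\dots A_k}$ contributed by the traced-out EPR pairs and noting that $E^{(i)}$ acts trivially on $A_1,\dots,A_k$ yields the target $E^{(i)}(\wh{\1}_{A^r} \otimes \pure{+}_B)E^{(i)}$, once the inner partial trace is evaluated to $\wh{\1}_{A_0}$.

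The main obstacle is establishing two supporting facts. The first is the general marginal identity $\tr_B\pure{\Gamma} = \wh{\1}_A$ whenever the Alice-Bob adjacency $N_{AB}$ has full row rank $|A|$; this follows by expanding $\ket{\Gamma} = 2^{-|B|/2}\sum_{\mbf{b}\in\bZ_2^{|B|}} Z^{N\mbf{b}}\ket{+}^{A} \otimes \ket{\mbf{b}}_B$ and observing that surjectivity of $\mbf{b} \mapsto N\mbf{b}$ converts the resulting sum into a uniform $Z$-twirl on Alice, which produces $\wh{\1}_A$. The second is preserving full row rank of the Alice-Bob adjacency of $\Gamma_0''$ after deleting the $B_i$ column from $(I_{r-k}\mid S')$: if $k<i\leq r$, the $(i{-}k)$-th identity column is lost, but the $A_i$-row of $S'$ must be nonzero — otherwise $(A_i,B_i)$ would form an isolated EPR pair, contradicting $i>k$ — so this row serves as the new independent contribution replacing the lost pivot; if $i>r$, only a column of $S'$ is removed and $I_{r-k}$ remains intact. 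Together these two checks close both cases.
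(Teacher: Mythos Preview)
Your proof is correct and takes a somewhat different route from the paper's. The paper invokes \Cref{L:reduced-graph} to interpret each partial trace $\tr_{B_j}$ as a dephasing map on $N(B_j)\cap A^r$: for $i\leq k$ this immediately dephases $A^r\setminus A_i$, while for $k<i\leq r$ the paper observes that $A_i$ must be adjacent to some $B_l$ with $l>r$ (else $(A_i,B_i)$ would be an isolated EPR pair), and then argues that the collective dephasing $\Delta_{N(B_l)}$ reduces to an independent dephasing of $A_i$ once every other $A_j\in N(B_l)$ has already been dephased by tracing out its partner $B_j$. You instead factor out the $CZ$-gates incident to $B_i$ as $E^{(i)}$, reduce the problem to computing the Alice marginal of the residual bipartite graph $\Gamma_0''$, and certify that this marginal is maximally mixed via an explicit full-row-rank check on the adjacency matrix after column deletion. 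Your approach is more self-contained---it does not appeal to the reduced-graph-state lemma and instead proves the needed $Z$-twirl identity from scratch---and it makes the role of the reduced row-echelon structure $(I_{r-k}\mid S')$ completely transparent; the paper's dephasing picture is a bit more physical and would adapt more naturally to graphs with Bob--Bob edges, though that extra generality is not needed for the bipartite class $\cG_{r,n}$ considered here. One small point worth tightening in your write-up: the claim that the nonzero $A_i$-row of $S'$ ``serves as the new independent contribution'' deserves one sentence of justification---namely, that the remaining rows still carry distinct pivots in the surviving identity columns while row $i-k$ is zero there, forcing linear independence.
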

\noindent The proof is given in \Cref{app:graph_states_lemma-reduced}.
Note that $\Gamma^{(i)}_{A^rB}$ is separable with respect to partition $(A^{r}, B)$ whenever $i>k$.

\subsubsection{PBT ordering among bipartite graphs}
We now analyze the PBT teleportation fidelities for bipartite graph states.
    As noted above, it suffices to consider PCF graphs in $\mc{G}_{r,n}$.
    Throughout this section, we assume that Bob has a port structure $B^n$ with each system $B_i$ being a qubit.
Hence, the classical threshold for all teleportation channels considered is $\frac{1}{2}$.
Our main finding shows that the teleportation fidelity of any bipartite PCF graph expressed in the form of Eq.~\eqref{Eq:PCF-k-iso-form} as $\Gamma = \Gamma_{\EPR{k}}\, \otimes\,\Gamma_0$ depends essentially on the number $k$ of isolated EPR pairs.
\Cref{Prop:PBT-connected} first addresses the extreme case in which $\Gamma$ does not have any isolated EPR pairs, i.e. $\Gamma=\Gamma_0$; its proof is given in \Cref{Appendix-proof_of_graph_no-epr}.

\begin{proposition}
\label{Prop:PBT-connected}
    Suppose that $\Gamma \in \cG_{r,n}$ is a PCF bipartite graph with $r=\rk(\Gamma_{B^n})$ and no independent EPR pairs.
    Then $\F_{\pbtq}(\Gamma) = \frac{1}{2}$ and
  \begin{align}
    \F_{\pbt}(\Gamma) &=\begin{cases} \frac{1}{2} - \frac{1}{2^{r+2}}\quad \text{if $\exists B_j$ w. $|N(B_j)|$ being even}\\ \frac{1}{2} - \frac{1}{2^{r+1}}\quad\text{otherwise}
    \end{cases}
  \end{align}
  \end{proposition}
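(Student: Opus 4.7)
The plan is to apply \Cref{L:reduced-pbt-ens} with $k=0$ to obtain an explicit form for each marginal $\Gamma^{(i)}_{A^rB}$, then handle the two claims separately: the $\pbtq$ bound is immediate from the reduction criterion (\Cref{Cor:reduction}), while the $\pbt$ fidelity reduces to an SDP whose optimum is a parity count over $\mbb{Z}_2^r$.

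Since $\Gamma\in\cG_{r,n}$ has no edges inside $B^n$, the product $E^{(i)}=\prod_{A_j\in N(B_i)}CZ_{A_jB}$ simplifies to $E^{(i)}=\sum_{b\in\{0,1\}}\op{b}{b}_B\otimes P_i^{b}$, where $P_i\coloneqq\prod_{A_j\in N(B_i)}Z_{A_j}$. A direct conjugation then gives
\begin{equation}
\Gamma^{(i)}_{A^rB}\;=\;\tfrac{1}{2^{r+1}}\bigl(\1_{A^rB}+P_i\otimes X_B\bigr)\;=\;\tfrac{1}{2^{r}}\,\Pi^{+}_{i},
\end{equation}
with $\Pi^{+}_{i}$ the rank-$2^r$ projector onto the $+1$ eigenspace of $P_i\otimes X_B$. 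Because $\tr X_B=0$, the Alice-side marginal is always $\rho_A^{(i)}=\wh{\1}_{A^r}$, and $\rho_A^{(i)}\otimes\1_B-\Gamma^{(i)}_{A^rB}=\tfrac{1}{2^{r+1}}(\1-P_i\otimes X_B)\geq 0$ since $P_i\otimes X_B$ has eigenvalues $\pm 1$. \Cref{Cor:reduction} then yields $\F_{\pbtq}(\Gamma)=\tfrac{1}{2}$.

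For the $\pbt$ fidelity, \eqref{eq:F-PBT} reduces to $\F_{\pbt}(\Gamma)=\tfrac{1}{2^{r+2}}\max\sum_i\tr[\Pi^{+}_i\Pi^{(i)}]$ subject to $\sum_i\Pi^{(i)}=\1$. The key observation is that $[P_i\otimes X_B,\,P_j\otimes X_B]=0$ for all $i,j$ (the $P_i$ are all diagonal in the computational basis of $A^r$), so $\{\Pi^{+}_i\}$ is a mutually commuting family of projections and the SDP optimum is attained by a projective POVM diagonal in the joint eigenbasis $\{\ket{\mbf{x}}_{A^r}\otimes\ket{\pm}_{B}:\mbf{x}\in\mbb{Z}_2^r\}$. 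A basis vector $\ket{\mbf{x}}\ket{\pm}$ lies in the $+1$ eigenspace of $P_i\otimes X_B$ iff $\pm(-1)^{\mbf{a}_i\cdot\mbf{x}}=+1$, where $\mbf{a}_i\in\mbb{Z}_2^r$ is the indicator vector of $N(B_i)$. The maximum therefore equals the number of basis vectors $\ket{\mbf{x}}\ket{\pm}$ for which some index $i$ satisfies this equation.

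Using the PCF form $N_{A^rB^n}=(I_r\mid S)$, the first $r$ vectors $\mbf{a}_1,\dots,\mbf{a}_r$ are the standard basis vectors of $\mbb{Z}_2^r$, so both the $+$ and $-$ sectors are reached for every $\mbf{x}\notin\{\mbf{0},\mbf{1}\}$, contributing $2(2^r-2)$ basis vectors. The vector $\mbf{x}=\mbf{0}$ contributes only the $+$ sector, $\mbf{x}=\mbf{1}$ always contributes the $-$ sector, and the case-dependent term is whether $\mbf{x}=\mbf{1}$ additionally reaches the $+$ sector---this happens iff some $\mbf{a}_i$ has even weight, which (since $|N(B_j)|=1$ for $j\leq r$ in PCF) is equivalent to the existence of some $B_j$ with $|N(B_j)|$ even. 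The two cases give total counts $2^{r+1}-1$ and $2^{r+1}-2$; division by $2^{r+2}$ yields the claimed values. The main conceptual step is recognizing that commutativity of the $\Pi^{+}_i$'s collapses the SDP into this parity count, and the case split at $\mbf{x}=\mbf{1}$ is the only nontrivial combinatorial obstacle.
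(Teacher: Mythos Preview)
Your proof is correct and follows essentially the same approach as the paper's: both rewrite each $\Gamma^{(i)}$ as $\tfrac{1}{2^r}$ times a projector, use commutativity to reduce the $\pbt$ SDP to a classical maximum-likelihood count of the joint support, and handle $\F_{\pbtq}$ via the reduction criterion. Your stabilizer-style expression $\Gamma^{(i)}=\tfrac{1}{2^{r+1}}(\1+P_i\otimes X_B)$ and the resulting parity count over $\mbb{Z}_2^r$ are a slightly cleaner packaging of exactly the same support analysis the paper carries out in the $\{\ket{\mbf{x}}_{A^r}\ket{\pm}_B\}$ basis.
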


We now consider the general case of a general bipartite PCF graph with $k$-EPR pairs, i.e.
$\Gamma = \Gamma_{\EPR{k}} \otimes \Gamma_{0}$.
As shown in Proposition \ref{Prop:PBT-connected}, $\Gamma_0$ is useless in itself for teleportation.
    However, when combined with $\Gamma_{\EPR{k}}$ one might speculate whether it can add to the overall teleportation power of $\Gamma$.
The following theorem shows that it cannot add more than what is feasible by just giving Bob two local ancillary states.
Its proof can be found in \Cref{app:graph-std-bound}.

\begin{theorem}\label{T:graph-std-bound}
  Suppose $\Gamma \in \cG_{r,n}$ has PCF with $k$ total isolated EPR pairs; i.e. $\Gamma=\Gamma_{\EPR{k}}\otimes\Gamma_0$.
    Then
  \begin{align}
    \Gamma_{\EPR{k}} \preceq_{\Omega} \Gamma \preceq_{\Omega} \Gamma_{\EPR{k}} \otimes \pure{+}_{B_{k+1}} \otimes \pure{-}_{B_{k+2}}
  \end{align}
  for $\Omega\in\{\pbt,\pbtcl\}$. Furthermore, $\Gamma\sim_{\pbtq}\Gamma_{\EPR{k}}$.
\end{theorem}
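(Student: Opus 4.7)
The plan is to establish both inequalities through explicit POVM-level constructions that leverage the reduced-state structure of \Cref{L:reduced-pbt-ens}: for $\Gamma = \Gamma_{\EPR{k}}\otimes\Gamma_0$ in PCF, one has $\Gamma^{(i)}_{A^rB} = \Psi^+_{A_iB}\otimes\wh{\1}_{A^r\setminus A_i}$ for $i\leq k$, and $\Gamma^{(i)}_{A^rB} = \wh{\1}_{A^k}\otimes E^{(i)}(\wh{\1}_{A^{r-k}}\otimes\pure{+}_B)E^{(i)}$ for $i>k$; moreover the $A$-marginal satisfies $\Gamma_{A^r}=\wh{\1}_{A^r}$ since the PCF adjacency block $(I_r\,|\,S)$ has full row rank. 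The $\pbtq$-equivalence at the end will then follow from the upper bound because $\pure{+}\otimes\pure{-}$ is freely preparable by Bob in the $\pbtq$ model.

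The lower bound $\Gamma_{\EPR{k}}\preceq_\Omega\Gamma$ is immediate: lift any POVM $\{\Pi^{(i)}_{A^kC_0}\}_{i=1}^k$ attaining $\F_\Omega(\Gamma_{\EPR{k}})$ to $\{\Pi^{(i)}\otimes\1_{A^{r-k}}\}_{i=1}^k$ on $A^rC_0$, padded by $n-k$ zero operators. Because $\tr_{A^{r-k}}(\Psi^+_{A_iB}\otimes\wh{\1}_{A^r\setminus A_i}) = \Psi^+_{A_iB}\otimes\wh{\1}_{A^k\setminus A_i}$ agrees with the $i$-th reduced state of $\Gamma_{\EPR{k}}$, the resulting channel is identical; the same lift works verbatim for the $\pbtcl$ classical-prep outcomes and the $\pbtq$ quantum-prep outcomes.

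The upper bound is the crux. Given a POVM $\{\Pi^{(i)}\}_{i=1}^n$ for $\Gamma$ on $A^rB$, build $\{\Pi'^{(j)}\}_{j=1}^{k+2}$ on $A^kB$ for $\Gamma_{\EPR{k}}\otimes\pure{+}_{B_{k+1}}\otimes\pure{-}_{B_{k+2}}$ as follows. For $i\leq k$, take $\Pi'^{(i)} = \frac{1}{2^{r-k}}\tr_{A^{r-k}}[\Pi^{(i)}]$, which preserves the per-port contribution by the partial-trace identity above. For $i>k$, expand $\Gamma^{(i)}_{A^rB} = \frac{1}{2^r}\sum_{\mbf{x}\in\{0,1\}^{r-k}}\1_{A^k}\otimes\pure{\mbf{x}}_{A^{r-k}}\otimes\pure{h(\mbf{x};i)}_B$, where $h(\mbf{x};i)\in\{+,-\}$ records the $\mbb{F}_2$-parity of $\mbf{x}$ against the indicator vector of $N(B_i)\cap A^{r-k}$, and define
\begin{equation*}
\Pi'^{(k+1)} \coloneqq \sum_{i>k}\sum_{\mbf{x}:h(\mbf{x};i)=+}\tfrac{1}{2^{r-k}}\tr_{A^{r-k}}\!\bigl[\Pi^{(i)}(\1_{A^k}\otimes\pure{\mbf{x}}_{A^{r-k}}\otimes\1_B)\bigr],
\end{equation*}
with $\Pi'^{(k+2)}$ defined analogously for $h=-$. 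These elements are manifestly PSD; since $\sum_{\mbf{x}}\pure{\mbf{x}}=\1_{A^{r-k}}$, their pair sum collapses to $\sum_{i>k}\frac{1}{2^{r-k}}\tr_{A^{r-k}}[\Pi^{(i)}]$, and adding $\sum_{i\leq k}\Pi'^{(i)}$ telescopes the total to $\frac{1}{2^{r-k}}\tr_{A^{r-k}}[\1_{A^rB}]=\1_{A^kB}$, confirming that $\{\Pi'^{(j)}\}$ is a genuine POVM. A direct substitution then yields $\sum_{i>k}\tr[\Pi^{(i)}\Gamma^{(i)}] = \tr[\Pi'^{(k+1)}(\wh{\1}_{A^k}\otimes\pure{+})] + \tr[\Pi'^{(k+2)}(\wh{\1}_{A^k}\otimes\pure{-})]$, so the entanglement fidelity is preserved exactly. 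For $\pbtcl$, the two classical-prep outcomes of $\Gamma$ map via analogous partial traces, with contributions preserved by $\Gamma_{A^r}=\wh{\1}_{A^r}$. The same reduction applied to a $\pbtq$-protocol on $\Gamma$ produces a $\pbtq$-protocol on $\Gamma_{\EPR{k}}$ in which Bob prepares $\pure{+},\pure{-}$ alongside any source quantum-prep states, giving $\F_{\pbtq}(\Gamma)\leq\F_{\pbtq}(\Gamma_{\EPR{k}})$ and hence $\Gamma\sim_{\pbtq}\Gamma_{\EPR{k}}$.

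The main obstacle will be ensuring that the combined $\{\Pi'^{(j)}\}$ on $A^kB$ sums exactly to $\1_{A^kB}$: a naive attempt to merge the $i>k$ contributions into a single operator of the form $\1_{A^k}\otimes M^\pm_B$ breaks down because $\frac{1}{2^{r-k}}\tr_{A^{r-k}}[\Pi^{(i)}]$ is generally not proportional to $\1_{A^k}$. Retaining the $\mbf{x}$-conditioning in the definitions of $\Pi'^{(k+1)}$ and $\Pi'^{(k+2)}$ is exactly what makes the POVM sum telescope, and is the key structural observation enabling the tight bound.
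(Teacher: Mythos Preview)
Your proof is correct and takes a genuinely different route from the paper's. The paper argues at the level of the full resource state: it observes that completely dephasing Alice's $A^{r-k}$ registers leaves every port marginal $\Gamma^{(i)}_{A^rB}$ (and the Alice marginal $\Gamma_{A^r}$) unchanged, so $\F_\Omega(\Gamma)=\F_\Omega\bigl(\Gamma_{\EPR{k}}\otimes\Delta_{A^{r-k}}(\Gamma_0)\bigr)$. The dephased $\Gamma_0$ is a separable mixture of product states $\ket{\alpha_\lambda}_{A^{r-k}}\otimes\ket{\beta_\lambda}_{B^{n-k}}$ with $\ket{\beta_\lambda}\in\{\ket{+},\ket{-}\}^{\otimes(n-k)}$; the upper bound then follows from convexity of $\F_\Omega$, invariance of $\F_\Omega$ under local isometries on Alice's side (allowing her to discard $\ket{\alpha_\lambda}$, cf.~Eq.~\eqref{Eq:fidelity-monotone}), and the fact that duplicate ports in $\ket{\beta_\lambda}$ cannot help. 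Your argument instead works directly at the POVM level, compressing any $n$-outcome measurement on $A^rB$ to a $(k{+}2)$-outcome measurement on $A^kB$ via the partial-trace-with-$\mbf{x}$-conditioning map. The paper's route is shorter and relies only on generic structural properties (convexity, monotonicity on Alice's side), whereas yours is fully constructive and makes the exact term-by-term preservation of fidelity explicit; in particular, your observation that retaining the $\mbf{x}$-conditioning inside $\Pi'^{(k+1)},\Pi'^{(k+2)}$ is precisely what allows the compressed POVM to sum to identity is a pleasant structural point that the paper's more abstract argument hides.
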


\subsubsection{Comparing $\F_{\pbt}(\Gamma_{\EPR{k}})$ and $\F_{\pbtcl}(\Gamma_{\EPR{k}})$}

\Cref{T:graph-std-bound} expresses the performance of $\pbt$, $\pbtcl$, and $\pbtq$ for any PCF bipartite graph state in relation to $\Gamma_{\EPR{k}} = \Psi^{+\otimes k}$, where $k$ denotes the number of isolated EPR pairs contained as a sub-graph.
Therefore, it is of interest to compute the teleportation fidelities of $\Gamma_{\EPR{k}}$.
The precise computation of $\F_{\pbtcl}(\Gamma_{\EPR{k}})$ is significantly more challenging than that of $\F_{\pbt}(\Gamma_{\EPR{k}})$: we cannot effectively utilize Schur-Weyl duality since the full  $\cU(2)$ symmetry is broken by the classical states in $B_{k+1}$ and $B_{k+2}$.
For arbitrary $k$, we can numerically compute $\F_{\pbtcl}$ via semidefinite-programming (SDP) although the numerical computation is not feasible for bigger $k$.
We refer to \Cref{Appendix:pbtcl-symmetries} for further details regarding this SDP formulation.

However, for $k = 1$ we obtain $\F_{\pbtcl}(\Gamma_{1\text{-EPR}}) = \frac{7}{12}$ from a straightforward calculation (see \Cref{sec:A:1-epr-Fpbt}), which surpasses the classical threshold of $0.5$.
This value corresponds to the PBT fidelity of $(\Gamma_{1\text{-EPR}})_{AB_1} \otimes \pure{01}_{B_2B_2}$. Notably, the minimum number of qubit systems required for the PBT protocol to surpass the classical threshold is four, with one system on Alice's side and three on Bob's side.
Indeed, the optimal PBT fidelity of arbitrary states in the qubit systems $A^2B^2$ is numerically determined to be $0.5$ \cite{Mozrzymas_2018}, and hence the optimal fidelity across three systems cannot exceed $0.5$.

\begin{figure*}
\begin{tikzpicture}
\begin{axis}[
ytick scale label code/.code={},
ymax = 1,
symbolic x coords={1,2,3,4,5,6,7},
xtick=data,
height=8cm,
width=10cm,
grid=major,
xlabel={\# of EPR pairs},
ylabel={Entanglement fidelity},
legend style={
cells={anchor=west},
at = {(1.05,0)},
anchor = south west}
]
\addplot[color=cb_orange,densely dotted,very thick] coordinates {
(1,0.625000000) (2,0.716506351) (3,0.781250000) (4,0.826588894) (5,0.858547963) (6,0.881472412) (7,0.898314184)
};
\addplot[color=cb_magenta,very thick,mark=square*] coordinates {
  (1,0.583333333) (2,0.684812868) (3,0.757203797) (4,0.811572127) (5,0.849094392) (6,0.875953745) (7,0.895077946)
};
\addplot[color=cb_teal,densely dashed,very thick] coordinates {
(1,0.562500000) (2,0.674839684) (3,0.750000000) (4,0.807838894) (5,0.846829213) (6,0.874775983) (7,0.894407934)
};
\addplot[color=cb_blue,very thick,mark=*] coordinates {
  (1,0.25) (2,0.466506351) (3,0.625) (4,0.732838894) (5,0.803860463) (6,0.850222412) (7,0.880736059)
};
\addplot[color=cb_purple,dashdotted,very thick] coordinates {
(1,0.5) (2,0.5) (3,0.5) (4,0.5) (5,0.5) (6,0.5) (7,0.5)
};
\legend{$\F_{\pbt}$, $\F_{\pbtcl}$, $\F_{\text{lower}}$, $\F_{\text{upper}}$, $\F_{\text{cl}}$}
\legend{$\F_{\text{upper}}$,$\F_{\pbtcl}$, $\F_{\text{lower}}$,$\F_{\pbt}$,$\F_{\text{cl}}$}
\end{axis}
\end{tikzpicture}
     \caption{The teleportation fidelity of PBT protocols as a function of the number of EPR pairs used as a resource. 
     $\F_{\pbt}=\F_{\pbt}(\Gamma_{n\text{-EPR}})$ and $\F_{\pbtcl}=\F_{\pbtcl}(\Gamma_{n\text{-EPR}})$ are the entanglement fidelities defined in \eqref{eq:F-PBT} and \eqref{eq:F-PBTcl}, respectively.
     $\F_{\text{lower}}$ denotes the lower bound $F_{\pbt}(\rho,M)$ on $\F_{\pbtcl}$ using the POVM $M$ in \Cref{eq:povm-ansatz}, and $\F_{\text{upper}}$ is the upper bound $\F_{\pbt}(\Gamma_{n\text{-EPR}}) + \frac{n+2}{2^{n+2}}$ on $\F_{\pbtcl}$ from \Cref{Prop:n-EPR}.
     $\F_{\text{cl}}=1/2$ shows the classical threshold.}
     \label{fig:fidelity}
\end{figure*}
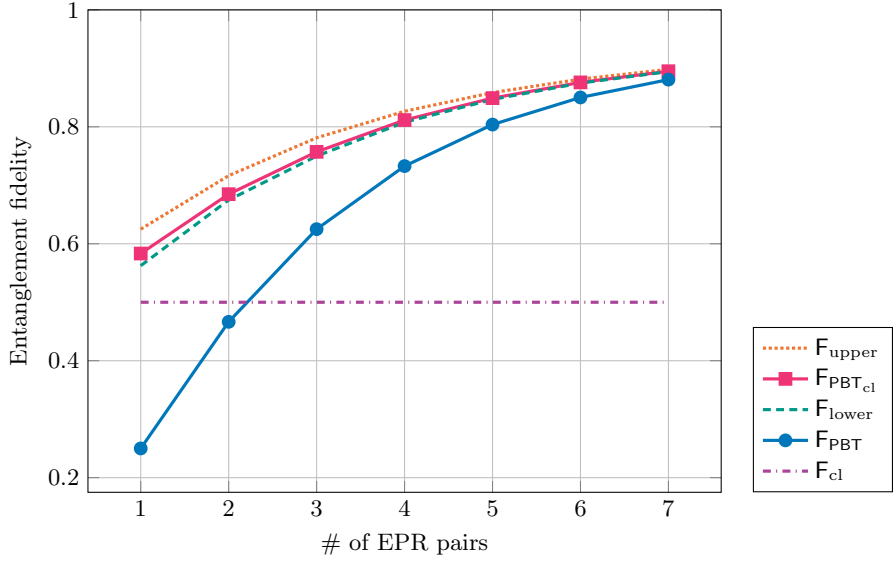

We now turn our attention to deriving bounds on $\F_{\pbtcl}(\Gamma_{\EPR{k}})$.
Let $\rho = \Gamma_{\EPR{k}}\otimes \pure{0}_{B_{k+1}} \otimes \pure{1}_{B_{k+2}}$, and recall from Eq.~\eqref{Eq:PBT*=PBT} that $\F_{\pbt}(\rho)=\F_{\pbtcl}(\Gamma_{\EPR{k}})$.
In other words, finding $\F_{\pbtcl}(\Gamma_{\EPR{k}})$ is equivalent to finding an optimal POVM for the state discrimination problem associated with the state ensemble $\{\rho^{(i)} \coloneqq \Tr_{B_i^c}(\rho)\}_{i=1}^{k+2}$.
Let $\{E^{(j)}\}_{j=1}^k$ be the \emph{pretty good measurement} (PGM) associated with $\{\rho^{(j)}\}_{j=1}^k$, i.e.
\begin{equation}
    E^{(j)} \coloneqq \bar{\rho}^{-\frac{1}{2}} \rho^{(j)} \bar{\rho}^{-\frac{1}{2}}, \q\q \bar{\rho} = \sum_{i=1}^k \rho^{(i)},
\end{equation}
where $\bar{\rho}^{-\frac{1}{2}}$ denotes the pseudo-inverse of $\bar{\rho}^{\frac{1}{2}}$.
Since $E^{(j)}$ has no support on $\ket{0^k}_{A^k}\otimes \ket{1}_B$ or $\ket{1^k}_{A^k} \otimes \ket{0}_B$ for any $j$, we can establish a trivial lower bound for $\F_{\pbt}(\rho)$ by considering the following POVM $\Pi = \{\Pi^{(i)}\}_{i=1}^{k+2}$:
\begin{equation}\label{eq:lower-bound}
\Pi^{(j)} \coloneqq \begin{cases}
    E^{(j)} & j \leq n \\
    \pure{1^k}_{A^k} \otimes \pure{0}_B & j = k+1\\
    \pure{0^k}_{A^k} \otimes \pure{1}_B & j = k+2.
\end{cases}
\end{equation}
Since $\{M^{(i)}\}_{i=1}^k$ is the optimal measurement for standard PBT, we achieve a separation between $\F_{\pbt}(\Gamma_{k\text{-EPR}})$ and $\F_{\pbtcl}(\Gamma_{k\text{-EPR}})$:
\begin{equation}
  \F_{\pbt}(\Gamma_{\EPR{k}}) + \frac{1}{2^{k+1}} \leq \F_{\pbtcl}(\Gamma_{\EPR{k}}).
\end{equation}
The term $\frac{1}{2^{k+1}}$ on the left-hand side of the inequality arises from the classical part $\frac{1}{2}\Tr[\rho^{k+1} M^{(k+1)}]=\frac{1}{2}\Tr[(\pure{1}^{\otimes n}_{A^k} \otimes \pure{0}_B)\rho^{k+1}]$.

The above discussion gives a lower bound on the fidelity $\F_{\pbtcl}(\Gamma_{\EPR{k}})$.
In \Cref{app:Fpbt-upper-bound} we prove an (asymptotically matching) upper bound 
\begin{align}
	\F_{\pbtcl}(\Gamma_{\EPR{k}})\leq \F_{\pbt}(\Gamma_{\EPR{k}}) + \frac{2+k}{2^{k+2}}.
\end{align}

We summarize these findings in the following proposition:
\begin{proposition}
\label{Prop:n-EPR}
\begin{align}
    \F_{\pbt}(\Gamma_{\EPR{k}}) + \frac{1}{2^{k+1}} &\leq \F_{\pbtcl}(\Gamma_{\EPR{k}}) \\
    &\leq \F_{\pbt}(\Gamma_{\EPR{k}}) + \frac{2+k}{2^{k+2}}
\end{align}
\end{proposition}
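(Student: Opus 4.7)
The lower bound follows directly from the sub-POVM constructed in the discussion preceding the proposition. Its first $k$ elements are the pretty good measurement $\{E^{(j)}\}_{j=1}^{k}$ for the quantum marginals $\{\rho^{(j)}\}_{j\leq k}$ of $\Gamma_{\EPR{k}}$, which attains $\F_{\pbt}(\Gamma_{\EPR{k}})$ on those outcomes. The two additional classical elements $\Pi^{(k+1)}=\pure{1^k}_{A^k}\otimes\pure{0}_B$ and $\Pi^{(k+2)}=\pure{0^k}_{A^k}\otimes\pure{1}_B$ each contribute $\tfrac{1}{|B|^2}\cdot\tfrac{1}{2^k}=\tfrac{1}{2^{k+2}}$ to the entanglement fidelity, for a combined bonus of $\tfrac{1}{2^{k+1}}$. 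Completing the sub-POVM to a valid POVM uses that $\ket{1^k}_{A^k}\otimes\ket{0}_B$ and $\ket{0^k}_{A^k}\otimes\ket{1}_B$ lie in the kernel of every $E^{(j)}$.

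For the upper bound, I would combine \Cref{Prop:PBT-bounds} with SDP duality. Since $\rho_{A^k}=\tfrac{1}{2^k}\1$ for $\Gamma_{\EPR{k}}$, Proposition~\ref{Prop:PBT-bounds} gives $\F_{\pbtcl}(\Gamma_{\EPR{k}})\leq \tfrac{1}{2}+\alpha$, and the dual of the SDP defining $\alpha$ reads
\begin{equation*}
4\alpha \;=\; \min_{Y\geq 0,\; Y\geq \rho^{(i)}-\frac{1}{2^k}\1\,\forall i\leq k}\Tr[Y].
\end{equation*}
Let $W_0$ be an optimal standard-PBT dual witness for $\Gamma_{\EPR{k}}$, so $W_0\geq \rho^{(i)}$ for $i\leq k$ and $\Tr[W_0]=4\F_{\pbt}(\Gamma_{\EPR{k}})$. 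Then $Y=(W_0-\tfrac{1}{2^k}\1)_+$ is feasible: $Y\geq 0$ by construction, and $Y\geq W_0-\tfrac{1}{2^k}\1\geq \rho^{(i)}-\tfrac{1}{2^k}\1$ for every $i\leq k$. Using the identity $\Tr[A_+]=\Tr[A]+\Tr[A_-]$ with $A=W_0-\tfrac{1}{2^k}\1$ yields
\begin{equation*}
\F_{\pbtcl}(\Gamma_{\EPR{k}}) \;\leq\; \F_{\pbt}(\Gamma_{\EPR{k}}) + \tfrac{1}{4}\Tr\bigl[(W_0-\tfrac{1}{2^k}\1)_-\bigr].
\end{equation*}

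The hardest step is then to establish the spectral bound $\Tr[(W_0-\tfrac{1}{2^k}\1)_-]\leq \tfrac{k+2}{2^k}$, i.e.~that the total deficit of eigenvalues of $W_0$ lying below $\tfrac{1}{2^k}$ is small. The naive trace bound $\leq 2$ would yield only the weak estimate $\F_{\pbtcl}\leq\F_{\pbt}+\tfrac{1}{2}$. To obtain the sharp bound I would exploit the $U(2)^{\otimes k}\otimes\bar U$ symmetry of the standard-PBT ensemble for $\Gamma_{\EPR{k}}$, which forces $W_0$ to act as a scalar on each isotypic component of the associated Schur--Weyl decomposition. Using the representation-theoretic machinery developed for standard PBT by Studzi\'{n}ski~\emph{et al.} and Leditzky~\emph{et al.}, the block eigenvalues of $W_0$ can be written in closed form, and the deficit $\sum_{\lambda}(\tfrac{1}{2^k}-\lambda)_+$ can be evaluated explicitly. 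I expect this combinatorial accounting to produce exactly the prefactor $k+2$; verifying this tight identity cleanly is the main technical obstacle.
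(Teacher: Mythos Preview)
Your approach is correct and essentially coincides with the paper's. The lower-bound argument is identical; for the upper bound the paper also starts from \Cref{Prop:PBT-bounds}, passes to the dual of the $\alpha$-SDP, and uses the feasible point $Y'=\sum_{i\leq k}(\rho^{(i)}-\tfrac{1}{2^k}\1)M^{(i)}=W_0-\tfrac{1}{2^k}\wt{M}$, which is literally your $(W_0-\tfrac{1}{2^k}\1)_+$ once one notes that $\wt{M}=\sum_i M^{(i)}$ is the support projection of $W_0$ and that every nonzero Schur--Weyl block eigenvalue of $W_0$ satisfies $c_{\mu,\alpha}\geq \tfrac{m_\mu}{2^k m_\alpha}>\tfrac{1}{2^k}$. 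Your ``main technical obstacle'' thus reduces to the observation that the deficit of $W_0-\tfrac{1}{2^k}\1$ lives entirely in $\ker W_0$, together with the rank identity $\dim\ker W_0 = 2^{k+1}-\Tr[\wt{M}]=k+2$, which the paper proves as a separate combinatorial lemma.
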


We further present a tighter lower bound given by the POVM $M = \{E^{(i)}\}_{i=1}^k \cup \{M^{(k+1)}, X^{\otimes k+1}M^{(k+1)} X^{\otimes k+1}\}$, where $M^{(k+1)}$ is defined as
\begin{equation}\label{eq:povm-ansatz}
    M^{(k+1)} \coloneqq \begin{cases} \ds
        \sum_{i=0}^{\floor{k/2}}P_i(\1-\wt{M})P_i + \frac{1}{2} P_{\ceil{k/2}}(\1-\wt{M})P_{\ceil{k/2}} \\+ \pure{1}^{\otimes k}_{A^k} \otimes \pure{0}_B \hfill (k \text{ is odd})\\
        \ds\sum_{i=0}^{k/2}P_i(\1-\wt{M})P_i + \pure{1}^{\otimes k}_{A^k} \otimes \pure{0}_B \\ \hfill(k \text{ is even}).
    \end{cases}
\end{equation}
Here, $\wt{M} = \sum_{i=1}^k E^{(i)}$ and $P_i = \pure{\xi_i}$ is a projection where
\begin{align}
    \ket{\xi_i} &\coloneqq \frac{1}{\sqrt{|\cS_i| + |\cS_{i-1}|}}\bigg(\sum_{s\in \cS_{i}} \ket{s\,0}_{A^kB} - \sum_{t \in \cS_{i-1}} \ket{t\,1}_{A^kB}\bigg)\\
    \cS_{i} &\coloneqq \{ s \in \{0,1\}^{k} : |s| = i \},
\end{align}
with $|s|$ denoting the Hamming distance of the string $s$.
We can verify that $M$ is a valid POVM, and the numerical value of this lower bound is illustrated in Fig. \ref{fig:fidelity}.

\subsection{Symmetrized EPR states}

\label{sec:symmetrized_EPR}

In certain applications, it may be desirable to have permutation symmetry among the different port states.
A permutation invariant PBT state $\rho_{AB^n}$ is one such that $\rho^{(i)}_{AB}=\tr_{B_i^c}\rho_{AB^n}$ is the same reduced state for all $i$.
Utilizing these states ensures that the output of an asynchronous multi-port computation remains consistent, irrespective of the selected port.
Another useful symmetry is \emph{simultaneous unitary covariance} \cite{Christandl_2018}, which is crucial in applications such as universal programmable quantum processor \cite{Ishizaka_2008}, unitary gate estimation \cite{yoshida2024}, and port-based state preparation \cite{muguruza2024}.

Given an arbitrary state $\rho_{B^n}$ for Bob's system, we can construct a PBT channel that possesses both of these symmetries.
The resource state $\ket{\wt{\rho}}_{A^nB^n}$ we construct involves Alice's system also having a port structure $A^n$ with $|A_i|=|B_i|$, like it does when she and Bob share $n$ EPR pairs.
The method for building $\ket{\rho}_{A^nB^n}$ begins by symmetrizing $\rho_{B^n}$ over system permutations and local copies of unitary operations
\begin{equation} \label{eq:symmetrization}
\rho_{B^n}\mapsto \wt{\rho}_{B^n}=\frac{1}{|S_n|}\sum_{\pi\in S_n}\int_{U(|B|)} U^{\otimes n}\pi\rho_{B^n} \pi^\dagger U^{\dagger \otimes n} \; dU.
\end{equation}
We then define $\ket{\wt{\rho}}_{A^nB^n}$ to be the canonical purification of $\wt{\rho}_{B^n}$,
\begin{equation}\label{eq:sym-pure}
    \ket{\wt{\rho}}_{A^nB^n} = \sqrt{|B|^n}(\mathbb{I}_{A^n} \otimes \wt{\rho}_{B^n}^{\frac{1}{2}})\ket{\Phi_{|B|}^+}^{\otimes n}.
\end{equation}
We emphasize that this is just a mathematical recipe for generating symmetric PBT states, and in general it does not correspond to a physical protocol that Alice and Bob can perform to build $\ket{\wt{\rho}}_{A^nB^n}$ from some initial $\rho_{A'B^n}$. 
By construction, the state $\ket{\wt{\rho}}_{A^nB^n}$ and its reduced state ensembles $\{\rho^{(i)} \equiv \Tr_{B_i^c} \op{\wt{\rho}}{\wt{\rho}}_{A^nB^n}\}_{i=1}^n$ will satisfy the following properties:
\begin{enumerate}[i)]
    \item $\ds \pi_{A^n}\rho_{A^nB}^{(i)} \pi_{A^n}^\dagger = \rho_{A^nB}^{(\pi(i))} \q\forall \pi \in S_n$
    \item $ \ds (U^{\otimes n}_{A^n} \otimes \overline{U}_B) \rho_{A^n B}^{(i)} (U^{\dagger \otimes n}_{A^n} \otimes U^T_{B})k = \rho_{A^nB}^{(i)}\q \forall U\in U(|B|)$
\end{enumerate}
Any state $\rho_{A^nB^n}$ satisfying these conditions are called \emph{symmetric PBT states}.

Symmetric PBT states enable the use of representation-theoretic tools to compute exact fidelity values.
    For example, consider any density matrix $\omega_{A^nB^n}$ that is invariant under the action of $U^{\otimes n}_{A^n}\otimes \overline{U}^{\otimes n}_{B^n}$.
    States of this form constitute the class of so-called Werner states.
    Assume further that $\omega_{A^n B^n}$ is invariant under arbitrary bilateral permutations $A_iB_i\mapsto A_{\pi(i)}B_{\pi(i)}$.
    Then the reduced state $\omega_{B^n}$ will be invariant under the twirling map of Eq.~\eqref{eq:symmetrization}, and let $\ket{\omega}_{A^nB^n}$ denote the canonical purification of $\omega_{B^n}$.
It is known that the PGM associated with the ensemble $\{\Phi^+_{A_iB} \otimes \wh{\1}_{A_i^c}\}_{i=1}^n$ yields the optimal $\F_{\pbt}(\pure{\omega})$.
Moreover, a formula for the fidelity is given by
\begin{equation} \label{eq:f-sym-pure}
    \F_{\pbt}(\pure{\omega}) = \frac{1}{4} \sum_{\alpha \vdash_2 n-1} (\sum_{\mu = \alpha + \square} \sqrt{c_{\mu} m_\mu d_\mu})^2,
\end{equation}
where $c_\mu$'s are the non-negative coefficients in the Schur-Weyl decomposition (see \Cref{Appendix:schur-weyl}) of $\omega_{B^n}$ \cite{Leditzky_2020}:
\begin{equation}
    \omega_{B^n} = \bigoplus_{\mu \vdash_d n} c_\mu \1_{V_{d,\mu}} \otimes \1_{W_\mu}.
\end{equation}
 
Our goal now is to perform a similar analysis and compute $\F_{\pbt}$ of the pure state $\ket{\wt{\Sigma}^{k,n}}_{A^nB^n}$ obtained by symmetrizing the state 
\begin{align}
   \Sigma^{k,n}_{B^n}=\wh{\1}_{B^k}\otimes\op{\wh{+}}{\wh{+}}_{B_{k+1}\dots B_n}. \label{eq:sigma-k-n}
\end{align}
We can interpret $\Sigma^{k,n}_{B^n}$ as arising from $k$ EPR pairs shared between Alice and Bob plus $(n-k)$ additional ports on Bob's side each being in the pure state $\ket{+}$.
    The PBT state $\ket{\wt{\Sigma}^{k,n}}_{A^nB^n}$ that we consider is the canonical purification of $\wt{\Sigma}^{k,n}_{B^n}$, with the latter being obtained from $\Sigma^{k,n}_{B^n}$ by applying the twirling map of Eq.~\eqref{eq:symmetrization}.

We first consider the simple case $k = 1$ to illustrate the proof idea, and then state a more general result in \Cref{P:k-epr-sym} below.
Schur-Weyl duality (see \Cref{Appendix:schur-weyl}) states that the symmetrization of the state in \eqref{eq:sigma-k-n} yields a twirled state $\wt{\Sigma}^{1,n}_{B^n}$ with the following block-diagonal form:
\begin{equation} 
\wt{\Sigma}^{1,n}_{B^n}
= \bigoplus_{\lambda \vdash_2 n} c_{\lambda} \1_{V_{2,\lambda}} \otimes \1_{W_{\lambda}}.
\end{equation}
The coefficients $c_\lambda$ can be computed via
\begin{equation}
c_\lambda = \frac{\Tr[\wt{\Sigma}^{1,n}_{B^n} P_\lambda]}{m_\lambda d_\lambda},
\end{equation}
where $P_\lambda$ is the isotypical projection (defined in \cref{eq:young-symmetrizer}) for the $S_n$-representation $W_\lambda$, and $m_\lambda$ and $d_\lambda$ are the dimensions of $V_{2,\lambda}$ and $W_\lambda$ respectively.
Then $x_\lambda \coloneqq c_\lambda m_\lambda d_\lambda$ becomes
\begin{align}
  x_\lambda &= \Tr[\wt{\Sigma}^{1,n}_{B^n} P_\lambda]\\
  &= \Tr \left[ \frac{1}{2} \1_{B_1} \otimes \pure{+}^{\otimes n-1}_{B^{n-1}} P_\lambda \right]\\
  &= \frac{1}{2} \bra{+}^{\otimes n-1} \Tr_{B_1}(P_\lambda) \ket{+}^{\otimes n-1}\\
  &= \frac{1}{2} \sum_{\alpha \in \lambda - \square} \frac{m_{\lambda}}{m_{\alpha}} \bra{+}^{\otimes n-1} P_\alpha \ket{+}^{\otimes n-1}.
\end{align}
For the last equality, we used the following formula for calculating the partial trace $\Tr_{B_1} P_\lambda$ \cite{Leditzky_2020}
\begin{equation}\label{eq:sym-partial-tr}
  \Tr_{B_1} P_\lambda = m_\lambda \sum_{\alpha \in \lambda - \square} \frac{1}{m_\alpha} \cdot P_\alpha,
\end{equation}
where $\alpha \in \lambda-\square$ denotes that the summation is over all Young diagrams $\alpha$ obtained by removing a box from $\lambda$.
Since $P_{\alpha}$ for any $\alpha = (\alpha_1,\alpha_2)$ with $\alpha_2 > 0$ involves an antisymmetrization over at least two systems, we have 
\begin{align}
    \bra{+}^{\otimes n-1} P_\alpha \ket{+}^{\otimes n-1} = \begin{cases} 1 & \text{if } \alpha = (n-1,0);\\ 0 & \text{otherwise,} \end{cases}.
\end{align}
The only possible $\lambda = (\lambda_1,\lambda_2) \vdash_2 n$ with $\alpha = \lambda-\square$ and $x_\lambda\neq 0$ are $(n,0)$ and $(n-1,1)$.
In these cases, by the dimension formula \eqref{eq:dim-irreps}
\begin{align} 
x_\lambda = \frac{m_\lambda}{2\cdot m_{(n-1,0)}} = \frac{\lambda_1-\lambda_2+1}{2n}.
\label{eq:x-lambda-expression}
\end{align}
When $n = 2$, $\alpha = (1,0)$ is the only possible Young diagram and we have
\begin{align}
  &\F_{\pbt}\left(\wt{\Sigma}^{1,2}_{A^2B^2}\right)
  = \frac{1}{4} \sum_{\alpha \vdash_2 1} \Bigg(\sum_{\lambda \in \alpha + \square} \sqrt{x_\lambda} \Bigg)^2\\
  &=  \frac{1}{4} \left(\sqrt{x_{(2,0)}} + \sqrt{x_{(1,1)}} \right)^2\\
  &= \frac{1}{8}(2+\sqrt{3}),
\end{align}
whose value is smaller than the classical threshold of $\frac{1}{2}$.
For $n>2$,
\begin{align}
  &\F_{\pbt}\left(\wt{\Sigma}^{1,n}_{A^nB^n}\right)
  = \frac{1}{4} \sum_{\alpha \vdash_2 n-1} \Bigg(\sum_{\lambda \in \alpha + \square} \sqrt{x_\lambda}\Bigg)^2\\
  &= \frac{1}{4} \Bigg( \sum_{\lambda \in (n-1,0)+\square} \sqrt{x_\lambda} \Bigg)^2 + \frac{1}{4} \Bigg( \sum_{\lambda \in (n-2,1)+\square} \sqrt{x_\lambda}\Bigg)^2\\
  &= \frac{1}{4} \left(\left(\sqrt{x_{(n,0)}} + \sqrt{x_{(n-1,1)}} \right)^2 + x_{(n-1,1)} \right)\\
  &= \frac{1}{4} \left( \left(\sqrt{\frac{n+1}{2n}} + \sqrt{\frac{n-1}{2n}}\right)^2 + \frac{n-1}{2n}\right)\\
  &= \frac{3n+2\sqrt{n^2-1}-1}{8n}
\end{align}

We can generalize this result to arbitrary $k\leq n$, as stated in the following proposition and proved in \Cref{sec:A-lemmas}.

\begin{proposition}\label{P:k-epr-sym}
Let $k, n$ be positive integers with $k\leq n$. Then
\begin{align}\label{eq:sym-fidelity-app}
\begin{aligned}
  F_{\pbt}(\wt{\Sigma}^{k,n})
  &= (2^{k+2}(n-k+1))^{-1}\Bigg[(n-2k+1) \delta_{n-1 \geq 2k}\\
  &+ \sum_{i=1}^{k} \delta_{n+1\geq 2i} \Big(\sqrt{ (n-2i+3) \binom{k}{i-1}}\\
  &+ \sqrt{(n-2i+1) \cdot \binom{k}{i}} \Big)^2 \Bigg]
\end{aligned}
\end{align}
where $\delta$ is the indicator function.
Moreover, for fixed a $k$ we have
\begin{multline}\label{eq:sym-fixed-k}
  \lim_{n\to \infty} \F_{\pbt}(\Sigma^{k,n})\\ = 2^{-k-2}\Bigg(1 + \sum_{i=1}^{k} \Bigg(\sqrt{\binom{k}{i-1}} + \sqrt{\binom{k}{i}} \Bigg)^2\Bigg).
\end{multline}
\end{proposition}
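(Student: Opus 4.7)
The approach is to apply the symmetric-state fidelity formula \eqref{eq:f-sym-pure}, which reduces the problem to computing the Schur--Weyl coefficients $c_\lambda$ appearing in the decomposition $\wt{\Sigma}^{k,n}_{B^n} = \bigoplus_{\lambda\vdash_2 n} c_\lambda \1_{V_{2,\lambda}} \otimes \1_{W_\lambda}$. First I would note that because the isotypical projector $P_\lambda$ is invariant under both the $U^{\otimes n}$ action and the $S_n$ action, the twirl drops out under the trace, giving
\begin{equation*}
  x_\lambda \coloneqq c_\lambda m_\lambda d_\lambda = \Tr[\wt{\Sigma}^{k,n}_{B^n} P_\lambda] = \frac{1}{2^k}\bra{\wh{+}}_{B^{n-k}} \Tr_{B^k} P_\lambda \ket{\wh{+}}_{B^{n-k}}.
\end{equation*}
This is the direct generalization of the $k=1$ step worked out explicitly in the main text.

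Next I would iterate the single-system partial-trace identity \eqref{eq:sym-partial-tr} to obtain
\begin{equation*}
  \Tr_{B^k} P_\lambda = m_\lambda \sum_{\beta \vdash_2 n-k} \frac{N^{(k)}(\lambda\to\beta)}{m_\beta} P_\beta,
\end{equation*}
where $N^{(k)}(\lambda\to\beta)$ is the number of paths from $\lambda$ to $\beta$ in Young's lattice obtained by removing one corner box at each step, equivalently the number of standard Young tableaux of the skew shape $\lambda/\beta$. Since $\ket{\wh{+}}_{B^{n-k}}$ is fully symmetric and hence lies entirely in the one-row isotypical block, $\bra{\wh{+}} P_\beta \ket{\wh{+}}$ equals $1$ precisely when $\beta = (n-k,0)$ and vanishes otherwise. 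Using $m_{(n-k,0)} = n-k+1$, this collapses the above sum to $x_\lambda = m_\lambda\,N^{(k)}(\lambda\to(n-k,0))/(2^k(n-k+1))$.

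Then I would specialize to the two candidates $\mu_1 = (n-i+1, i-1)$ and $\mu_2 = (n-i, i)$ that arise as $\alpha + \square$ for the two-row $\alpha = (n-i, i-1) \vdash n-1$. For such $\mu$, the skew shape $\mu/(n-k,0)$ has its row-$1$ and row-$2$ cells in disjoint column ranges, so the path count factors into a simple choice of which of the $k$ removal steps target row $2$, yielding the binomial coefficient $\binom{k}{\mu_2}$. This gives $N^{(k)}(\mu_1) = \binom{k}{i-1}$ with $m_{\mu_1} = n-2i+3$, and $N^{(k)}(\mu_2) = \binom{k}{i}$ with $m_{\mu_2} = n-2i+1$. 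Substituting into $\F_{\pbt}(\pure{\wt{\Sigma}^{k,n}}) = \frac{1}{4}\sum_\alpha (\sqrt{x_{\mu_1}}+\sqrt{x_{\mu_2}})^2$, with $\alpha = (n-i,i-1)$ ranging over valid two-row diagrams (encoded by $\delta_{n+1\geq 2i}$), produces the bracketed sum in \eqref{eq:sym-fidelity-app}. The additional $(n-2k+1)\delta_{n-1\geq 2k}$ term corresponds to the boundary partition $\alpha = (n-k-1,k)$, for which $\mu_2 = (n-k-1, k+1)$ is unreachable from $(n-k,0)$ (its second row is too long) so only $\mu_1 = (n-k,k)$ contributes, entering as the lone $x_\mu$ rather than a squared sum. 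The asymptotic formula \eqref{eq:sym-fixed-k} then follows by pulling $1/(n-k+1)$ through and using $(n-2i+3)/(n-k+1),\,(n-2i+1)/(n-k+1) \to 1$ as $n \to \infty$ with $k,i$ fixed.

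The main obstacle will be the combinatorial bookkeeping for $N^{(k)}(\mu\to(n-k,0))$ and its reduction to a binomial coefficient, together with the correct accounting of the degenerate corner cases at $i = \lceil(n+1)/2\rceil$ and $\alpha_1 = \alpha_2$, where one of $\mu_1,\mu_2$ fails to be a valid Young diagram; the vanishing factors $n-2i+1 = 0$ and $\binom{k}{i} = 0$ are precisely what encode when these contributions drop out of the formula.
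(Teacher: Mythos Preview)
Your proposal is correct and follows essentially the same route as the paper's proof: iterate the partial-trace identity \eqref{eq:sym-partial-tr} to reduce $x_\lambda$ to a Bratteli path count from $(n-k,0)$ to $\lambda$, observe that only the one-row $\beta=(n-k,0)$ survives against $\ket{\wh{+}}^{\otimes n-k}$, identify the path count with $\binom{k}{\lambda_2}$, and substitute into \eqref{eq:f-sym-pure}. The only cosmetic difference is that you justify the binomial via the skew-shape SYT interpretation (disjoint column ranges), whereas the paper argues via the two-term recursion $t_{(n-k)\to(n-i,i)}=t_{(n-k)\to(n-i-1,i)}+t_{(n-k)\to(n-i,i-1)}$; these are equivalent.
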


Specifically when $k = 1$ and $n = 3$, we observe $F_{\pbt}(\ket{\wt{\Sigma}^{1,3}}) \approx 0.5690$, which exceeds the classical bound.
From the fact that the optimal PBT fidelity with any resource state on $A^2B^2$ is numerically determined to be 0.5 \cite{Mozrzymas_2018}, we need at least three ports on Bob's side to witness quantum advantage.
However, as $n$ grows the asymptotic limit reaches
\begin{align}
  \lim_{n\to \infty} \F_{\pbt}(\wt{\Sigma}^{1,n}) = \frac{1 + (1 + 1)^2}{8} = \frac{5}{8} = \F_{\pbt}(\Phi^{+\otimes 3}),
\end{align}
which means that distributing one EPR pair of entanglement over however many ports cannot perform better than having three EPR pairs of entanglement.

\subsection{General states}

\label{sec:general_fidelity}

A close examination of the optimal measurement for Alice found in \Cref{sec:isotropic-q} for $\F_{\pbtq}(\rho_{d,f})$ shows that it involves a projection onto the maximally entangled state for $\Pi^{(1)}$, and then a choice of $\Pi^{(i)}$ and $\sigma^{(i)}$ (for $i>1$) such that $\sum_i\Pi^{(i)}_{AB}=\1\otimes\1-\Phi^+_d$ and $\sum_{i=1}^d\tr[\Pi^{(1+i)}_{AB}(\1\otimes\sigma^{(i)})]=d^2-1$. The reason why the latter equality can be achieved is because $\1\otimes\1-\Phi^+_d$ is separable.
In fact, it is known that $\1\otimes\1-\op{\varphi}{\varphi}$ is separable for \textit{any} bipartite state $\ket{\varphi}$ \cite{Bandyopadhyay-2004a}.
We should therefore expect that part of Theorem \ref{thm:iso-fidelity-q} can be generalized.
The following proposition shows this is the case.
\begin{proposition}
\label{Prop:fidelity-general}
    Let $\rho_{AB}$ be any bipartite state such that $\rho_{A}=\frac{\1}{|A|}$.
    Then
    \begin{align}
        F_{\pbtq}(\rho_{AB})\geq 
        \max\left\{\frac{1}{|B|}+\frac{\Vert\rho_{AB}\Vert_\infty}{|B|^2}-\frac{1}{|A||B|^2},\;\frac{1}{|B|}\right\}.
    \end{align}
\end{proposition}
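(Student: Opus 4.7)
The plan is to prove the two lower bounds separately. The bound $\F_{\pbtq}(\rho_{AB}) \geq 1/|B|$ is immediate from Proposition~\ref{Prop:PBT-bounds}, so all the work goes into the other bound. For that, I will exhibit an explicit single-port $\pbtq$ protocol (i.e.\ $n = 1$) that attains the claimed value. The template is guided by the remark preceding the proposition: in the isotropic case, Alice's optimal measurement projects onto $\Phi^+_d$ for her port outcome and then covers the orthogonal complement $\1 \otimes \1 - \Phi^+_d$ with product-form POVM elements, exploiting the separability of that complement to choose companion states $\sigma^{(i)}$ for which the trace sum saturates at $d^2 - 1$. The same template generalizes once one replaces $\Phi^+_d$ by a top eigenvector of $\rho_{AB}$ and invokes the Bandyopadhyay--Brassard--Kimmel--Wootters result \cite{Bandyopadhyay-2004a}, which guarantees that $\1_{AB} - \op{\varphi}{\varphi}$ is separable for every pure state $\ket{\varphi}_{AB}$.

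Concretely, I would let $\ket{\varphi}_{AB}$ be a top eigenvector of $\rho_{AB}$, set $\Pi^{(1)} = \op{\varphi}{\varphi}$ so that $\tr[\Pi^{(1)} \rho_{AB}] = \Vert\rho_{AB}\Vert_\infty$, and decompose the remaining operator as
\begin{equation}
    \1_{AB} - \op{\varphi}{\varphi} = \sum_{k} q_k \op{a_k}{a_k}_A \otimes \op{b_k}{b_k}_B
\end{equation}
with $q_k \geq 0$ and the number of terms bounded by $\kappa' = |A|^2|B|^2$ via Carath\'eodory. I then define $\Pi^{(1+k)} = q_k \op{a_k}{a_k}_A \otimes \op{b_k}{b_k}_B$ and $\sigma^{(k)}_B = \op{b_k}{b_k}_B$. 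POVM completeness follows by construction, and taking traces of both sides of the decomposition gives $\sum_k q_k = |A||B| - 1$.

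To finish, I substitute this data into the expression for $\F_{\pbtq}$. Using $\rho_A = \1/|A|$ together with the purity of each $\sigma^{(k)}_B$, every term $\tr[\Pi^{(1+k)}(\rho_A \otimes \sigma^{(k)}_B)]$ collapses to $q_k / |A|$, so Bob's classical-like outcomes contribute a total of $(|A||B| - 1)/|A| = |B| - 1/|A|$. Adding $\Vert\rho_{AB}\Vert_\infty$ and dividing by $|B|^2$ yields exactly the first expression inside the maximum.

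The main subtlety, rather than a real obstacle, is choosing the separable decomposition with \emph{pure} product factors on Bob's side: this is what makes $\tr[\Pi^{(1+k)}(\1_A/|A| \otimes \op{b_k}{b_k})]$ evaluate to $q_k/|A|$ and lets the classical contribution saturate at $|A||B| - 1$ rather than falling short. Such a pure-product decomposition always exists since every separable operator can be written as a non-negative combination of pure product states. The hypothesis $\rho_A = \1/|A|$ enters only at the last step; dropping it would weaken the identity $\tr[\Pi^{(1+k)}(\rho_A \otimes \sigma^{(k)}_B)] = q_k/|A|$ into an inequality governed by $\Vert\rho_A\Vert_\infty$, and obtaining a clean closed-form bound without this assumption would require a different construction.
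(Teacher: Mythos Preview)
Your proposal is correct and follows essentially the same approach as the paper's proof: take a top eigenvector $\ket{\varphi}$ of $\rho_{AB}$, use the separability of $\1_{AB}-\op{\varphi}{\varphi}$ \cite{Bandyopadhyay-2004a} to build product POVM elements $\Pi^{(1+k)}$ with matching pure states $\sigma^{(k)}_B$ on Bob's auxiliary ports, and then evaluate the resulting fidelity using $\rho_A=\1/|A|$. The only cosmetic difference is that the paper absorbs the weights $q_k$ into unnormalized $\ket{\alpha_i}$ rather than carrying them separately, and it does not bother bounding the number of terms via Carath\'eodory (which is not needed for the argument).
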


\begin{proof}
    Let $\ket{\varphi}$ be an eigenvector of $\rho_{AB}$ such that $\bra{\varphi}\rho_{AB}\ket{\varphi}=\Vert\rho_{AB}\Vert_\infty$.
    Then as noted above, $\1\otimes\1-\op{\varphi}{\varphi}$ is separable, and we can write it as $\1\otimes\1-\op{\varphi}{\varphi}=\sum_{i=1}^{s}\op{\alpha_i}{\alpha_i}\otimes\op{\beta_i}{\beta_i}$ with the $\ket{\beta_i}$ being normalized pure states and $|A||B|-1=\sum_{i}^s\ip{\alpha_i}{\alpha_i}$.
    The PBT protocols then involves Bob preparing the state $\ket{\beta_i}$ in port $i+1$, and Alice performs the POVM with measurement operator $\Pi^{(1)}=\op{\varphi}{\varphi}$ and $\Pi^{(1+i)}=\op{\alpha_i}{\alpha_i}\otimes\op{\beta_i}{\beta_i}$ for $i=1,\dots, s$.
    Hence,
    \begin{align}
        &\F_{\pbtq}(\rho_{AB})\notag\\
        &\geq \frac{1}{|B|^2}\!\left(\!\tr[\Pi^{(1)}\rho_{AB}]\!+\!\frac{1}{|A|}\sum_{i=1}^s\tr[\Pi^{(1+i)}(\1\otimes\op{\beta_i}{\beta_i})]\right)\\
        &=\frac{1}{|B|^2}\left(\Vert\rho_{AB}\Vert_\infty+\frac{1}{|A|}\sum_{i=1}^s\ip{\alpha_i}{\alpha_i}\right)\\
        &=\frac{1}{|B|}+\frac{\Vert\rho_{AB}\Vert_\infty}{|B|^2}-\frac{1}{|A||B|^2}.
    \end{align}
\end{proof}

As an application of Proposition \ref{Prop:fidelity-general}, we show that any bipartite state that can exceed the classical teleportation threshold using one-way LOCC can also exceed the threshold under $\text{PBT}_{\max}$.
Hence, interactive local quantum processing is not necessary to realize the one-way teleportation capability of any bipartite state.
\begin{corollary}
    A bipartite state $\rho_{AB}$ can generate a $|C|$-dimensional teleportation channel whose fidelity exceeds $\frac{1}{|C|}$ by one-way LOCC if and only if it can also generate such a channel using $\text{PBT}_{\max}$.
\end{corollary}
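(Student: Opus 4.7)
The plan is to combine Proposition \ref{Prop:fidelity-general} (or a mild generalization of it) with the characterization of one-way LOCC teleportation advantage from \cite{Chitambar_2023}. First, I would dispatch the ``if'' direction, which is essentially by inspection: every $\pbtmax$ protocol consists of Bob's local pre-processing, Alice's local measurement, one round of classical communication from Alice to Bob, and Bob's final discarding of ports. All of these are permitted actions in one-way LOCC, so $\pbtmax(\rho_{AB}, C_0 \to C)$ is a subset of the channels realizable by one-way LOCC. Therefore any non-classical fidelity achievable via $\pbtmax$ is also achievable via one-way LOCC.

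For the converse, suppose $\rho_{AB}$ admits a one-way LOCC protocol generating a $|C|$-dimensional channel with entanglement fidelity strictly exceeding $1/|C|$. By the necessary and sufficient condition of \cite{Chitambar_2023} cited in the introduction, there exists a CPTP map $\mathcal{E}_{B\to C}$ such that the state $\sigma_{AC} \coloneqq (\mathrm{id}_A \otimes \mathcal{E})(\rho_{AB})$ violates the reduction criterion; equivalently, some unit vector $\ket{\varphi}_{AC}$ satisfies $\bra{\varphi}\sigma_{AC}\ket{\varphi} > \bra{\varphi}(\sigma_A \otimes \1_C)\ket{\varphi}$. I would absorb $\mathcal{E}$ into Bob's pre-processing step in $\pbtmax$, which gives $\F_{\pbtmax}(\rho_{AB},|C|) \geq \F_{\pbtq}(\sigma_{AC})$. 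It then suffices to show that reduction violation of $\sigma_{AC}$ alone implies $\F_{\pbtq}(\sigma_{AC}) > 1/|C|$.

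The latter is a mild extension of Proposition \ref{Prop:fidelity-general}, which treats only the case $\sigma_A = \1_A/|A|$. Following the same ansatz, since $\1_{AC} - \ket{\varphi}\bra{\varphi}$ is separable \cite{Bandyopadhyay-2004a}, write it as $\sum_{i=1}^{s} \ket{\alpha_i}\bra{\alpha_i}\otimes \ket{\beta_i}\bra{\beta_i}$ with the $\ket{\beta_i}$ normalized, and construct a $\pbtq$ protocol where Alice measures with $\Pi^{(1)}_{AC_0} = \ket{\varphi}\bra{\varphi}$ and $\Pi^{(1+i)}_{AC_0} = \ket{\alpha_i}\bra{\alpha_i}\otimes \ket{\beta_i}\bra{\beta_i}$, while Bob prepares $\sigma^{(i)}_C = \ket{\beta_i}\bra{\beta_i}$ in the $(1+i)$-th port. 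Taking the partial trace of the separable decomposition over $C$ yields $\sum_i \ket{\alpha_i}\bra{\alpha_i} = |C|\1_A - \mathrm{Tr}_C[\ket{\varphi}\bra{\varphi}]$, and a short calculation then gives
\begin{align}
    \F_{\pbtq}(\sigma_{AC}) \geq \frac{1}{|C|} + \frac{\bra{\varphi}\bigl(\sigma_{AC} - \sigma_A\otimes \1_C\bigr)\ket{\varphi}}{|C|^2},
\end{align}
which is strictly greater than $1/|C|$ precisely because $\ket{\varphi}$ witnesses reduction violation. The main obstacle is this generalization step: Proposition \ref{Prop:fidelity-general} as written requires $\sigma_A$ to be maximally mixed, a condition we cannot in general enforce by local pre-processing on Alice's side without destroying the entangled structure of $\sigma_{AC}$. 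Fortunately, the derivation generalizes cleanly once one replaces the $1/|A|$ factor there by the more general expression $\bra{\varphi}(\sigma_A\otimes \1_C)\ket{\varphi}$, so the corollary follows without further obstacles.
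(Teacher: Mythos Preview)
Your argument is correct, and it is actually cleaner than the paper's. The paper does \emph{not} generalize Proposition~\ref{Prop:fidelity-general} to arbitrary marginals; instead it forces the maximally-mixed hypothesis by having Alice perform a local filtering instrument with Kraus operator proportional to $(P_A\sigma_A P_A)^{-1/2}$. On the successful branch the post-filtered state $\hat{\sigma}_{AC}$ has $\hat{\sigma}_A$ maximally mixed on its support, so Proposition~\ref{Prop:fidelity-general} applies verbatim; on the failure branch Alice and Bob fall back to the optimal classical protocol (fidelity $1/|C|$), and the convex mixture still exceeds $1/|C|$. This costs extra structure: Bob must also pre-load the classical ports $\ket{1},\dots,\ket{d}$ for the fallback, and the argument carries around support projectors and pseudo-inverses.

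Your route sidesteps all of that. The key observation you make---that tracing the separable decomposition of $\1_{AC}-\ket{\varphi}\bra{\varphi}$ over $C$ gives $\sum_i \ket{\alpha_i}\bra{\alpha_i} = |C|\,\1_A - \tr_C\ket{\varphi}\bra{\varphi}$, so that $\sum_i \langle\alpha_i|\sigma_A|\alpha_i\rangle = |C| - \bra{\varphi}(\sigma_A\otimes\1_C)\ket{\varphi}$---is exactly what is needed to drop the maximally-mixed assumption, and the resulting inequality $\F_{\pbtq}(\sigma_{AC}) \geq 1/|C| + |C|^{-2}\bra{\varphi}(\sigma_{AC}-\sigma_A\otimes\1_C)\ket{\varphi}$ is a genuine strengthening of Proposition~\ref{Prop:fidelity-general}. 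The trade-off is that the paper's route uses the proposition as a black box, whereas you effectively reprove a stronger version of it; but since the reproof is a two-line computation, your approach is the more economical one.
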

\begin{proof}
    Denote by $d$ the local dimension $|C|$. As shown in Ref.~\cite{Chitambar_2023}, a $d$-dimensional one-way teleportation channel built using $\rho_{AB}$ can attain a fidelity above $\frac{1}{d}$ \emph{iff} there exists a local CPTP map $\mc{E}:B\to C$ on Bob's side such that
    \begin{align}
    \label{Eq:Bob_processing-RC}
        \sigma_A\otimes\1_C-\sigma_{AC}\not\geq 0,
    \end{align}
    where $\sigma_{AC}=\id_A\otimes\mc{E}_{B\to C}(\rho_{AB})$ so that $\sigma_A=\rho_A$.
    Suppose this condition is satisfied so that there exists a map $\mc{E}:B\to C$ on Bob's side and a bipartite state $\ket{\psi}$ such that $\bra{\psi}\sigma_{AC}\ket{\psi}>\bra{\psi}\sigma_A\otimes\1\ket{\psi}$.
    Let $P_A$ denote the projector onto the support of $\Psi_A=\tr_C\op{\psi}{\psi}_{AC}$ and $P_A'$ the projector onto the support of $P_A\sigma_A P_A$.  Note, $\bra{\psi}\sigma_{AC}\ket{\psi}>0$ implies that $P_A\sigma_A P_A$ is nonzero, and so $\Vert(P_A\sigma_A P_A)^{-1}\Vert_\infty$ is finite. 
    Define the states $\hat{\sigma}_{AC}=\frac{1}{\tr[P_A']}(P_A\sigma_AP_A)^{-1/2}(P_A\sigma_{AC}P_A)(P_A\sigma_AP_A)^{-1/2}$ and $\ket{\varphi}=\frac{\sqrt{P_A\sigma_AP_A}\otimes\1\ket{\psi}}{\bra{\psi}P_A\sigma_AP_A\otimes\1\ket{\psi}}$ so that
    \begin{align}
        1< \frac{\bra{\psi}\sigma_{AC}\ket{\psi}}{\bra{\psi}\sigma_A\otimes\1\ket{\psi}}
        &=\frac{\bra{\psi}P_A\sigma_{AC}P_A\ket{\psi}}{\bra{\psi}P_A\sigma_AP_A\otimes\1\ket{\psi}}\\
        &=\tr[P'_A]\bra{\varphi}\hat{\sigma}_{AC}\ket{\varphi}.
    \end{align}
    Therefore, the state $\hat{\sigma}_{AC}$ satisfies $\hat{\sigma}_A=\frac{1}{\tr[P'_A]}P'_A$ and $\Vert\hat{\sigma}_{AC}\Vert_\infty>\frac{1}{\tr[P'_A]}$.
    We can hence apply Proposition \ref{Prop:fidelity-general} to the state $\hat{\sigma}_{AC}$ to obtain
    \begin{align}
         \F_{\pbtq}(\hat{\sigma}_{AC})\geq\frac{1}{d}+\frac{\Vert\hat{\sigma}_{AC}\Vert_\infty}{d^2}-\frac{1}{\tr[P'_A]d^2}>\frac{1}{d}.
    \end{align}
    
    A full teleportation protocol in $\pbtds$ consists of Bob applying the pre-processing channel $\mc{E}$ of Eq. \eqref{Eq:Bob_processing-RC} to his system of $\rho_{AB}$, and also preparing in auxiliary ports the classical states $\op{1}{1}\otimes\dots\otimes\op{d}{d}$ along with the states $\ket{\beta_1}\otimes\dots\otimes\ket{\beta_s}$ from Proposition \ref{Prop:fidelity-general}.
    Alice's action then consists first of performing a filtering measurement that either applies $(P_A\sigma_AP_A)^{-1/2}$ to her system with some nonzero probability, or it collapses her system to some failure state.
    Formally, this is done using an instrument with two outcomes defined by Kraus operators $M_0=\lambda (P_A\sigma_AP_A)^{-1/2}$ and $M_1=\sqrt{\1-M_0^\dagger M_0}$, respectively, where $\lambda=\Vert (P_A\sigma_A P_A)^{-1}\Vert_\infty^{-1/2}$.
    In the first case, Alice and Bob follow the PBT protocol of Proposition \ref{Prop:fidelity-general} to attain a conditional fidelity larger than $\frac{1}{d}$.
    In the second case, Alice and Bob just perform the optimal classical protocol on the channel input to attain a fidelity of $\frac{1}{d}$.
    Since Bob has prepared classical registers, the classical protocol can be run as a PBT protocol.
    Averaging over both cases yields a fidelity strictly larger than $\frac{1}{d}$.
\end{proof}

\section{Conclusion}
In this paper, we have developed a class of quantum resource theories (QRTs) for one-way communication protocols with pre-shared entanglement between two parties, Alice and Bob, motivated by port-based teleportation originally proposed by Ishizaka and Hiroshima \cite{Ishizaka_2008}.
These QRTs effectively capture the time-independence of Bob's local processing, a feature that is essential for numerous applications of PBT.
Our operational models are defined by free operations that consist of local operations on Alice's side, one-way classical communication, trivial decoding achieved by discarding Bob's subsystems, and a subset of quantum pre-processing operations on the receiver's side prior to receiving classical information.
The QRT of $\pbt$ pertains to the minimal possible decoding on Bob's side, which is analogous to the original protocol.
We then defined $\pbtcl$ as a scenario that combines $\pbt$ with arbitrary classical processing.  
Interestingly, we found that the optimal way to integrate classical processing into PBT is not simply to probabilistically choose between performing standard PBT and the optimal classical protocol.
Instead, the optimal strategy involves using the classical states as additional ports on Bob's side and modifying Alice's POVM to select them.  The fidelity can be improved even further by using quantum states in the additional ports, and we explored this in the QRT of $\pbtq$. 

For $\Omega \in \{\pbt,\pbtcl,\pbtq\}$, we showed that $\F_{\Omega}(\rho)$ cannot exceed the classical teleportation bound if each state $\rho^{(i)} \equiv \Tr_{B_i^c}\rho$ in the PBT ensemble satisfies the reduction criterion.  Conversely, we showed that if $\rho_{AB}$ can violate the reduction criterion after a pre-processing map on Bob's side, then the classical teleportation bound can always be broken using $\pbtmax$.  This finding not only demonstrates the power of $\pbtmax$, but it also establishes another connection between teleportation and the reduction criterion.

We also considered specific examples of $\rho_{AB}$.
We first examined the family of isotropic states involving one port on Bob's side and obtained specific values for $\F_{\Omega}(\rho_{AB})$.
These values enabled us to demonstrate strict inclusions in the hierarchy $\pbt \subsetneq \pbtcl \subsetneq \pbtq \subsetneq \pbtmax$.
We then moved on to the case where $\rho_{AB}$ is an arbitrary bipartite graph state with any number of ports.
We proposed a PBT canonical form (PCF) for such states, which is an LU-equivalent graph that yields the same optimal fidelity for the various teleportation protocols we consider.
Surprisingly, we could show that $\F_{\Omega}(\Gamma_{AB})$ with a PCF graph state $\Gamma_{AB}$ is fully characterized by the number of isolated EPR pairs in the graph.
As a special case, we showed that $\F_{\pbtcl}(\Phi^+) = \F_{\pbt}(\Phi^+_{AB_1}\otimes \pure{0}_{B_2} \otimes \pure{1}_{B_3})$ exceeds the classical bound.
This provides the smallest possible example in which any protocol from $\pbt$, $\pbtcl$, or $\pbtq$ can exceed the non-classical threshold, since it is not possible to do so using these operations when Bob has two qubits \cite{Mozrzymas_2018}.
On the other hand, since $\pbtmax$ is sufficiently strong to break the classical threshold for two-qubit states, it is an interesting open problem to identify the minimal type of pre-processing beyond $\pbtq$ needed for non-classical teleportation of two-qubit states.

\paragraph*{Acknowledgments.}
We thank Juntai Zhou for helpful discussions. This research was supported by a grant through the IBM-Illinois Discovery Accelerator Institute.

\label{Section:Conclusion}

\bibliography{asynchronous}

\newpage

\onecolumngrid

\appendix

\section{Cardinality bound for $\pbtq$ (Proof of Proposition \ref{prop:cardinality-bound})}

\label{Appendix-Cardinality bound}

Here and in the following appendices, we give detailed proofs of some of our main results. 
These propositions and theorems are restated without numbers at the beginning of each section for the convenience of the reader.

\begin{proposition*}[Restatement of Prop.~\ref{prop:cardinality-bound}]
    Every channel in $\pbtq(\rho_{AB}, C_0\to B)$ can be generated by a POVM of Alice having $s\leq |C_0|^2(|A|^2+|B|^2)$ outcomes.
\end{proposition*}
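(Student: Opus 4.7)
The plan is to view each POVM outcome in a realization of $\Lambda$ as an atom of a convex cone sitting inside a real vector space of dimension $|C_0|^2(|A|^2+|B|^2)$, and then to apply Carath\'eodory's theorem for cones.

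Given any $\pbtq$-realization of $\Lambda$ with POVM $\{\Pi^{(i)}_{AC_0}\}_{i=1}^s$ and states $\{\sigma^{(i)}_B\}_{i>n}$, I would associate to each outcome $i$ the pair
\begin{align*}
    a^{(i)} \coloneqq (\Pi^{(i)}_{AC_0},\, J^{(i)}_{C_0B}) \in V \coloneqq \mathrm{Herm}(AC_0)\oplus \mathrm{Herm}(C_0\otimes B),
\end{align*}
where $J^{(i)}$ is the Choi matrix on $C_0\otimes B$ of the CP sub-map contributed by outcome $i$---namely $\tr_{AC_0}[\Pi^{(i)}(\rho^{(i)}_{AB}\otimes\cdot)]$ for Type A outcomes ($i\leq n$) and $\tr[\Pi^{(i)}(\rho_A\otimes\cdot)]\sigma^{(i)}_B$ for Type B outcomes ($i>n$). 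The crucial numerical observation is
\begin{align*}
    \dim_{\bR} V = |A|^2|C_0|^2+|C_0|^2|B|^2 = |C_0|^2(|A|^2+|B|^2),
\end{align*}
which matches the bound in the claim. Let $S\subseteq V$ denote the set of all such atoms arising from single valid Type A or Type B outcomes; rescaling $\Pi\mapsto c\Pi$ with $c\geq 0$ correspondingly rescales $J$ by the same factor, so $S$ is stable under non-negative scaling.

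Next I would observe that the POVM condition $\sum_i\Pi^{(i)}=\1_{AC_0}$ and the channel-realization condition $\sum_i J^{(i)}=J_\Lambda$ together say exactly that the sum of the atoms equals the single fixed vector $(\1_{AC_0},J_\Lambda)\in V$. In other words this vector lies in the conic hull of $S$, and Carath\'eodory's theorem for cones then guarantees a non-negative representation using at most $\dim_{\bR} V=|C_0|^2(|A|^2+|B|^2)$ atoms from $S$. Absorbing each non-negative coefficient into the corresponding $\Pi$ (and leaving $\sigma$ unchanged for the Type B atoms) produces a new $\pbtq$-realization of $\Lambda$ with $s\leq |C_0|^2(|A|^2+|B|^2)$ outcomes, which is the claimed bound.

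The main technical point to verify is that each atom selected by Carath\'eodory still corresponds to a single genuine $\pbtq$ outcome, rather than some ``hybrid'' point that is not of pure Type A or Type B form. This is ensured by defining $S$ to be exactly the union of the Type A subcones (one per port) and the Type B subcone, and by using the version of Carath\'eodory's theorem that selects atoms directly from $S$. The ambient space $V$ is the direct sum of the natural habitat of the POVM constraint on $AC_0$ (contributing the $|A|^2|C_0|^2$ term) and that of the Choi matrix on $C_0\otimes B$ (contributing the $|C_0|^2|B|^2$ term); this splitting is precisely the reason the bound takes the particular form $|C_0|^2(|A|^2+|B|^2)$.
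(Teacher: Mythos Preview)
Your proposal is correct and follows essentially the same approach as the paper: embed each outcome into the space $\mathrm{Herm}(AC_0)\oplus\mathrm{Herm}(C_0B)$ of real dimension $|C_0|^2(|A|^2+|B|^2)$ and apply Carath\'eodory. The paper's version differs only cosmetically: it leaves the $n$ port outcomes alone and applies convex Carath\'eodory just to the normalized Type B atoms $(\hat\Pi^{(i)},\,\tr_{AC_0}[\hat\Pi^{(i)}(\rho_A\otimes\mathbb{F}_{C_0C_0'})]\sigma^{(i)})$, working in $\mathbb{R}^{|A|^2|C_0|^2-1}\times\mathbb{R}^{|C_0|^2|B|^2}$. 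Your conic formulation treats all outcomes uniformly and avoids the normalization step, which is arguably cleaner; the paper's route makes explicit that only the Type B part ever needs compressing. Either way the dimension count, and hence the bound, is identical.
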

\begin{proof}
First observe that we can express the action of $\Lambda_{C_0\to B}$ on an arbitrary state $\tau_{C_0}$ as
\begin{align}
    \Lambda_{C_0\to B}(\tau_{C_0})
    &=\tr_{C_0'}[\tau_{C_0'}\id_{C_0'}\otimes\Lambda_{C_0\to B}(\mbb{F}_{C_0C_0'})]\\
    &=\sum_{i=1}^n\tr_{AC_0 C_0'}[\tau_{C_0'}\Pi^{(i)}_{AC_0}(\rho^{(i)}_{AB}\otimes \mbb{F}_{C_0C_0'})]\\
    &+\sum_{i=n+1}^s\tr_{C_0'}[\tau_{C_0'}\tr_{AC_0}[\Pi^{(i)}_{AC_0}(\rho_{A}\otimes\mbb{F}_{C_0C_0'})]]\sigma^{(i)}_{B}
\end{align}
where $\mbb{F}_{C_0C_0'}=\sum_{i,j=1}^{|C_0|}\op{ij}{ji}_{C_0C_0'}$.
Let us focus on the part $\sum_{i=n+1}^s\tr_{AC_0}[\Pi^{(i)}_{AC_0}(\rho_A\otimes\mbb{F}_{C_0C_0'})]\sigma^{(i)}_{B}$ in the last term of this equation.
For any POVM $\{\Pi^{(i)}_{AC_0}\}_i$ by Alice and any state preparations $\{\sigma_{B}^{(i)}\}_{i=1}^s$ by Bob, let us denote 
\begin{align}
    \Omega_{AC_0}&=\sum_{i=n+1}^s\Pi^{(i)}_{AC_0}=\sum_{i=n+1}^sp_i\hat{\Pi}^{(i)}_{AC_0} \\
    \Upsilon_{C_0'B}&=\sum_{i=n+1}^sp_i\tr_{AC_0}[\hat{\Pi}^{(i)}_{AC_0}(\rho_{A}\otimes\mbb{F}_{C_0C_0'})]\sigma^{(i)}_{B},
\end{align} 
where $p_i= \tfrac{\tr[\Pi^{(i)}_{AC_0}]}{\sum_{i=n+1}^s\tr[\Pi^{(i)}_{AC_0}]}$ and $\hat{\Pi}^{(i)}_{AC_0}=\tfrac{\Pi^{(i)}_{AC_0}}{\tr[\Pi^{(i)}_{AC_0}]}$.
Thus, $(\Omega_{AC_0},\Upsilon_{C'_0B})$ lies in the convex hull generated by the points $(\hat{\Pi}^{(i)}_{AC_0}, \tr_{AC_0}[\hat{\Pi}^{(i)}_{AC_0}(\rho_{A}\otimes\mbb{F}_{C_0C_0'})]\sigma^{(i)}_{B})$.
Since this is a convex subset of $\mbb{R}^{|A|^2|C_0|^2-1}\times\mbb{R}^{|C_0|^2|B|^2}$, Carath\'{e}odory's theorem ensures that $(\Omega_{AC_0},\Upsilon_{C_0'B})$ can be generated by no more than $\kappa\coloneqq|C_0|^2(|A|^2+|B|^2)$ of these points.
Consequently, we can always replace Alice's POVM to have no more than $\kappa$ outcomes while still generating the same channel $\Lambda_{C_0\to B}$.
\end{proof}

\section{Computing $\F_{\pbtcl}(\rho_{d,f})$ (Proof of \Cref{thm:iso-pbtcl})} \label{app:iso-pbtcl-proof}

\begin{theorem*}[Restatement of Thm.~\ref{thm:iso-pbtcl}]
    The classical-processing PBT fidelity of the isotropic state $\rho_{d,f}$ is given by
	\begin{align}
		\F_{\pbtcl}(\rho_{d,f}) = \begin{cases}\displaystyle
			\frac{1}{d} + \frac{(1-df)^2}{d^2(d^2f-1)} & \text{\upshape if $f > \frac{1}{d}$},\\[1em]
			\displaystyle\frac{1}{d} & \text{\upshape otherwise.}
		\end{cases} 
	\end{align}
	For $f>\frac{1}{d}$, this fidelity exceeds the classical fidelity, $\F_{\pbtcl} > \frac{1}{d}$.
\end{theorem*}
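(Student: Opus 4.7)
For $f\leq 1/d$, the isotropic state $\rho_{d,f}$ is separable, so its single-port marginal trivially obeys the reduction criterion and $\F_{\pbtcl}(\rho_{d,f})=1/d$ follows directly from Corollary~\ref{Cor:reduction}. The substantive case is $f>1/d$, which I would handle via SDP duality applied to Eq.~\eqref{eq:Fpbtcl-SDP-primal}. Specializing to $n=1$ and $\rho_A=\1/d$, the dual reads
\[
d^2\,\F_{\pbtcl}(\rho_{d,f})=\min\tr[Y]\quad\text{s.t.}\quad Y\geq\rho_{d,f},\ Y\geq\tfrac{1}{d}\1_A\otimes\op{i}{i}_B\ \ \forall i.
\]
Since the problem is invariant under $\pi\otimes\pi$ for every $\pi\in S_d$, I would symmetrize and restrict to $S_d$-invariant $Y$. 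Guided by the isotropy of $\rho_{d,f}$ together with the block structure imposed by the classical constraints, I then propose the ansatz
\[
Y \;=\; \alpha\,\Phi^+_d\,+\,\beta\,(D-\Phi^+_d)\,+\,\gamma\,(\1-D),
\]
where $D=\sum_i\op{ii}{ii}$; the block-diagonal form will be justified \emph{a posteriori} once a primal POVM matches the dual value.

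Next I extract the constraints. The inequality $Y\geq\rho_{d,f}$ is block-diagonal and gives $\alpha\geq f$ together with $\beta,\gamma\geq\frac{1-f}{d^2-1}$. For the classical constraints, $S_d$-symmetry lets me check only $i=1$. On the $\1-D$ block, the diagonal action on the vectors $\{\ket{j1}:j\neq 1\}$ forces $\gamma\geq 1/d$. On the $D$ block, I would restrict $Y-\frac{1}{d}\op{11}{11}$ to the two-dimensional subspace spanned by $\ket{11}$ and $\frac{1}{\sqrt{d-1}}\sum_{i\geq 2}\ket{ii}$; computing the resulting $2\times 2$ matrix and demanding its trace and determinant be nonnegative reduces positive-semidefiniteness to a trace condition $\alpha+\beta\geq 1/d$ (which turns out to be automatic in the relevant regime) and a nontrivial quadratic condition $d^2\alpha\beta\geq\beta+(d-1)\alpha$. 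For $f>1/d$ I then set $\alpha=f$ and $\gamma=1/d$, and take $\beta=\frac{(d-1)f}{d^2f-1}$—the binding value of the quadratic, which dominates $\frac{1-f}{d^2-1}$ precisely when $f>1/d$. Substituting into $\tr[Y]=\alpha+(d-1)\beta+d(d-1)\gamma$ and simplifying algebraically should yield the claimed $\frac{1}{d}+\frac{(1-df)^2}{d^2(d^2f-1)}$.

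To close the argument, I would exhibit a primal-feasible POVM matching this value via complementary slackness: take $\Pi^{(1)}\propto\Phi^+_d$ (which lies in $\ker(Y-\rho_{d,f})$), and support each $\Pi^{(1+i)}$ on $\ker\bigl(Y-\tfrac{1}{d}\1_A\otimes\op{i}{i}_B\bigr)$, which is spanned by $\{\ket{ji}:j\neq i\}$ together with the one-dimensional kernel of the $2\times 2$ $D$-block identified above; $S_d$-symmetry essentially forces the coefficients, and the resulting value matches the dual bound and retroactively justifies the block-diagonal ansatz. The main obstacle I anticipate is precisely that $2\times 2$ block computation: because $\Phi^+_d$ does not commute with $\1_A\otimes\op{1}{1}_B$, the eigenvalues of $Y-\tfrac{1}{d}\1_A\otimes\op{1}{1}_B$ cannot be read off by inspection, and some care is required to expand the determinant and extract the clean condition $d^2\alpha\beta\geq\beta+(d-1)\alpha$ that ultimately produces the $(1-df)^2/(d^2f-1)$ numerator in the final formula.
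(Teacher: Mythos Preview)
Your proposal is correct and follows essentially the same route as the paper: handle $f\leq 1/d$ via the reduction criterion, then for $f>1/d$ use SDP duality with an ansatz for the dual variable (the paper's $K$ is exactly your $Y/d^2$ with $\alpha=f$, $\beta=(d-1)f/(d^2f-1)$, $\gamma=1/d$) and match it with an explicit primal POVM built from the complementary-slackness kernels you describe. Your $S_{d-1}$-symmetry reduction of the $D$-block feasibility check to a $2\times 2$ determinant is a modest tidying of the paper's direct $d\times d$ eigenvalue computation, but the overall structure is identical.
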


\begin{proof}

Let $f>\frac{1}{d}$. We first consider the dual formulation of the semidefinite program \eqref{eq:Fpbtcl-SDP-primal} characterizing the teleportation fidelity $\F_{\pbtcl}$:
\begin{align}
	\min \left\lbrace \tr(K): K \geq \frac{1}{d^2} \sigma^{(i)} \text{ for } i=1,\dots,d+1 \right\rbrace,\label{eq:dual}
\end{align}
where $\sigma^{(i)} = \frac{1}{d}\1 \otimes |i\rangle\langle i|$ for $i=1,\dots,d$, and $\sigma_{d+1} = \rho_{d,f}$.

We make the following ansatz for $K$ in \eqref{eq:dual}:
\begin{equation}
	K = \frac{1}{d^3} \1_{d^2} + \frac{f(1-df)}{d^2(1-d^2f)} |\gamma\rangle\langle\gamma| - \frac{1-df}{d^3(1-d^2f)} \sum\nolimits_i |ii\rangle\langle ii|.\label{eq:K}
\end{equation}
To show that this operator is feasible in \eqref{eq:dual}, we first consider the constraint $K-\frac{1}{d^2}\sigma^{(i)} \geq 0$ for $i=1,\dots,d$. 
Assuming $i=1$ for simplicity (the other cases follow analogously), we have with $\sigma_1 = \frac{1}{d}\1_d\otimes |1\rangle\langle 1|$ that
\begin{equation}
	K-\frac{1}{d^2}\sigma_1 = \frac{1}{d^3} \1_d \otimes \sum_{j=2}^d |j\rangle\langle j| + \frac{f(1-df)}{d^2(1-d^2f)} |\gamma\rangle\langle\gamma| - \frac{1-df}{d^3(1-d^2f)} \sum_{i=1}^d |ii\rangle\langle ii|.\label{eq:K-minus-state}
\end{equation}
This operator has a block-diagonal structure
\begin{align}
    K-\frac{1}{d^2}\sigma_1 \cong \frac{1}{d^3} \1 \oplus M,
\end{align}
consisting of a $(d\times d)$-matrix $M$ (given in \eqref{eq:M} below) acting on the subspace spanned by $\lbrace |ii\rangle\rbrace_{i=1}^d$, and a diagonal part with elements $1/d^3 \geq 0$.
Positive semidefiniteness of the operator in \eqref{eq:K-minus-state} is therefore equivalent to positive semidefiniteness of the $M$ matrix given by
\begin{align}
	M &= \begin{pmatrix}
		b-c & b & b&\dots & b\\
		b & a+b-c & b &\dots & b\\
		b & b & a+b-c & \dots & b\\
		\vdots& \vdots & \vdots & \ddots & \vdots\\
		b & b & b & \dots & a+b-c
	\end{pmatrix}, 
    \label{eq:M}
    \intertext{where}
    a &= \frac{1}{d^3}, \quad b  = \frac{f(1-df)}{d^2(1-d^2f)}, \quad c  = \frac{1-df}{d^3(1-d^2f)}.
\end{align}
This matrix has eigenvalues  
\begin{equation}
    a-c = \frac{1}{d^3}\left(1 - \frac{1-df}{1-d^2f}\right) \geq 0
    \text{\q(since $f\geq 1/d^2$),}
\end{equation}
and $\frac{1}{2} \left(a + db - 2c \pm \sqrt{D}\right) = 0$, with $D = (a+db)^2 - 4ab$.
A straightforward calculation shows that, since $f\geq 1/d^2$,
\begin{align}
	a+db-2c = \sqrt{D} = \frac{1}{d^3(d^2f-1)}\left(d^3f - 2df + 1\right) \geq 0.
\end{align}
Hence, $M$ has non-negative eigenvalues, and the operator $K-\frac{1}{d^2}\sigma_1$ in \eqref{eq:K-minus-state} is indeed positive semidefinite.
The same argument can be used to show that $K-\frac{1}{d^2}\sigma^{(i)} \geq 0$ for $i=2,\dots,d$.

It remains to check the feasibility constraint for the isotropic state $\rho_{d,f} = \frac{1-f}{d^2-1}\1_{d^2} + \frac{d^2f-1}{d(d^2-1)}|\gamma\rangle\langle\gamma|$:
\begin{align}
	K - \frac{1}{d^2} \rho_{d,f} &= \underbrace{\left(\frac{1}{d^3}-\frac{1-f}{d^2(d^2-1)}\right)}_{{} \eqqcolon x} \1_{d^2}
    \quad{} + \underbrace{\left(\frac{f(1-df)}{d^2(1-d^2f)} - \frac{d^2f-1}{d^3(d^2-1)} \right)}_{ {} \eqqcolon y} |\gamma\rangle\langle\gamma|
    \quad {} - \underbrace{\frac{1-df}{d^3(1-d^2f)}}_{ {} \eqqcolon z}  \sum\nolimits_i |ii\rangle\langle ii|. \label{eq:K-minus-isotropic}
\end{align}
Similar to before, this matrix has a block-diagonal structure 
\begin{align}
    K - \frac{1}{d^2} \rho_{d,f} \cong x \1 \oplus N,
\end{align}
consisting of a diagonal matrix with diagonal elements $x$ defined in the first line of \eqref{eq:K-minus-isotropic}, and a $(d\times d)$-matrix $N$ (given in \eqref{eq:N} below) acting on the subspace spanned by $\lbrace |ii\rangle\rbrace_{i=1}^d$.
Since $d>1$ and $1/d^2\leq f\leq 1$, we have $x\geq 0$, and hence the diagonal part of the matrix \eqref{eq:K-minus-isotropic} is positive semidefinite.
The matrix $N$ is given by
\begin{align}
	N = \begin{pmatrix}
		x+y-z & y &  \dots & y\\
		y & x + y - z &  \dots & y\\
		\vdots & \vdots & \ddots & \vdots\\
		y & y &  \dots & x+y-z
	\end{pmatrix}.
    \label{eq:N}
\end{align}
Its distinct eigenvalues are $\lambda_1=0$ (with multiplicity $1$) and 
\begin{align}
	\lambda_2=\frac{d^2f^2 + (d^3-2d^2-d)f+1}{d^2(d^2-1)(d^2 f-1)}
\end{align}
with multiplicity $d-1$.
The eigenvalue $\lambda_2$ is non-negative for $d\geq 2$, so that the operator in \eqref{eq:K-minus-isotropic} is positive semidefinite, and thus $K\geq \frac{1}{d^2}\rho_{d,f}$.

The discussion so far shows that the $K$ given in \eqref{eq:K} is feasible in \eqref{eq:dual}, hence giving the following upper bound on the fidelity for $f\geq 1/d$:
\begin{align}
	\F_{\pbtcl}(\rho_{d,f}) \leq \tr(K) = \frac{1}{d} + \frac{(1-df)^2}{d^2(d^2f-1)}. \label{eq:fidelity}
\end{align}

To show that we actually have equality in \eqref{eq:fidelity} for $f\geq 1/d$, we use the primal problem \eqref{eq:Fpbtcl-SDP-primal}.
To this end, we define constants
\begin{equation}
	a= \frac{(d-1)df}{d^2f-1} \q\q\q  b=1-a \q\q\q c= \frac{2+d^3f^2-d(1+2f)}{(d^2f-1)^2}
\end{equation}
and vectors $|\psi_i\rangle = -a |ii\rangle + b \sum_{j\neq i} |jj\rangle$ for $i=1,\dots,d$.
Now consider the POVM $\Pi=\lbrace \Pi^{(i)}\rbrace_{i=1}^{d+1}$ with elements
\begin{align}
    &\Pi^{(i)} = |\psi_i\rangle\langle\psi_i| + \left(\1_d - |i\rangle\langle i|\right)\otimes |i\rangle\langle i| \text{\q\q for $i=1,\dots,d$}\\
    &\Pi_{d+1} = c |\gamma\rangle\langle\gamma|.
\end{align}
These are manifestly positive semidefinite, and one can check that $\sum_i \Pi^{(i)} = \1_{d^2}$.
Hence, $\Pi$ constitutes a valid POVM.
We further compute, for $i=1,\dots,d$, 
\begin{equation} 
\tr(\Pi^{(i)}\sigma^{(i)}) = \frac{1}{d} (a^2+d-1)\q\q\q\q
\tr(\Pi_{d+1}\sigma_{d+1}) = cdf.
\end{equation}
This POVM then gives the following lower bound on the fidelity $\F_{\pbtcl}(\rho_{d,f})$ for $f\geq 1/d$:
\begin{equation}
    \F_{\pbtcl}(\rho_{d,f}) \geq \frac{1}{d^2} \sum_{i=1}^{d+1} \tr(\Pi^{(i)}\sigma^{(i)})
    = \frac{1}{d^2} (a^2 + d - 1 + c d f)
    = \frac{1 + d^3 f + d^2 f^2 - d (1 + 2 f)}{d^2 (d^2 f -1)},
\end{equation}
which can be shown to be equal to the RHS of \eqref{eq:fidelity}, and thus proves equality therein.

Hence, we have shown that
\begin{align}
    \F_{\pbtcl}(\rho_{d,f}) = \begin{cases} 
    \dfrac{1}{d} + \dfrac{(1-df)^2}{d^2(d^2f-1)} & \text{if $f\geq 1/d$}\\
    \dfrac{1}{d} & \text{otherwise},
    \end{cases}
\end{align}
which concludes the proof.
\end{proof}

\section{Computing $\F_{\pbtq}(\rho_{d,f})$ (Proof of ~\Cref{thm:iso-fidelity-q})}

\label{app:iso-q-optimization}

\begin{theorem*}[Restatement of Thm.~\ref{thm:iso-fidelity-q}]
The optimal PBT fidelity with quantum state preparation of the isotropic state $\rho_{d,f}$ is equal to
    \begin{equation}
         \label{Eq:PBT**-EPR}
         \F_{\pbtq}(\rho_{d,f})=
         \begin{cases}\displaystyle \frac{1}{d}+\frac{f}{d^2}-\frac{1}{d^3} & \text{\upshape if } f\geq 1/d\\[1em]
         \displaystyle\frac{1}{d} & \text{\upshape otherwise.}
         \end{cases}
    \end{equation}
\end{theorem*}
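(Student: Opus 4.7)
My plan is to split the problem into the cases $f\leq 1/d$ and $f>1/d$. The easy case $f\leq 1/d$ is immediate: $\rho_{d,f}$ is separable, so its single-port marginal satisfies the reduction criterion, and \Cref{Cor:reduction} forces $\F_{\pbtq}(\rho_{d,f})=1/d$. The main effort is therefore the regime $f>1/d$, where I would prove matching upper and lower bounds.

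Performing the inner maximization over Bob's prepared states $\sigma^{(i)}$ (which uses $\rho_A=\1/d$) replaces each $\tr[\Pi^{(1+i)}(\rho_A\otimes\sigma^{(i)})]$ by $\tfrac{1}{d}\Vert\tr_A\Pi^{(1+i)}\Vert_\infty$, turning the fidelity into
\begin{equation*}
    d^2\F_{\pbtq}(\rho_{d,f})=\max_{\{\Pi^{(i)}\}}\Bigl\{\tr[\Pi^{(1)}\rho_{d,f}]+\tfrac{1}{d}\textstyle\sum_{i\geq 2}\Vert\tr_A\Pi^{(i)}\Vert_\infty\Bigr\}.
\end{equation*}
Applying $\Vert\tr_A X\Vert_\infty\leq\tr[X]$ for each $X\geq 0$ together with $\sum_{i\geq 2}\tr[\Pi^{(i)}]=d^2-\tr[\Pi^{(1)}]$ yields
\begin{equation*}
    d^2\F_{\pbtq}(\rho_{d,f})\leq d+\max_{0\leq\Pi^{(1)}\leq\1}\tr\bigl[\Pi^{(1)}(\rho_{d,f}-\tfrac{1}{d}\1_{AB})\bigr]=d+\tr[(\rho_{d,f}-\tfrac{1}{d}\1_{AB})_+].
\end{equation*}
For $f>1/d$ the positive part is supported on the $\Phi^+_d$-eigenspace with eigenvalue $f-1/d$, yielding $\F_{\pbtq}(\rho_{d,f})\leq \tfrac{1}{d}+\tfrac{f}{d^2}-\tfrac{1}{d^3}$.

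To saturate both inequalities I would set $\Pi^{(1)}=\Phi^+_d$ and exploit the separability of $\1-\Phi^+_d$ \cite{Bandyopadhyay-2004a} to write it as $\sum_k \op{\alpha_k}{\alpha_k}\otimes\op{\beta_k}{\beta_k}$ with unit vectors $\ket{\beta_k}$ and $\sum_k\ip{\alpha_k}{\alpha_k}=d^2-1$. Setting $\Pi^{(1+k)}=\op{\alpha_k}{\alpha_k}\otimes\op{\beta_k}{\beta_k}$ and having Bob prepare $\sigma^{(k)}_B=\op{\beta_k}{\beta_k}$ makes each $\Pi^{(1+k)}$ rank-one and \emph{B-pure}, forcing $\Vert\tr_A\Pi^{(1+k)}\Vert_\infty=\tr[\Pi^{(1+k)}]$ and reproducing the upper bound exactly.

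The above argument does not logically require \Cref{lem:E-isotropic}; however, the cleaner route of \Cref{app:iso-q-optimization} instead recasts $\tfrac{1}{d}\sum_{i\geq 2}\Vert\tr_A\Pi^{(i)}\Vert_\infty$ (optimized over rank-one refinements of $\1-\Pi^{(1)}$) as a convex-roof type functional evaluated on $\tilde\Pi^{(1)}\coloneqq(\1-\Pi^{(1)})/\tr[\1-\Pi^{(1)}]$, uses $U\otimes\overline U$-invariance of $\rho_{d,f}$ together with convexity and local-unitary invariance of $\hat{\mathsf{E}}_{d-1}$ to restrict $\Pi^{(1)}$ to the isotropic family $\alpha\Phi^+_d+\beta(\1-\Phi^+_d)$, and invokes \Cref{lem:E-isotropic} to finish a two-variable optimization in $(\alpha,\beta)$. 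The hard part on that route is \Cref{lem:E-isotropic} itself, which I would tackle via the Terhal--Vollbrecht strategy \cite{Terhal-2000a}: the bilateral $U\otimes\overline U$ symmetry lets one restrict the optimizing ensemble to the orbit of a single pure state under the twirl, reducing the convex-roof problem to a scalar optimization over the Schmidt spectrum of that pure state, solvable in closed form.
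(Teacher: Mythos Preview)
Your argument is correct and takes a genuinely more direct route than the paper. The paper's proof in \Cref{app:iso-q-optimization} uses the $U\otimes\overline{U}$ symmetry to reduce $\Pi^{(1)}$ to isotropic form $s\Phi^+_d+r(\1-\Phi^+_d)$, then rewrites the residual optimization over $\{\Pi^{(1+i)}\}$ as the convex-roof quantity $1-\hat{\mathsf{E}}_{d-1}(\rho_{d,t})$ (with $t$ determined by $r,s$), invokes \Cref{lem:E-isotropic} for its closed form, and finishes with a somewhat tedious two-region $(r,s)$ optimization. You bypass all of this with the single inequality $\Vert\tr_A X\Vert_\infty\leq\tr X$ for $X\geq 0$, reducing the upper bound to $d+\tr[(\rho_{d,f}-\tfrac{1}{d}\1)_+]=d+f-\tfrac{1}{d}$ in one line; your achieving protocol is exactly the construction the paper records later as \Cref{Prop:fidelity-general}, specialized to the isotropic state with $\Vert\rho_{d,f}\Vert_\infty=f$. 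What the paper's longer route buys is that \Cref{lem:E-isotropic} gets proved along the way (they flag it as of independent interest), whereas your argument shows it is not actually needed for \Cref{thm:iso-fidelity-q} itself. Your final paragraph correctly anticipates the paper's strategy.
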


\begin{proof}
Similar to before, we restrict our attention to the regime $f\geq 1/d$, as the isotropic state $\rho_{d,f}$ is separable for $f\leq 1/d$ and hence the classical teleportation fidelity $1/d$ is the maximal achievable fidelity \cite{Horodecki_1999a}.

We recall that the $\pbtq$ fidelity is given by
\begin{align}
\F_{\pbtq}(\rho_{d,f})
&=\frac{1}{d^2}\!\!\max_{\substack{\{\Pi^{(i)}\}_{i=1}^{d^4+1}\\ \{\sigma^{(i)}\}_{i=1}^{d^4}}}\!\bigg\{\!\tr[\Pi_{AB}^{(1)} \rho_{d,f}] + \frac{1}{d} \sum_{i=1}^{d^4} \tr[\Pi_{AB}^{(1+i)}(\1\otimes\sigma^{(i)})]\bigg\}\\
&=\frac{1}{d^2}\!\!\max_{\{\Pi^{(i)}\}_{i=1}^{d^4+1}}\!\bigg\{\!\tr[\Pi_{AB}^{(1)} \rho_{d,f}] + \frac{1}{d}\sum_{i=1}^{d^4}\Vert\Pi_B^{(1+i)}\Vert_\infty\bigg\},
\end{align}
where $\Vert X\Vert_\infty =\max_{\ket{\varphi}} \bra{\varphi}X\ket{\varphi} /\ip{\varphi}{\varphi}$ denotes the spectral norm of hermitian operator $X$.
For any unitary $U\in \text{U}(d)$, the $U\otimes \overline{U}$ invariance of $\rho_{d,f}$ and the unitary invariance of the spectral norm imply that
\begin{align}
    \tr[\Pi^{(1)} \rho_{d,f}] + \frac{1}{d} \sum_{i=1}^{d^4} \Vert\Pi_B^{(1+i)}\Vert_\infty
    = \tr[(U\otimes\overline{U})\Pi^{(1)} (U\otimes\overline{U})^\dagger\rho_{d,f}] + \frac{1}{d}\sum_{i=1}^{d^4}\Vert\Pi_B^{(1+i)}\Vert_\infty.
\end{align}
Hence, by averaging over unitaries drawn from a $t$-design of $\text{U}(d)$, there is some finite $s$ so that
\begin{align}
\F_{\pbtq}(\rho_{d,f})=\frac{1}{d^2}\max_{\{\Pi^{(i)}_{AB}\}_{i=1}^{s+1}}\bigg\{\tr[\Pi_{AB}^{(1)} \rho_{d,f}] + 
\frac{1}{d}\sum_{i=1}^{s}\Vert\Pi_B^{(1+i)}\Vert_\infty\bigg\},
\end{align}
where $\Pi^{(1)}$ and $\Pi^{(0)}\coloneqq\1-\Pi^{(1)}$ are both $U\otimes\overline{U}$-invariant operators, thus having the isotropic form of Eq.~\eqref{Eq:isotropic-state} (up to normalization).
By writing $\Pi_{AB}^{(1)}= s\Phi^+_d+ r(\1_{AB}-\Phi^+_d)$ and substituting this into the previous equation, the optimization problem readily reduces to 
\begin{align}
\begin{aligned}
    \F_{\pbtq}(\rho_{d,f})=\frac{1}{d^2} \max\;\; \bigg\{ &sf+r(1-f)+\tfrac{1-s+(1-r)(d^2-1)}{d}\sum_{i=1}^{s}\Vert\Pi_B^{(1+i)}\Vert_\infty :\\
    \text{subject to}\;\;&\sum_{i=1}^s \Pi_{AB}^{(1+i)}=\rho_{d,t},\\
    &0\leq r,s\leq 1,\\
    &t=\frac{1-s}{1-s+(1-r)(d^2-1)},\\
    &\Pi_{AB}^{(1+i)}\geq 0.\bigg\}
\end{aligned}
\end{align}

For a given $t$, consider the optimization
\begin{align}
    \max
    &\left\{\sum_{i=1}^{s}\Vert\Pi_B^{(1+i)}\Vert_\infty  \;\bigg|\;\sum_{i=1}^s \Pi_{AB}^{(1+i)}=\rho_{d,t},\;\Pi^{(1+i)}_{AB}\geq 0,\;s\in\mbb{N}\right\}.   
\end{align}
First, since the value of $s$ can be taken as large as desired, without loss of generality we can assume that all the $\Pi_{AB}^{(1+i)}$ are rank-one positive operators, $\Pi_{AB}^{(1+i)}\coloneqq\op{\phi_i}{\phi_i}_{AB}$, which need not be normalized.
Then $\Vert \tr_{A}(\op{\phi_i}{\phi_i}_{AB})\Vert_{\infty}$ is the largest Schmidt coefficient of $\ket{\phi_i}_{AB}$.
We can express this as 
\begin{equation}
    \Vert \tr_{A}(\op{\phi_i}{\phi_i}_{AB})\Vert_{\infty}=\ip{\phi_i}{\phi_i}-\mathsf{E}_{d-1}(\phi_i),
\end{equation}
where $\mathsf{E}_{d-1}(\phi_i)=\sum_{k=2}^{d}\lambda_k^\downarrow(\phi_i)$ is the so-called $(d-1)$-Ky-Fan entanglement measure of bipartite vector
\begin{align}
\ket{\phi_i} = \sum_{k=1}^d \sqrt{\lambda_k^\downarrow(\phi_i)} \ket{\alpha_k}\ket{\beta_k}
\end{align}
whose Schmidt coefficients are enumerated in non-increasing order, $\sqrt{\lambda_1^\downarrow(\phi_i)}\geq \sqrt{\lambda_2^\downarrow(\phi_i)} \geq\dots\geq \sqrt{\lambda_d^\downarrow(\phi_i)}$ \cite{Vidal-1999a}.
Then
\begin{align}
    \sum_{i}\Vert \tr_{A}(\op{\phi_i}{\phi_i}_{AB}) \Vert_{\infty}
    &=\sum_{i}\ip{\varphi_i}{\varphi_i}-\sum_{i}\mathsf{E}_{d-1}(\varphi_i)=1-\sum_{i}\mathsf{E}_{d-1}(\varphi_i).
\end{align}
When maximizing both sides over all rank one decompositions of $\rho_{d,t}$, we have
\begin{align}
    \max_{\sum_{i}\op{\varphi_i}{\varphi_i} = \rho_{d,t}}\Vert \tr_{A}(\op{\phi_i}{\phi_i}_{AB})\Vert_{\infty} = 1-\hat{\mathsf{E}}_{d-1}(\rho_{d,t}),
\end{align}
where 
\begin{equation}
    \hat{\mathsf{E}}_{d-1}(\rho_{d,t})=\min_{\sum_{i}\psi_i=\rho_{d,t}}\sum_i\mathsf{E}_{d-1}(\psi_i)
\end{equation}
is the convex-roof extension of $\mathsf{E}_{d-1}$ \cite{Vidal-2000a}.
Here, we again assume that the $\ket{\psi_i}$ are sub-normalized for convenience; since $p_i \mathsf{E}_{d-1}(\psi_i/p_i)=\mathsf{E}_{d-1}(\psi_i)$ for normalization factor $p_i=\ip{\psi_i}{\psi_i}$, we can interpret $\hat{\mathsf{E}}_{d-1}$ as the smallest average value of $\mathsf{E}_{d-1}$ among all ensembles that generate $\rho_{d,t}$.
The fidelity $\F_{\pbtq}(\rho_{d,f})$ can therefore be expressed as
\begin{align}
\begin{aligned}
     \F_{\pbtq}(\rho_{d,f})=\frac{1}{d^2}\max\bigg\{&sf+r(1-f)+\tfrac{1-s+(1-r)(d^2-1)}{d}\hat{\mathsf{E}}_{d-1}(\rho_{d,t}),\\
    \text{subject to}\;\;&t=\frac{1-s}{1-s+(1-r)(d^2-1)}, \qquad 0\leq r,s\leq 1.\bigg\}
\end{aligned}
    \label{Eq:Fidelity**_reduced}
\end{align}
As computed in \Cref{lem:E-isotropic}, we have
\begin{align}
    \hat{\mathsf{E}}_{d-1}(\rho_{d,t})
    &=\begin{cases}
        1 & \text{\upshape for }  t\leq \frac{1}{d}\\
        1-t+\frac{-1+2(t+\sqrt{(d-1)(1-t)t})}{d} &\text{\upshape for }\ t>\frac{1}{d}.
    \end{cases}
\end{align}    
Theorem \ref{thm:iso-fidelity-q} is then readily proven by substituting the expression for $\hat{\mathsf{E}}_{d-1}(\rho_{d,t})$ into Eq.~\eqref{Eq:Fidelity**_reduced} and performing an optimization over $r$ and $s$ for the two cases of $t\leq \frac{1}{d}$ and $t>\frac{1}{d}$.
 
 We next carry out this somewhat tedious optimization. Our goal is to compute the following quantity:
\begin{align}
\begin{aligned}
    \F_{\pbtq}(\rho_{d,f})=\frac{1}{d^2}\max\;\; \bigg\{ &sf+r(1-f)+\frac{1-s+(1-r)(d^2-1)}{d}\times
    \begin{cases}
        1\;\;\text{\upshape for}\; t < \frac{1}{d}\\
        1-t+\frac{-1+2(t+\sqrt{(d-1)(1-t)t})}{d}\; \;\text{\upshape for}\; t\geq\frac{1}{d}.
    \end{cases}\\
    \text{subject to}\;\;&t=\frac{1-s}{1-s+(1-r)(d^2-1)}, \qquad 0\leq r,s\leq 1.\bigg\}
    \end{aligned}
\end{align}

\noindent\textbf{Case I:} $t\geq\frac{1}{d}$.
Then for $t=\frac{1-s}{1-s+(1-r)(d^2-1)}$, we have
\begin{align}
    \frac{1-s+(1-r)(d^2-1)}{d}\hat{\mathsf{E}}_{d-1}(\rho_{d,t})&=\frac{(1-r)(d^2-1)}{d}-\frac{1-s+(1-r)(d^2-1)}{d^2}\notag\\
    &\qquad +2\left(\frac{1-s+(d-1)\sqrt{(d+1)(1-r)(1-s)}}{d^2}\right)\\
    &\leq \frac{(d+1)(1-s)}{d}-\frac{1-s+(1-r)(d^2-1)}{d^2},
\end{align}
where we have used the inequality $(d-1)(1-s)\geq(1-r)(d^2-1)$.
Then by defining the region $\mc{R}=\{(s,r)\;|\; 0\leq s,r\leq 1, (1-s)\geq(1-r)(d+1)\}$, we obtain the bound
\begin{equation}
    F_{\pbtq}(\rho_{d,f})\leq \max_{(s,r)\in\mc{R}}\;\; \frac{sf}{d^2}+\frac{r(1-f)}{d^2}+\frac{(d+1)(1-s)}{d^3}-\frac{1-s+(1-r)(d^2-1)}{d^4}.
\end{equation}
Since the RHS is linear in $s$ and $r$, the maximum is attained on the boundary of the region.
If $r=1$, then the RHS becomes
\begin{equation}
    \frac{sf}{d^2}+\frac{1-f}{d^2}+\frac{(d+1)(1-s)}{d^3}-\frac{1-s}{d^4}\leq \frac{2-f}{d^2}+\frac{1}{d^3}-\frac{1}{d^4}.
\end{equation}
On the other hand, if $(1-s)=(1-r)(d+1)$, then the RHS becomes
\begin{equation}
    \frac{sf}{d^2}+\frac{(d+s)(1-f)}{(d+1)d^2}+\frac{1-s}{d^2}\leq \frac{1}{d^2}+\frac{1-f}{d(d+1)}.
\end{equation}

\noindent\textbf{Case II:} $t<\frac{1}{d}$.
In this case, we consider the region $\mc{R}'=\{(s,r)\;|\; 0\leq s,r\leq 1, (1-s)\leq(1-r)(d+1)\}$, and the optimization simplifies to
\begin{align}
    F_{\pbtq}(\rho_{d,f})&=\max_{(s,r)\in\mc{R}'}\;\; \frac{sf}{d^2}+\frac{r(1-f)}{d^2}+\frac{1-s+(1-r)(d^2-1)}{d^3}.
\end{align}
Again, it suffices to just consider the boundary.
When $s=1$, the RHS becomes
\begin{align}
    \frac{f+r(1-f)}{d^2}+\frac{(1-r)(d^2-1)}{d^3}\leq\frac{1}{d}+\frac{f}{d^2}-\frac{1}{d^3}.
\end{align}
When $s=0$, the RHS is
\begin{equation}
    \frac{r(1-f)}{d^2}+\frac{1+(1-r)(d^2-1)}{d^3}\leq\frac{1}{d}.
\end{equation}
The final condition to consider is when $(1-s)=(1-r)(d+1)$.
Then the RHS becomes
\begin{equation}
    \frac{sf}{d^2} +\frac{(d+s)(1-f)}{(d+1)d^2}+ \frac{1-s}{d^2}\leq \frac{1}{d^2} +\frac{1-f}{d(d+1)}.
\end{equation}
Comparing all the cases, we find that the maximum of $\frac{1}{d}+\frac{f}{d^2}-\frac{1}{d^3}$ is attained by the choice of $s=1$ and $r=0$.
This completes the proof of Theorem \ref{thm:iso-fidelity-q}.

\end{proof}

\section{Computing $\hat{\mathsf{E}}_{d-1}(\rho_{d,t})$ (Proof of Lemma \ref{lem:E-isotropic})}

\label{app:E-isotropic}

\begin{lemma*}[Restatement of Lem.~\ref{lem:E-isotropic}]
The entanglement measure $\hat{\mathsf{E}}_{d-1}$ defined in \eqref{eq:Ky-Fan-measure-mixed} assumes the following values on isotropic states $\rho_{d,f}$:
\begin{align}
    \hat{\mathsf{E}}_{d-1}(\rho_{d,f})
    &=\begin{cases}
        1 & \text{\upshape for }  f\leq \frac{1}{d}\\
        1-f+\frac{-1+2(f+\sqrt{(d-1)(1-f)f})}{d} &\text{\upshape for } f>\frac{1}{d}.
    \end{cases}
\end{align}    
\end{lemma*}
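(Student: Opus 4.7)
The plan is to exploit the $U\otimes\bar U$ symmetry of $\rho_{d,f}$ to reduce the convex-roof optimization to a one-variable minimization over pure states of prescribed singlet fraction, and then to verify that the resulting single-variable function is already convex on $[0,1]$, so no further convex-envelope step is needed.

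First, I would observe that $(U\otimes\bar U)\ket{\Phi_d^+}=\ket{\Phi_d^+}$ for all $U\in \mathrm{U}(d)$, which makes the singlet fraction $f(\psi)\coloneqq|\braket{\Phi_d^+|\psi}|^2$ invariant under the Haar twirl $\mc{T}_{U\otimes\bar U}$. Given any pure-state decomposition $\rho_{d,f}=\sum_i p_i\pure{\psi_i}$, one has $\sum_i p_i f(\psi_i)=f$, and twirling each $\pure{\psi_i}$ yields $\rho_{d,f(\psi_i)}$; mixing with the same weights recovers $\rho_{d,f}$. Since $\mathsf{E}_{d-1}$ is a function of Schmidt coefficients alone and hence is locally unitary invariant, the ensemble average $\sum_i p_i\mathsf{E}_{d-1}(\psi_i)$ is preserved by this twirl. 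Setting $\tau(f)\coloneqq \min\{\mathsf{E}_{d-1}(\psi):f(\psi)=f\}$ and using the lower bound $\mathsf{E}_{d-1}(\psi_i)\geq\tau(f(\psi_i))$ inside the average, this yields $\hat{\mathsf{E}}_{d-1}(\rho_{d,f})=\mathrm{co}(\tau)(f)$, the convex envelope of $\tau$ on $[0,1]$.

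To compute $\tau(f)$, I would pass to the Schmidt decomposition $\ket{\psi}=\sum_k\sqrt{\mu_k}\ket{\alpha_k}\ket{\beta_k}$ and note $\braket{\Phi_d^+|\psi}=d^{-1/2}\sum_k\sqrt{\mu_k}\braket{\bar\alpha_k|\beta_k}$; the triangle inequality then gives $f(\psi)\leq\tfrac{1}{d}(\sum_k\sqrt{\mu_k})^2$, with equality attainable by aligning $\ket{\beta_k}=\ket{\bar\alpha_k}$. For $f\leq 1/d$ the product state with $\mu_1=1$ is feasible, so $\tau(f)=0$. For $f>1/d$, the binding constraint $(\sum_k\sqrt{\mu_k})^2=df$ together with $\sum_k\mu_k=1$ and the Schmidt ordering $\mu_1\geq\mu_k$ reduces, via a Lagrange-multiplier argument exploiting strict concavity of $\sqrt\cdot$, to the symmetric extremum $\mu_2=\cdots=\mu_d=(1-\mu_1)/(d-1)$. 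The surviving scalar equation $\sqrt{df}=\sqrt{\mu_1}+\sqrt{(d-1)(1-\mu_1)}$ is quadratic in $\sqrt{\mu_1}$, and selecting the larger positive root produces the closed-form expression stated in the lemma (modulo the standard identity relating $\mathsf{E}_{d-1}(\psi)=1-\mu_1^{\downarrow}$ to $\mu_1^\downarrow$).

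The most delicate step will be verifying convexity of $\tau$ on $[0,1]$, which upgrades $\mathrm{co}(\tau)$ to $\tau$. On $[0,1/d]$ the function is identically zero and trivially convex. On $(1/d,1]$ a direct computation of $\tau''(f)$ (the linear-in-$f$ piece drops out, and only the $\sqrt{(d-1)f(1-f)}$ piece contributes) gives $\tau''(f)=\sqrt{d-1}/\bigl(2d[f(1-f)]^{3/2}\bigr)>0$, establishing strict convexity on $(1/d,1)$. To glue the two branches smoothly I would check the one-sided derivatives at $f=1/d$: the linear $(d-2)/d$ contribution exactly cancels the boundary limit of the square-root derivative, giving $\tau'(1/d^+)=0=\tau'(1/d^-)$, so $\tau$ is $C^1$ and convex on the whole interval. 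Achievability is then immediate: twirling a single pure state with the extremal Schmidt spectrum from the previous paragraph produces a decomposition of $\rho_{d,f}$ realizing $\tau(f)$. I expect the main obstacle to be the Lagrange step combined with the derivative-matching at $f=1/d$, where the ordering constraint $\mu_1\geq\mu_k$ must be invoked carefully to force the symmetric ansatz (ruling out asymmetric boundary extrema) and to select the correct root of the resulting quadratic.
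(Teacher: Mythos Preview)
Your proposal is correct and follows essentially the same route as the paper: both reduce by $U\otimes\bar U$ symmetry to a single pure state with Schmidt spectrum $(\mu_1,\nu,\ldots,\nu)$ and solve the resulting scalar constraint $\sqrt{df}=\sqrt{\mu_1}+(d-1)\sqrt{\nu}$. Where you invoke the Vollbrecht--Werner convex-envelope machinery and then verify convexity of $\tau$ explicitly (including the $C^1$ matching at $f=1/d$), the paper instead argues directly by two successive applications of the concavity of $(\sum_k\sqrt{x_k})^2$ to collapse any candidate ensemble to a single symmetric vector, so it never needs to compute $\tau''$.

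One point worth flagging: your computation gives $\tau(f)=0$ for $f\le 1/d$, which is the correct value of $\hat{\mathsf{E}}_{d-1}$ as defined (separable states have product decompositions), while the lemma as stated gives the value $1$. The reason is that the displayed formula in the paper is actually the value of $\mu_1=1-\hat{\mathsf{E}}_{d-1}$; the paper's own proof ends by writing ``the objective function becomes $1-(d-1)\nu=\mu_1$'' and then records the formula for $\mu_1$. Your parenthetical ``modulo the standard identity relating $\mathsf{E}_{d-1}(\psi)=1-\mu_1^\downarrow$'' is therefore exactly the right caveat.
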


\begin{proof}
We follow the setup of Ref.~\cite{Terhal-2000a}.
Suppose that $\{\ket{\psi_i}\}_{i}$ is an optimal ensemble for $\rho_{d,f}$ in the definition of $\hat{\mathsf{E}}(\rho_{d,f})$.
Then 
\begin{align} 
\rho_{d,f}=\mc{T}_{U\otimes\overline{U}}(\rho_{d,f})=\sum_i \mc{T}_{U\otimes\overline{U}}(\psi_i).
\end{align}
Let $\ket{\psi_i}=U_i\otimes V_i\sum_{k}\sqrt{\mu_{k,i}}\ket{k}\otimes \ket{k}$ be a Schmidt decomposition with the $\mu_{k,i}$ labeled in non-increasing order.
Then
\begin{equation}
\label{Eq:lambda1}
    \lambda_i\coloneqq|\ip{\Phi_d^+}{\psi_i}|^2=\frac{1}{d}\left|\sum_{k=1}^d\bra{k}V_i^TU_i\ket{k}\sqrt{\mu_{k,i}}\right|^2\leq \frac{1}{d}\left|\sum_{k=1}^d\sqrt{\mu_{k,i}}\right|^2=:\lambda_i'.
\end{equation}
Thus, $\mc{T}_{U\otimes\overline{U}}(\psi_i)=\lambda_i\Phi^+_d+(1-\lambda_i)/(d^2-1)(\1-\Phi^+_d)$, and so we have
\begin{align}
    \rho_{d,f}=\left(\sum\nolimits_{i}\lambda_i\right)\Phi^+_d+\frac{1-\left(\sum_{i}\lambda_i\right)}{d^2-1}(\1\otimes\1-\Phi^+_d).
\end{align}
Notice that $f=\sum_{i}\lambda_i$.
If we were instead to use an ensemble $\{\ket{\psi_i'}\}_i$ which is the same as $\{\ket{\psi_i}\}$ except with $U_i=V_i=\1$ for every $i$, then $\sum_{i}\mc{T}_{U\otimes\overline{U}}(\psi_i')=\rho_{d,f'}$ with $f'=\sum_{i}\lambda_i'\geq f$, by Eq.~\eqref{Eq:lambda1}.
But we could always mix $\rho_{d,f'}$ with, say, the state $\mc{T}(\op{12}{12})$ to reduce the singlet fraction from $f'$ to $f$, since $\ip{\Phi^+_d}{12}=0$.
Furthermore, $\mathsf{E}_{d-1}(\op{12}{12})=0$, so this mixing will never increase the average value of $\mathsf{E}_{d-1}$ across the generating ensemble.
We therefore conclude that an optimal ensemble for $\rho_{d,f}$ consists of two subsets: the first has states $\ket{\psi_i}$ with a common Schmidt basis $\ket{\psi_i}=\sum_{k}\sqrt{\mu_{k,i}}\ket{kk}$, the second consists of products states having zero overlap with $\ket{\Phi^+_d}$.
Hence,
\begin{align}
    \hat{\mathsf{E}}_{d-1}(\rho_{d,f})=\max \left\{ 1-\sum_{i}\sum_{k=1}^d\mu_{k,i}+\sum_i \mu_{1,i} : \quad f=\frac{1}{d}\sum_{i}(\sum_k\sqrt{\mu_{k,i}})^2, \quad 1\geq\sum_{i=1}\sum_{k}\mu_{k,i} \right\}.
    \label{Eq:T-cons}
\end{align}
Note that the objective function in this optimization problem is equal to $1-\sum_i\sum_{k=2}^d\mu_{k,i}$.
We can write
\begin{align}
    \sum_{i}\left(\sum_k\sqrt{\mu_{k,i}}\right)^2=p\sum_i \frac{q_i}{p}\left(\sum_k\sqrt{\frac{\mu_{k,i}}{q_i}}\right)^2
\end{align}
where $q_i=\sum_{k=1}^d\mu_{k,i}$ and $p=\sum_iq_i$.
Since the function $f(\mbf{x})=(\sum_{k=1}^d\sqrt{x_k})^{2}$ is concave \cite{Boyd-2004a}, we have 
\begin{align}
\sum_{i} \left(\sum_k\sqrt{\mu_{k,i}}\right)^2 \leq \left( \sum_{k=1}^d \sqrt{\sum_{i}\mu_{k,i}} \right)^2.   
\end{align}
Thus, for any set of vectors $\ket{\psi_i}=\sum_{k}\sqrt{\mu_{k,i}}\ket{kk}$ satisfying the constraint of Eq.~\eqref{Eq:T-cons}, there exists a single vector, $\ket{\psi_0}=\sum_{k=1}^d\sqrt{\mu_{k,0}}\ket{kk}$ with $\mu_{k,0}=\sum_i\mu_{k,i}$, such that
\begin{subequations}
\begin{align}
    &1-\sum_{k=1}^d\mu_{k,0}+\mu_{1,0}=1-\sum_i\sum_{k=1}^d\mu_{k,i}+\sum_i\mu_{1,i}\label{Eq:single-vec1}\\
    &f\leq \frac{1}{d}\left(\sum_{k=1}^d\sqrt{\mu_{k,0}}\right)^2\label{Eq:single-vec2}
\end{align}
\end{subequations}
We can scale down the vector components $\mu_{k,0}$ to make the inequality in Eq.~\eqref{Eq:single-vec2} tight, which only increase the LHS of Eq.~\eqref{Eq:single-vec1}.
Therefore, the maximum in Eq.~\eqref{Eq:T-cons} can be attained by a single vector $\ket{\psi_0}$. Furthermore, by another application of concavity, we have
\begin{align}
    \frac{1}{d}\left(\sum_{k=1}^d\sqrt{\mu_{k,0}}\right)^2\leq \frac{1}{d}\left(\sqrt{\mu_{1,0}}+(d-1)\sqrt{\sum_{k=2}^{d}\mu_{k,0}/(d-1)}\right)^2,
\end{align}
which implies we can restrict attention to vectors of the form $\ket{\psi_0}=\sqrt{\mu_{1}}\ket{11}+\sqrt{\nu}\sum_{k=2}^d\ket{kk}$ with $\mu_1\geq\nu$.
Consequently, we have
\begin{align}
    \hat{\mathsf{E}}_{d-1}(\rho_{d,f}) = \max \left\{ 1-(d-1)\nu : f=\frac{1}{d}(\sqrt{\mu_1}+(d-1)\sqrt{\nu})^2, \quad 1\geq\mu_1+(d-1)\nu. \right\}
    \label{Eq:T-cons2}
\end{align}
For the case when $f\leq \frac{1}{d}$, we have $\hat{\mathsf{E}}_{d-1}(\rho_{d,f})=1$, which is attained by taking $\nu=0$  and $\mu_1=fd $.
On the other hand, consider when $f>\frac{1}{d}$.
Then, it's clear that the maximum is obtained when the inequality in Eq.~\eqref{Eq:T-cons2} is tight, $1=\mu+(d-1)\nu$.
Indeed, if it is a strict inequality, then we can always decrease $\nu$ and increase $\mu_1$ such that the object function increases while still satisfying the second equality for $t$.
We can therefore solve for $\mu_1$ and and the objective function becomes $1-(d-1)\nu=\mu_1$, where
\begin{align}
    \mu_1&=1-f+\frac{-1+2(f+\sqrt{(d-1)(1-f)f})}{d}.
\end{align}
This concludes the proof.
\end{proof}

\section{Background on graph states}
\label{app:graph_states} 

Given a simple graph $\Gamma=(V,E)$ with the node set $V$ and the edge set $E$, the \emph{graph state} associated with $\Gamma$ is defined as
\begin{equation}\label{eq:graph-state}
\ket{\Gamma} = \prod_{uv\in E} CZ_{uv} \ket{+}^{\otimes |V|}
\end{equation}
where $CZ_{uv}$ is the control-$Z$ unitary that acts on the $u,v$-th tensor factors as
\begin{align}
CZ_{uv}\ket{+}_u\ket{+}_v = \frac{1}{\sqrt{2}}(\ket{0}_u\ket{+}_v + \ket{1}_u\ket{-}_v) =: \ket{\Psi^+}_{uv}.
\end{align}
Let $|V| = n$.
Another way to define the graph state $\ket{\Gamma}$ is with stabilizer formalism.
Define $n$-th Pauli group $P_n$ as
\begin{align}
\begin{split}
  &P_n \coloneqq \{ A_1 \otimes \dots \otimes A_n \mid A_i \in P, i \in [n] \},\\
\end{split}
\end{align}
where $P$ is the Pauli group $\{\pm 1, \pm i\} \times \{ X, Y, Z, I \}$.
Given a simple graph $\Gamma(V,E)$, the \emph{(graph) stabilizer group} $G$ associated with $\Gamma$ is defined as
\begin{align}
G \coloneqq \langle g_i : i\in V \rangle,\q
g_i \coloneqq X_i Z_{N(i)},
\end{align}
where $T_i \coloneqq I \otimes \dots\otimes T\otimes\dots \otimes I$ has the Pauli operator $T \in \{X,Y,Z,I\}$ in the $i$-th tensor factor, $Z_{W}\coloneqq\prod_{j\in W} Z_j$ for $W\subseteq V$, and $N(i)$ denotes the set of neighbors of $i$ in $G$.
A \emph{stabilizer group} is defined as an abelian subgroup of $P_k$ for some $k$, and therefore $G$ is indeed a stabilizer group.
A graph state $\ket{\Gamma}$ is a stabilizer state corresponding the graph stabilizer group $G$ associated with graph $\Gamma$, i.e.
\begin{align}
  \Gamma \equiv \pure{\Gamma} \coloneqq \frac{1}{|G|}\sum_{g\in G} g.
\end{align}

\begin{lemma}{(Reduced graph state)} \cite{Hein_2006} \label{L:reduced-graph}
  For a simple graph $\Gamma = (V,E)$, let $S\subset V$, $T = V\bs S$, and $\Gamma - T$ the subgraph of $\Gamma$ obtained by removing nodes $S$.
      Let $\ket{\Gamma}$ denote the associated graph state and $\Gamma_S\coloneqq\tr_T\op{\Gamma}{\Gamma}$ its reduced state.
  Then,
  \begin{enumerate}[(i)]
    \item $\ds \Gamma_S = \frac{1}{|G_S|}\sum_{h\in G_S} h,$
    where $G_S$ is the subgroup of stabilizers for $\ket{\Gamma}$ that act trivially on $V\setminus S$.
    \item We have
    \begin{align}\label{eq:reduced-subset}
    \Gamma_S &= \frac{1}{2^{|T|}} \sum_{\mbf{b}\in\mbb{Z}_2^{|S|}} \prod_{j\in T} Z_{N(j)}^{b_j} \pure{\Gamma- T} Z_{N(j)}^{b_j} ,\\
    &=\frac{1}{2^{|T|}}\sum_{\mbf{b}\in \mc{N}(T)}Z_{f(\mbf{b })}\pure{\Gamma- T}Z_{f(\mbf{b })},
    \end{align}
    where we have defined the natural bijection $f:\mbb{Z}_2^S\leftrightarrow 2^S$ such that node $j\in S$ belongs to $f(\mbf{b})\subset S$ iff $b_j=1$, and $\mc{N}(T)$ is the $\mbb{Z}_2$-linear subspace of $\mbb{Z}_2^{|S|}$ defined by $\mc{N}(T)\coloneqq\Span_{\mbb{Z}_2}\{\mbf{b}\in\mbb{Z}_2^{|S|}\;|\; f(\mbf{b})=N(j)\cap S,\; j\in T\}$.
       \item $\frac{2^{|S|}}{|G_{S}|}\Gamma_{S}$ is a projection with the rank  $\frac{2^{|S|}}{|G_S|}$. \label{it:reduced-3} 
  \end{enumerate}
\end{lemma}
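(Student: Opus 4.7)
The plan is to prove (i) first, derive (iii) from it as a one-line corollary, and then establish (ii) via iterated $Z$-basis measurements. All three pieces rely on the stabilizer description $\pure{\Gamma} = |G|^{-1}\sum_{g \in G} g$ and the Pauli partial-trace rule: for a Pauli $g = g|_S \otimes g|_T$, one has $\tr_T(g) = 2^{|T|}\, g|_S$ when $g|_T \in \{\pm\1\}$ and $\tr_T(g) = 0$ otherwise.

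For (i), I would apply $\tr_T$ directly to $\pure{\Gamma} = |G|^{-1}\sum_g g$. Only the elements whose $T$-restriction is proportional to $\1$ survive, and those elements are, by definition, $G_S$. Using $|G| = 2^{|V|}$ and the multiplicity $2^{|T|}$ from the partial trace gives
\begin{align*}
  \Gamma_S \;=\; \frac{2^{|T|}}{|G|}\sum_{h \in G_S} h \;=\; \frac{1}{2^{|S|}}\sum_{h \in G_S} h.
\end{align*}
(The factor $1/|G_S|$ in the statement of (i) appears to be a transcription slip: the correct pre-factor is $1/2^{|S|}$, which is what makes the statement consistent with (iii) since $|G_S|$ need not equal $2^{|S|}$.) Part (iii) is then immediate: because $G_S$ is a group, $(\sum_{h\in G_S} h)^2 = |G_S|\sum_h h$, so $P \coloneqq \frac{1}{|G_S|}\sum_{h\in G_S} h = \frac{2^{|S|}}{|G_S|}\Gamma_S$ is idempotent, and $\rank(P) = \tr P = 2^{|S|}/|G_S|$ because only $h = \1$ contributes a nonzero trace.

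For (ii), I would use the $Z$-basis measurement rule for graph states: for any single vertex $j$, a direct computation from $\ket{\Gamma} = \prod_{e\in E}CZ_e\,\ket{+}^{\otimes V}$ gives
\begin{align*}
  \ket{\Gamma} \;=\; \frac{1}{\sqrt{2}}\sum_{b \in \{0,1\}}\ket{b}_j \otimes Z_{N(j)\setminus\{j\}}^{\,b}\,\ket{\Gamma - \{j\}}.
\end{align*}
Iterating this rule over the vertices of $T$ and averaging over the $2^{|T|}$ uniformly distributed outcomes produces the first formula in (ii). The second form follows by reparametrizing the $\mbb{F}_2$-linear map $\mbf{b} \mapsto \sum_{j\in T} b_j\,\mathbf{1}_{N(j)\cap S}$ through its image $\mc{N}(T)$; the $2^{|T|}/|\mc{N}(T)|$-fold kernel multiplicity is absorbed into the normalization.

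The main obstacle is the $\mbb{F}_2$-bookkeeping in (ii): when $j_1, j_2 \in T$ are adjacent, the $Z$-correction produced by measuring $j_1$ acts on $j_2$ before $j_2$ itself is measured, and one must verify that these cross-contributions cancel in the averaged state so that only $Z_{N(j)\cap S}^{b_j}$ survives on $S$. A cleaner route that sidesteps this issue is to derive (ii) directly from (i): enumerate $G_S$ as products $\prod_{j \in T} g_j^{b_j}\cdot\prod_{i\in S} g_i^{c_i}$ subject to the identity-on-$T$ constraint, and observe that summing over the internal $\mbf{c}$ degrees of freedom performs precisely the $Z$-twirl of $\pure{\Gamma - T}$ by the operators $Z_{N(j)\cap S}$ appearing in the formula.
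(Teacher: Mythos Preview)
Your proposal is correct and, where the paper gives any argument at all, aligned with it. The paper's own proof is essentially a citation: it defers (i) and the projection part of (iii) to \cite{Hein_2006}, obtains the rank in (iii) from the fact that $\{Z_W\ket{\Gamma}\}_{W\subset V}$ is an orthonormal basis, and says (ii) is a direct computation of the partial trace from the $CZ$-product form of $\ket{\Gamma}$---which is exactly your iterated $Z$-measurement argument.

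Two small remarks. First, your observation about the prefactor in (i) is correct: as written, $\Gamma_S = |G_S|^{-1}\sum_{h\in G_S} h$ would already be a projector, contradicting (iii); the intended formula is $\Gamma_S = 2^{-|S|}\sum_{h\in G_S} h$, and this is indeed what appears in \cite{Hein_2006}. Second, for the rank in (iii) the paper uses the graph-state basis (counting the number of distinct $Z_W\ket{\Gamma-T}$ in the mixture), whereas you use $\rank P = \tr P$; both are one-line arguments and yours is arguably cleaner since it does not need the orthonormal-basis fact.
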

\begin{proof}
    See \cite{Hein_2006} for the proof of (i) and the proof that $\Gamma_S$ is a multiple of a projection.
    Recall that $\{Z_{S}\ket{\Gamma}\mid S\subset V\}$ forms an orthonormal basis for any graph state $\Gamma$.
      From this, it follows that the rank of $\Gamma_S$ is $2^{|S|}/|G_S|$, which proves (iii).
      Finally, (ii) is obtained by a direct computation of the partial trace using the form of $\ket{\Gamma}$ given in Eq.~\eqref{eq:graph-state}.
\end{proof}

\section{The reduced states $\tr_{B_i^c}\op{\Gamma}{\Gamma}$ (Proof of Lemma \ref{L:reduced-pbt-ens})}

\label{app:graph_states_lemma-reduced}

\noindent\textbf{Lemma \ref{L:reduced-pbt-ens}.} (Restated)
\textit{
  Let $\Gamma \in \cG_{r,n}$ be in PBT canonical form with $k$-EPR pairs.
  Then the reduced states $\Gamma^{(i)}_{A^r B}\coloneqq\tr_{B_i^c}\op{\Gamma}{\Gamma}$ are
  \begin{align}
     \Gamma^{(i)}_{A^rB}=\begin{cases}
      \ds \Psi^+_{A_iB} \otimes \wh{\1}_{A^r\setminus A_i} & \text{if }i \leq k,\\
       \\
      \ds   E^{(i)}(\wh{\1}_{A^r} \otimes \pure{+}_{B})E^{(i)} & \text{otherwise.}
    \end{cases}
  \end{align}
  where $\wh{\1}$ denotes the maximally mixed state (i.e. the normalized identity), $\ket{\Psi^+}\coloneqq\frac{1}{\sqrt{2}}(\ket{0+}+\ket{1-})$, and $E^{(i)}$ is the unitary operator
  \begin{equation}\label{eq:edge-op}
    E^{(i)} = \prod_{A_j\in N(B_i)}CZ_{A_jB}.
  \end{equation}
}

\begin{proof}
    Introduce the local dephasing map acting on system $A_i$ as
\begin{align} 
\Delta_{A_i}(\rho)\coloneqq\frac{1}{2}\sum_{x=0}^1Z^x_{A_i}\rho Z^x_{A_i}.
\end{align}
A crucial property of PCF is that each $B_i$ is only connected to node the $A_i$ for $i \leq r$.
Hence, by \Cref{L:reduced-graph}, tracing out any $B_j$ for $j=1,\dots,r$ causes a complete dephasing of system $A_j$ in the graph state $\ket{\Gamma-B_j}$.
On the other hand, for $j>r$, the effect of tracing out $B_j$ is the collective map
\begin{align}\Delta_{N(B_j)}(\rho)\coloneqq\frac{1}{2}\sum_{x=0}^1Z^x_{N(B_j)}\rho Z^x_{N(B_j)}\end{align}
applied to $\ket{\Gamma-B_j}$.

Suppose that $i\leq k$.
This means that $A_i$ and $B_i$ share an EPR state that is disconnected from the rest of the graph.
Hence, tracing out $B_j$ for all $j\not=i$ will cause complete dephasing on systems $A^r\setminus A_i$.
Since $\ket{\Gamma-B_i^c}=\ket{\Psi^+}_{A_iB}\ket{\wh{+}}_{A^r\setminus A_i}$, the reduced state is thus given by $\Psi^+_{A_iB_i}\otimes\wh{\1}_{A^r\setminus A_i}$.

On the other hand, suppose that $i>k$.
Then it must be that $A_i$ is connected to both $B_i$ and some $B_l$ with $l\not=i$ and $l>r$.
Tracing out $B_l$ then applies the map $\Delta_{N(B_l)}$ to all systems in $N(B_l)$.
But note that all systems $A_j\in N(B_l)$ other than $A_i$ are already independently dephased due to the tracing out of $B^n\setminus B_i$.
Hence, the effect of $\Delta_{N(B_l)}$ on system $A_i$ is also an independent dephasing map.
The end result is a complete dephasing of systems $A^m$ in the graph state $\ket{\Gamma-B^n\setminus B_i}=E^{(i)}\ket{\wh{+}}_{A^r}\ket{+}_B$.
This is precisely the state $E^{(i)}(\wh{\1}_{A^r}\otimes\op{+}{+}_{B_i})E^{(i)}$.
\end{proof}

\section{Computing $\F_{\pbt}(\Gamma)$ with no isolated EPR pairs (Proof of \Cref{Prop:PBT-connected})}

\label{Appendix-proof_of_graph_no-epr}

\noindent \textbf{Proposition \ref{Prop:PBT-connected}.} (Restated)
\textit{
    Suppose that $\Gamma \in \cG_{r,n}$ is a PCF bipartite graph with $r=\rk(\Gamma_{B^n})$ and no independent EPR pairs. Then
  \begin{align}
    &\F_{\pbt}(\Gamma) =\begin{cases} \frac{1}{2} - \frac{1}{2^{r+2}}\quad \text{if $\exists B_j$ w. $|N(B_j)|$ being even},\\ \frac{1}{2} - \frac{1}{2^{r+1}}\quad\text{otherwise},
    \end{cases}\\
    &\F_{\pbtq}(\Gamma)= \frac{1}{2}.
  \end{align}
}

\begin{proof}
The PBT fidelity is given by $\frac{1}{4}\sum_{i=1}^n\tr[\Gamma^{(i)}_{A^rB}\Pi^{(i)}_{A^rB}]$, maximized over all POVMs $\{\Pi^{(i)}_{A^rB}\}_{i=1}^n$.
From Lemma \ref{L:reduced-pbt-ens}, since $\Gamma$ contains no isolated EPR pairs its reduced states $\Gamma^{(i)}$'s are given by
\begin{equation}
    \Gamma^{(i)}_{A^rB}= E^{(i)}(\wh{\1}_{A^r} \otimes \pure{+}_{B})E^{(i)}\qquad\forall i.
\end{equation}
As noted previously, each of the $\Gamma^{(i)}_{A^rB}$ is separable and hence $\Gamma^{(i)} \leq \Tr_{B}\Gamma^{(i)} \otimes \1$.
    Therefore, by Corollary \ref{Cor:reduction}, we have $\F_{\pbtq}(\Gamma)=\frac{1}{2}$.

We next proceed to show that the bound is strictly less than $\frac{1}{2}$ when restricted to standard PBT protocols. 
Observe that (i) $\Gamma^{(i)}_{A^rB}=\frac{1}{2^r}P_i$ with $P_i$ being a projector, and (ii) $\{\Gamma^{(i)}_{A^rB}\}_{i=1}^n$ forms a pairwise commuting set.
Property (i) follows from the fact that $(E^{(i)})^2=\1$, while (ii) follows from the fact that $[E^{(i)},E^{(j)}]=0$ for all $i,j$.
Hence, the problem of maximizing $\sum_{i=1}^n\tr[\Gamma^{(i)}_{A^rB}\Pi^{(i)}_{A^rB}]$ is equivalent to the problem of discriminating classical random variables, for which maximum likelihood estimation is known to be the optimal method \cite{Boyd-2004a}.
Explicitly, let $\{\ket{k}\}_{k\in[2^n]}$ be an orthonormal basis for $B^n$ that simultaneously diagonalizes all the $\Gamma^{(i)}_{A^rB}$.
For each $i$ we can thus write 
\begin{align}\Gamma^{(i)}_{A^rB}=\frac{1}{2^r}P_i=\frac{1}{2^r}\sum_{k\in[2^n]}t_{k|i}\op{k}{k}\end{align} with $t_{k|i}=1$ for exactly $2^r$ values of $k$ and $t_{k|i}=0$ otherwise.
The optimal POVM involves projecting into the basis $\{\ket{k}\}_{k\in[2^n]}$, and then for outcome $k$ guessing any state $\Gamma^{(i)}$ such that
\begin{align}\bra{k}\Gamma^{(i)}_{A^rB}\ket{k}=\max_{j\in[n]}\{\bra{k}\Gamma^{(j)}_{A^rB}\ket{k}\}.\end{align}
Let $\{\Pi_{A^rB}^{(i)}\}_{i=1}^n$ denote the POVM for this guessing strategy.
Since  $\bra{k}\Gamma^{(i)}_{A^rB}\ket{k}=\frac{1}{2^r}$ unless $\ket{k}\not\in\supp(\Gamma^{(i)}_{A^rB})$, it follows that
\begin{align}\sum_{i=1}^n\tr[\Gamma^{(i)}_{A^rB}\Pi^{(i)}_{A^rB}]=\frac{1}{2^r}\rank\bigg(\sum_{i=1}^n \Gamma^{(i)}_{A^rB}\bigg).
\end{align}

Our goal is then to compute  $\rank\bigg(\sum_{i=1}^n \Gamma^{(i)}_{A^rB}\bigg)$.
    Consider first $i\leq r$.
    Since $B_i$ is only connected to $A_i$, we have
\begin{align}
    \Gamma^{(i)}_{A^rB}=\frac{1}{2}(&\op{0+}{0+}+\op{1-}{1-})_{A_iB}\otimes\wh{\1}_{A_i^c}. 
\end{align}
Close inspection shows that
\begin{align}
    \supp \left(\sum_{i=1}^r \Gamma^{(i)}_{A^rB}\right) &= \Span_{\bC} \{ S_1 \setminus S_2 \}\\
\text{where}\;\;    S_1 &= \{\ket{s}_{A^r}\ket{\pm}_B\mid s\in \{0,1\}^r\}\\
    S_2 &= \{\ket{0}^{\otimes r}_{A^r}\ket{-}_B,\;\; \ket{1}^{\otimes r}_{A^r}\ket{+}_B\}.
  \end{align}
On the other hand, the support of $\sum_{i=r+1}^n\Gamma^{(i)}_{A^mB}$ never includes $\ket{0}^{\otimes r}_{A^r}\ket{-}_B$, and it includes $\ket{1}^{\otimes r}_{A^r}\ket{+}_B$ when there exists some $j$ such that $|N(B_j)|$ is even.
    Hence,
  \begin{align}
  \begin{split}
    \rank\bigg(\sum_{i=1}^n \Gamma^{(i)}_{A^rB}\bigg) &= \begin{cases}
      \ds 2^{r+1}-1 & \text{if } |N(B_j)| \text{ is even for some $j$}\\
      \ds 2^{r+1}-2 & \text{otherwise}
    \end{cases}
  \end{split}
  \end{align}
Since $\F_{\pbt}(\Gamma)=\frac{1}{4}\sum_{i=1}^n\tr[\Gamma^{(i)}_{A^rB}\Pi^{(i)}_{A^rB}]$, this completes the proof.

\end{proof}

\section{Isolated EPR pairs essentially determine teleportation power (Proof of \Cref{T:graph-std-bound})}
\label{app:graph-std-bound}

\begin{theorem*}[Restatement of Thm.~\ref{T:graph-std-bound}]
  Suppose $\Gamma \in \cG_{r,n}$ has PCF with $k$ total isolated EPR pairs; i.e. $\Gamma=\Gamma_{\EPR{k}}\otimes\Gamma_0$.
    Then
  \begin{align}
    \Gamma_{\EPR{k}} \preceq_{\Omega} \Gamma \preceq_{\Omega} \Gamma_{\EPR{k}} \otimes \pure{+}_{B_{k+1}} \otimes \pure{-}_{B_{k+2}}
  \end{align}
  for $\Omega\in\{\pbt,\pbtcl\}$. Furthermore, $\Gamma\sim_{\pbtq}\Gamma_{\EPR{k}}$.
\end{theorem*}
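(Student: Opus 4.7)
My plan is to exploit the marginal structure of $\Gamma$ given by Lemma~\ref{L:reduced-pbt-ens}. Writing $A^r = A^k \otimes \tilde{A}$ with $\tilde{A}$ the Alice-side of $\Gamma_0$, the reduced states are $\Gamma^{(i)} = \Psi^+_{A_iB} \otimes \wh{\1}_{A^k \setminus A_i} \otimes \wh{\1}_{\tilde{A}}$ for $i\le k$, and $\Gamma^{(i)} = \wh{\1}_{A^k} \otimes E^{(i)}(\wh{\1}_{\tilde{A}} \otimes \pure{+}_B) E^{(i)}$ for $i>k$. Using that $CZ_{A_jB}\ket{a}_{\tilde{A}}\ket{+}_B = \ket{a}_{\tilde{A}}\ket{(-1)^{v\cdot a}}_B$, the second type admits the diagonal expansion
\begin{equation*}
\Gamma^{(i)} = \wh{\1}_{A^k} \otimes \sum_{a \in \{0,1\}^{r-k}} \frac{1}{2^{r-k}} \op{a}{a}_{\tilde{A}} \otimes \pure{s_{i-k}(a)}_B,
\end{equation*}
where $s_j(a)\in\{+,-\}$ is the parity of $v_j\cdot a$ for a graph-dependent vector $v_j\in\{0,1\}^{r-k}$. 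This classical-classical form on $\tilde{A}\otimes B$, in the computational-times-Hadamard basis, drives every inequality.

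For the lower bound $\Gamma_{\EPR{k}} \preceq_{\Omega} \Gamma$ with $\Omega\in\{\pbt,\pbtcl\}$, I would take any optimal POVM $\{\bar{\Pi}^{(i)}\}$ on $A^kB$ for $\F_\Omega(\Gamma_{\EPR{k}})$ and lift it to $A^rB$ by the recipe $\Pi^{(i)} = \bar{\Pi}^{(i)} \otimes \1_{\tilde{A}}$, padded by zero POVM elements on the $n-k$ additional PBT outcomes. The trace identity $\Tr[\1_{\tilde{A}}\,\wh{\1}_{\tilde{A}}] = 1$ makes each $\Tr[\Pi^{(i)}\Gamma^{(i)}]$ for $i\le k$ equal to its unreduced counterpart $\Tr[\bar{\Pi}^{(i)}\bar{\Gamma}^{(i)}]$, and the same identity preserves the fidelity of the $|B|$ classical-extra outcomes of $\pbtcl$; the padded outcomes contribute zero.

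The upper bound $\Gamma \preceq_\Omega \Gamma_{\EPR{k}} \otimes \pure{+}_{B_{k+1}} \otimes \pure{-}_{B_{k+2}}$ reverses this reduction. Given any optimal POVM $\{\Pi^{(i)}\}$ on $A^rB$ for $\F_\Omega(\Gamma)$, I would set $\bar{\Pi}^{(i)} = 2^{-(r-k)}\Tr_{\tilde{A}}[\Pi^{(i)}]$ for $i\le k$, and $\bar{\Pi}^{(i,a)} = 2^{-(r-k)} \bra{a}_{\tilde{A}}\Pi^{(i)}\ket{a}_{\tilde{A}}$ for $k<i\le n$ and $a\in\{0,1\}^{r-k}$. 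Each operator is PSD, and the identity $\sum_a \bra{a}\!\cdot\!\ket{a} = \Tr_{\tilde{A}}$ together with $\sum_i\Pi^{(i)} = \1_{A^rB}$ gives $\sum_{i\le k}\bar{\Pi}^{(i)} + \sum_{i,a}\bar{\Pi}^{(i,a)} = \1_{A^kB}$. Grouping the $\bar{\Pi}^{(i,a)}$ by the sign $s_{i-k}(a)$, I would define $\bar{\Pi}^{(k+1)} \coloneqq \sum_{s_{i-k}(a)=+}\bar{\Pi}^{(i,a)}$ and $\bar{\Pi}^{(k+2)} \coloneqq \sum_{s_{i-k}(a)=-}\bar{\Pi}^{(i,a)}$, yielding a valid POVM with $k+2$ outcomes for $\pbt$ (or $k+2+|B|$ outcomes for $\pbtcl$, by transferring the classical-extra outcomes unchanged). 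A direct calculation using the diagonal expansion of $\Gamma^{(i)}$ displayed above then shows that each fidelity contribution of the original POVM is matched exactly by the corresponding contribution of the reduced POVM on $\Gamma_{\EPR{k}}\otimes\pure{+}\otimes\pure{-}$.

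For the final claim $\Gamma \sim_{\pbtq}\Gamma_{\EPR{k}}$, the direction $\F_{\pbtq}(\Gamma) \geq \F_{\pbtq}(\Gamma_{\EPR{k}})$ follows from fidelity monotonicity \eqref{Eq:fidelity-monotone} by tracing out $\tilde{A}$ on Alice's side and then observing that preparing the mixed state $\Gamma_{0,B^{n-k}}$ is a free move in $\pbtq(\Gamma_{\EPR{k}})$. The reverse inequality repeats the upper-bound construction, but because $\pbtq$ allows arbitrary quantum state preparations, each $\bar{\Pi}^{(i,a)}$ can be kept as its own outcome paired with the preparation $\pure{s_{i-k}(a)}_B$, avoiding the $\pm$-grouping step; the state-prep outcomes of the original POVM transfer over unchanged. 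The delicate point throughout is the normalization bookkeeping for partial traces, which ultimately hinges on the maximal mixedness of the $\tilde{A}$-marginal of each $(\Gamma_0)^{(j)}$, a consequence of the projection-rank formula in Lemma~\ref{L:reduced-graph}.
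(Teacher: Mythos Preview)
Your proof is correct, but it takes a genuinely different route from the paper's argument. The paper establishes the upper bound by observing that, since each marginal $\Gamma^{(i)}$ for $i>k$ has the $\tilde A$-system completely dephased, the states $\Gamma_{\EPR{k}}\otimes\Gamma_0$ and $\Gamma_{\EPR{k}}\otimes\Delta_{\tilde A}(\Gamma_0)$ share identical PBT ensembles; but $\Delta_{\tilde A}(\Gamma_0)$ is a convex combination of product states $\op{\alpha_\lambda}{\alpha_\lambda}\otimes\op{\beta_\lambda}{\beta_\lambda}$ with each $\ket{\beta_\lambda}\in\{\ket{+},\ket{-}\}^{\otimes(n-k)}$, so convexity of $\F_\Omega$ and invariance under Alice-local isometries reduce the bound to $\F_\Omega(\Gamma_{\EPR{k}}\otimes\pure{+}\otimes\pure{-})$. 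Your argument instead bypasses the convexity step entirely: you explicitly compress the optimal POVM on $A^rB$ to one on $A^kB$ via $\bar\Pi^{(i)}=2^{-(r-k)}\Tr_{\tilde A}\Pi^{(i)}$ (for $i\le k$) and diagonal slices $\bar\Pi^{(i,a)}=2^{-(r-k)}\bra{a}\Pi^{(i)}\ket{a}$ (for $i>k$), then coarse-grain by sign. This is more constructive and makes the fidelity preservation explicit at the level of individual trace contributions; the paper's argument is shorter and more conceptual but relies on convexity of $\F_\Omega$ as a black box.

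One small remark on your $\pbtq$ paragraph: the phrase ``preparing $\Gamma_{0,B^{n-k}}$ is a free move in $\pbtq(\Gamma_{\EPR{k}})$'' points in the wrong direction for the inequality $\F_{\pbtq}(\Gamma)\ge\F_{\pbtq}(\Gamma_{\EPR{k}})$ you are after, and in any case $\Gamma_{0,B^{n-k}}$ is generally not a product state across ports. All you actually need there is that $\F_{\pbtq}(\Gamma_{\EPR{k}}\otimes\Gamma_{0,B^{n-k}})\ge\F_{\pbtq}(\Gamma_{\EPR{k}})$, which holds trivially by ignoring the extra ports---exactly the argument you already gave for $\pbt$ and $\pbtcl$ in your first paragraph. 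This is a phrasing issue, not a gap.
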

\begin{proof}
The inequality $\F_{\Omega}(\Gamma_{\EPR{k}}) \leq \F_{\Omega}(\Gamma)$ for $\Omega\in\{\pbt,\pbtcl,\pbtq\}$ is trivial since Alice and Bob can always just restrict to their sub-graph $\Gamma_{\EPR{k}}$ when using $\Gamma$.

We next establish that
\begin{align}
    \F_{\Omega}(\Gamma)\leq \F_{\Omega}(\Gamma_{\EPR{k}}\otimes \pure{+}_{B_{k+1}} \otimes \pure{-}_{B_{k+2}})\label{Eq:graph-fidelity-upb}
\end{align}
for $\Omega\in\{\pbt,\pbtcl,\pbtq\}$.
    By Eq.~\eqref{Eq:PBT**=PBT}, this immediately implies that $\F_{\pbtq}(\Gamma)\leq \F_{\pbtq}(\Gamma_{\EPR{k}})$, and so $\F_{\pbtq}(\Gamma)=\F_{\pbtq}(\Gamma_{\EPR{k}})$.

To prove Eq.~\eqref{Eq:graph-fidelity-upb}, suppose that $\Gamma_0$ is an $(m_0,n_0)$ bipartite graph. 
Since $\Gamma_0$ contains no isolated EPR pairs, we have by Lemma \ref{L:reduced-pbt-ens} that the reduced states take the form 
\begin{align}
 \Gamma^{(j)}_{A^{m_0}B}= E^{(j)}(\wh{\1}_{A^{m_0}} \otimes \pure{+}_{B})E^{(j)}\quad\forall j\in[n_0]. 
\end{align}
We see that for any $i\in[m_0]$, system $A_i$ is completely dephased in each of the reduced states $\Gamma^{(j)}_{A^{m_0}B}$ for $j\in[n_0]$.
    From this it follows that $\pure{\Gamma_{\EPR{k}}}\otimes\pure{\Gamma_0}$ and $\pure{\Gamma_{\EPR{k}}}\otimes\Delta_{A^{m_0}}(\pure{\Gamma_0})$ will have the same PBT fidelities since they share the same reduced density matrices, where $\Delta_{A^{m_0}}(\pure{\Gamma_0})$ denotes the state obtained by completely dephasing Alice's systems in state $\Gamma_0$.
Because $\ket{\Gamma_0}$ is a bipartite graph, completely dephasing it on Alice's side will transform it into a convex combination of product states, 
\begin{align} 
\rho\coloneqq\Delta_{A^{m_0}}(\pure{\Gamma_0})=\sum_{\lambda}p(\lambda)\op{\alpha_\lambda}{\alpha_\lambda}\otimes\op{\beta_\lambda}{\beta_\lambda},
\end{align} 
where $\ket{\alpha_\lambda}\in\{\ket{0},\ket{1}\}^{\otimes m_0}$ and $\ket{\beta_\lambda}\in\{\ket{+},\ket{-}\}^{\otimes n_0}$.
    Therefore,
\begin{align}
    \F_{\Omega}(\Gamma)&=\F_\Omega(\Gamma_{\EPR{k}}\otimes \rho)\\
    &\leq\sum_\lambda p(\lambda)\F_{\Omega}(\Gamma_{\EPR{k}}\otimes\op{\alpha_\lambda,\beta_\lambda}{\alpha_\lambda,\beta_\lambda})\\
    &\leq \F_{\Omega}(\Gamma_{\EPR{k}}\otimes \pure{+}_{B_{k+1}} \otimes \pure{-}_{B_{k+2}}).
\end{align}
The first inequality follows from convexity in the fidelity measure $\F_{\Omega}$.
    The second inequality follows in two steps.
    Since $\F_{\Omega}$ remains invariant under local isometries by Alice, she can remove her state $\ket{\alpha_\lambda}$ so that
\begin{align}
    \F_{\Omega}(\Gamma_{\EPR{k}}\otimes\op{\alpha_\lambda,\beta_\lambda}{\alpha_\lambda,\beta_\lambda})=\F_{\Omega}(\Gamma_{\EPR{k}}\otimes\op{\beta_\lambda}{\beta_\lambda}). 
\end{align}
Then since the single-qubit marginals of $\ket{\beta_\lambda}$ are either $\ket{+}$ or $\ket{-}$, we have
\begin{align} 
\F_{\Omega}(\Gamma_{\EPR{k}}\otimes\op{\beta_\lambda}{\beta_\lambda})\leq \F_{\Omega}(\Gamma_{\EPR{k}}\otimes \pure{+} \otimes \pure{-}),
\end{align}
which finishes the proof.
\end{proof}

\section{Preliminaries on representation theoretic methods for calculating $\F_{\pbt}(\Phi^{+\otimes n})$}
\label{Appendix:schur-weyl}
Standard PBT problems with EPR pairs exhibit group symmetries and allow for the use of a variant of ``Schur-Weyl duality'' when calculating its optimal fidelity.
The following facts on representation theory and Schur-Weyl duality can be found in standard texts, such as \cite{Fulton2004} or \cite{Etingof2011}.

Let $V = \bC^d$, and $\{\ket{i}\}_{i=1}^d$ be the canonical basis of $V$.
$S_n$ denotes symmetric group of $n$ elements and $\cU(d)$ denotes the group of $d \times d$ unitary matrices.
Irreducible representations (irreps) of $S_n$ and $\cU(d)$, denoted $V_{\lambda}$ and $W_{d,\lambda}$ respectively, are labeled by a Young diagram $\lambda$ with $n$ boxes.
$\lambda$ represents the highest weight vector associated to the irrep $W_{d,\lambda}$, and $W_{d,\lambda}$ is zero if and only if $\lambda$ has more than $d$ rows.

Consider the following representations $\varphi:S_n\to \End(V^{\otimes n})$ and $\psi:U(d) \to \End(V^{\otimes n})$ given by their actions on the basis elements in $V^{\otimes n}$ for $\pi \in S_n$ and $U \in \cU(d)$:
\begin{align}
  &\varphi(\pi)(\ket{{i_1}}\otimes \dots \otimes \ket{{i_n}}) = \ket{{\pi^{-1}(i_1)}} \otimes \dots \otimes \ket{{\pi^{-1}(i_n)}}\\
  &\psi(U)(\ket{{i_1}} \otimes \dots \otimes \ket{{i_n}}) = U\ket{{i_1}} \otimes \dots \otimes U\ket{{i_n}}.
\end{align}
The \emph{Schur-Weyl duality} says that $\varphi(S_n)$ and $\psi(U(d))$ are commutants of each other, and hence admits a decomposition
\begin{align}
  (\bC^{d})^{\otimes n} \iso \bigoplus_{\lambda \vdash_d n} V_{\lambda} \otimes W_{d,\lambda},
\end{align}
where $\lambda = (\lambda_1,\dots,\lambda_k) \vdash_d n$ denotes the Young diagram $\lambda$ corresponding to a partition of $n$ with at most $d$ rows.
We write $d_\lambda$ as the dimension of $V_\lambda$ and $m_{d,\lambda}$ as the dimension of $W_{d,\lambda}$.
The dimension $d_{\lambda}$ corresponds to the number of standard Young tableaux associated to the Young frame $\lambda$ and given by the \emph{Hook length formula}
\begin{align}
  d_{\lambda} = \frac{n!}{\prod_{i,j} h(i,j)},
\end{align}
where $h(i,j)$ is the hook length at cell $(i,j)$ in $\lambda$.
For $\lambda \vdash_d n$, the dimension $m_{d,\lambda}$ is the number of semi-standard Young tableaux associated to $\lambda$ given by the \emph{Weyl character formula}
\begin{align}
  m_{d,\lambda} = \prod_{1\leq i < j \leq d} \frac{\lambda_i - \lambda_j + j - i}{j-i}.
\end{align}
In particular, when $d = 2$ with $\lambda = (\lambda_1, \lambda_2)$, we have
\begin{equation}
    m_{2,\lambda} = \lambda_1 - \lambda_2 + 1, \q\q\q
    d_\lambda = \frac{n!(\lambda_1 - \lambda_2 + 1)}{\lambda_2! (\lambda_1+1)!} = \frac{\lambda_1-\lambda_2+1}{n+1}\binom{n+1}{\lambda_2}.
\label{eq:dim-irreps}
\end{equation}

Consider a marginal state $\wt{\rho}_{B^n}$ that commutes with $U^{\otimes n}$ and $\pi$ for any $U\in \cU(d)$ and any $\pi \in S_n$.
By Schur-Weyl duality and Schur's lemma, $\wt{\rho}_{B^n}$ can be expressed as
\begin{equation}\label{eq:sym-rho}
  \wt{\rho}
  = \sum_{\lambda \vdash_2 n} c_\lambda \1_{V_\lambda} \otimes \1_{U_{\lambda}}, \q
  c_\lambda = \frac{\Tr[\wt{\rho}P_\lambda]}{m_\lambda d_\lambda},
\end{equation}
where $P_\lambda$ is the isotypical projection onto the $S_n$ irrep $V_{\lambda}$.
$P_\lambda$ is also called a \emph{Young symmetrizer}, and given by the formula
\begin{equation}\label{eq:young-symmetrizer}
  P_\lambda = \sum_{T\in \text{SYT}_\lambda} \frac{d_\lambda}{n!} \cdot \varphi(e_T),
\end{equation}
where the summation is over all standard Young tableaux of shape $\lambda$, and $e_T$ is given by the formula
\begin{align}
  e_T \coloneqq r_T\cdot c_T, \q r_T = \sum_{\pi \in \cR_T} \pi, \q c_T = \sum_{\sigma\in \cC_T} \text{sgn}(\sigma) \sigma,
\end{align}
where $\cR_T, \cC_T \subset S_n$ are subgroups such that elements of $\cR_T$ permute within each row of $T$ and those of $\cC_T$ permute within each column of $T$. 
The canonical purification of $\wt{\rho}_{B^n}$ is
\begin{equation}
  \ket{\wt{\rho}}_{A^nB^n}
  = \sqrt{2^n}(\1_{A^n}\otimes \wt{\rho}^{\frac{1}{2}}_{B^n})\ket{\Phi^+}_{A^nB^n}
  = \sum_{\lambda\vdash_2 n} \sqrt{c_\lambda}\sqrt{2^n}(\1_{A^n} \otimes  P_\lambda)\ket{\Phi^+}_{A^nB^n}\label{eq:purification},
\end{equation}
and the optimal fidelity of $\F_{\pbt}(\pure{\wt{\rho}})$ is achieved by the \emph{pretty good measurement} (PGM) $\{M^{(i)}\}_{i=1}^n$ associated with the ensemble $\{\rho^{(i)} \coloneqq \Phi^+_{A_iB} \otimes \wh{\1}_{A_i^c}\}_i$ defined as
\begin{align}
  M^{(i)} \coloneqq \bar{\rho}^{-\frac{1}{2}} \rho^{(i)} \bar{\rho}^{-\frac{1}{2}}, \q
  \bar{\rho} = \sum_{i=1}^n \rho^{(i)}.
\end{align}

The dual Pieri formula gives the following decomposition of $S_n \times \cU(d)$-representations on $V^{\otimes n+1}$ as
\begin{align}
 (\bC^d)^{\otimes n+1} \iso (\bC^d)^{\otimes n} \otimes (\bC^d)^* \iso \bigoplus_{\alpha \vdash_d n-1} \bigoplus_{\mu = \alpha + \square} V_\mu \otimes W_{d,\alpha},
\end{align}
where $U\in \cU(d)$ acts by $U^{\otimes n} \otimes \bar{U}$, and $\pi \in S_n$ acts by $\pi \otimes \1_{V^*}$.
We write $\mu = \alpha+ \square$ to mean the direct sum is over all Young diagram that is obtained by adding a box to the Young diagram $\alpha$.
The sum $\sum_i M^{(i)}\rho^{(i)}$ commutes with $U^{\otimes n} \otimes \bar{U}$ and $\pi \otimes \1$ for all $U \in \cU(d)$ and $\pi\in S_n$, and the entanglement fidelity with the PGM $M$ yields
\begin{equation} 
  F_{\pbt}(\pure{\wt{\rho}}) = F_{\pbt}(\pure{\wt{\rho}},M) = \frac{1}{4\cdot 2^n} \sum_{\alpha \vdash_2 n-1} \left(\sum_{\mu = \alpha+\square} \sqrt{c_\mu m_\mu d_\mu}\right)^2.\label{eq:sym-optimal}
\end{equation}
The optimality of \eqref{eq:sym-optimal} is proved with the dual feasibility of the associated SDP program in \cite{Leditzky_2020}.

\section{An upper bound on $\F_{\pbtcl}(\Gamma_{\EPR{k}})$} 
\label{app:Fpbt-upper-bound}
In this section, we prove the following upper bound:
\begin{align}
    \F_{\pbtcl}(\Gamma_{\EPR{k}}) &\leq \F_{\pbt}(\Phi^{+\otimes k}) + \frac{1}{2^{k+2}}\Tr[\1 - \wt{M}]
    = \F_{\pbt}(\Phi^{+ \otimes k}) + \frac{k+2}{2^{k+2}},
\end{align}
where $\wt{M} = \sum_{i=1}^k M^{(i)}$ is the sum of the PGM $M = \{M^{(i)}\}_{i=1}^k$ associated with the PBT states of $k$-EPR pairs $\{\sigma^{(i)} = \Phi^+_{A_iC} \otimes \wh{\1}_{A_i^c}\}_{i=1}^k$.
Since $\F_{\pbtcl}$ is invariant under unitaries on Alice's nodes, we consider $\ket{\Phi^+}^{\otimes k}$ instead of $\ket{\Gamma_{\EPR{k}}} = H^{\otimes k}_{A^k} \ket{\Phi^+}^{\otimes k}$.
Let $\{\rho^{(i)}\}_{i=1}^{k+2}$ be the PBT states for $\rho_{A^kB^{k+2}} = \Phi^{+\otimes k} _{A^kB^k} \otimes \pure{0}_{B_{k+1}} \otimes \pure{1}_{B_{k+2}}$.
Because $\rho^{(i)}$ commutes with $X^{\otimes k+1}$ for $i \leq k$ and
\begin{align}
X^{\otimes k+1}(\wh{\1}_{A^k} \otimes \pure{0}_C) X^{\otimes k+1} = \wh{\1}_{A^k} \otimes \pure{1}_C,
\end{align}
we can assume $\Pi^{(k+2)} = X^{\otimes k+1}\Pi^{(k+1)} X^{\otimes k+1}$.

We then compute:
\begin{align}
\F_{\pbtcl}(\Phi^{+\otimes k}) &= \F_{\pbt}(\rho)\\
&= \frac{1}{4} \hspace{-1.5em} \max_{\substack{\{\Pi^{(i)}\}_{i=1}^{k+2} \; \POVM: \\ \Pi^{(k+2)} = X^{\otimes k+1}\Pi^{(k+1)} X^{\otimes k+1}}} \hspace{-1.0em} \bigg \{ \sum_{i=1}^{k} \Tr[\sigma^{(i)} \Pi^{(i)}]
 + \Tr\big[\wh{\1}_{A^k} \otimes \pure{0}_C \Pi^{(k+1)} + \wh{\1}_{A^k} \otimes \pure{1}_C \Pi^{(k+2)} \big] \bigg \}  \\
&\leq \frac{1}{4} \max_{\substack{\{\Pi^{(i)}\}_{i=1}^{k+1} \; \POVM}} \bigg\{ \sum_{i=1}^{k} \Tr[\sigma^{(i)} \Pi^{(i)}] + \Tr[\wh{\1}_{A^k} \otimes \1_C \Pi^{(k+1)}] \bigg\} \label{eq:upper} \\
&= \frac{1}{4} \max_{\sum_{i=1}^k \Pi^{(i)} \leq \1} \bigg\{\sum_{i=1}^{k} \Tr[\sigma^{(i)} \Pi^{(i)}] + \frac{1}{2^k}\Tr[\1 - \sum_i^k \Pi^{(i)}] \bigg\}\\
&= \frac{1}{2} + \frac{1}{4} \max_{\sum_{i=1}^k \Pi^{(i)} \leq \1} \bigg\{\sum_{i=1}^{k} \Tr[(\sigma^{(i)} - \frac{\1}{2^k}) \Pi^{(i)}] \bigg\}. \label{eq:upper2}
\end{align}
Note that $\F_{\pbtq}(\Phi_k^+)$ is also bounded from above by \Cref{eq:upper}.
We claim that the maximum in \Cref{eq:upper2} is achieved with the PGM $M$.
Writing
\begin{align}
    Y = \sum_{i=1}^k \left(\sigma^{(i)}-\frac{\1}{2^k} \right)\Pi^{(i)},
\end{align}
the maximum in \Cref{eq:upper2} can be expressed in terms of the dual SDP problem with a slack variable as follows \cite[Sec~1.2.3]{Watrous2018a}:
\begin{align}
\begin{aligned}
    \text{minimize} & \q \Tr \;Y\\
    \text{subject to} & \q Y \geq 0 \\
    & \q Y \geq \rho^{(i)} \q i = 1,2,\dots, k.
\end{aligned}
\end{align}
We show that $Y' = \sum_{i=1}^k(\sigma^{(i)} - \frac{\1}{2^k})M^{(i)}$ is dual feasible.
It is easy to check that $Y' \geq \sigma^{(i)} - \frac{\1}{2^k}$, and to show $Y' \geq 0$, observe that
\begin{align}
    Y' = \sum_{i=1}^k \left(\sigma^{(i)} - \frac{\1}{2^k}\right)\bar{\sigma}^{-1/2}\sigma^{(i)} \bar{\sigma}^{-1/2}
    = \Big(\sum_{i=1}^k \sigma^{(i)} \bar{\sigma}^{-1/2}\sigma^{(i)} \bar{\sigma}^{-1/2}\Big) - \frac{1}{2^k}\bar{\sigma}^{-1/2}\bar{\sigma} \bar{\sigma}^{-1/2}
\end{align}
The term $\sum_i \sigma^{(i)} M^{(i)}$ can be expressed as \cite{Leditzky_2020}
\begin{align}
   \sum_{i=1}^k \sigma^{(i)} \bar{\sigma}^{-1/2}\sigma^{(i)} \bar{\sigma}^{-1/2} = \sum_{\alpha \vdash_2 k-1}\sum_{\mu \in \alpha + \square} c_{\mu,\alpha} \1_{V_{\mu}} \otimes \1_{W_{2,\alpha}}, \q\q
    c_{\mu,\alpha} \coloneqq  \frac{\sqrt{m_\mu}}{2^k m_\alpha \sqrt{d_\mu}} \sum_{\mu'\in \alpha+\square} \sqrt{d_{\mu'}m_{\mu'}},
\end{align}
and we have
\begin{align}
   Y' = \sum_{\alpha \vdash_2 k-1}\sum_{\mu \in \alpha + \square} \bigg(c_{\mu,\alpha} - \frac{1}{2^k}\bigg) \1_{V_{\mu}} \otimes \1_{W_{2,\alpha}}.
\end{align}
It remains to show that $c_{\mu,\alpha} - \frac{1}{2^k} \geq 0$ for all feasible $\mu$ and $\alpha$.
Indeed,
\begin{equation}
    c_{\mu,\alpha} \geq \frac{m_{\mu}}{2^k m_{\alpha}} > \frac{1}{2^k}.
\end{equation}
This concludes that $Y'$ is dual-feasible.

The term $\Tr\;\wt{M} = 2^{k+1} - k - 2$ is computed in \Cref{L:pgm-sum} proven below, giving the desired upper bound:
\begin{align}
  F_{\pbtcl}(\Gamma_{\EPR{k}}) \leq F_{\pbt}(\Phi^+_k)+ \frac{1}{4}\bigg(2 - \frac{1}{2^k}\Tr[\wt{M}]\bigg)
  = F_{\pbt}(\Gamma_{\EPR{k}}) + \frac{k+2}{2^{k+2}}.
\end{align}

\begin{lemma} \label{L:pgm-sum}
  Let $\{M^{(i)} \}_{i=1}^k$ be the PGM of the ensemble $\{\Phi^+_{A_iC} \otimes \wh{\1}_{A_i^c} \}_{i=1}^k$ with the uniform prior distribution. Then
  \begin{align}
    \sum_{i=1}^k\Tr M^{(i)} = 2^{k+1} - k - 2.
  \end{align}
\end{lemma}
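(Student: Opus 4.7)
The plan is to reduce the claimed identity to a rank computation for $\bar{\sigma}$ and then determine that rank by a direct combinatorial argument.

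First, I would use the defining property of the pseudo-inverse to write
\begin{align*}
\sum_{i=1}^k \Tr M^{(i)} = \Tr\bigl[\bar{\sigma}^{-1/2}\bar{\sigma}\bar{\sigma}^{-1/2}\bigr] = \Tr \Pi_{\bar{\sigma}} = \rank \bar{\sigma},
\end{align*}
where $\Pi_{\bar{\sigma}}$ is the orthogonal projector onto $\supp(\bar{\sigma})$. The task therefore reduces to proving $\dim\ker\bar{\sigma} = k+2$ on the $2^{k+1}$-dimensional space $A^k C$.

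Next, because each $\sigma^{(i)}$ is positive semidefinite, $\ket{v}\in\ker\bar{\sigma}$ iff $(\bra{\Phi^+}_{A_iC}\otimes\1_{A_i^c})\ket{v}=0$ for every $i\in[k]$. Expanding $\ket{v}=\sum_{s,c}v_{s,c}\ket{s}_{A^k}\ket{c}_C$ in the computational basis and substituting $w_{s,c}=(-1)^c v_{s,c}$, these conditions collapse to the identifications
\begin{align*}
w_{s,0} = w_{s\oplus e_i,\,1}\qquad\text{whenever } s_i = 0,
\end{align*}
indexed by $i\in[k]$ and $s\in\{0,1\}^k$. Hence $\dim\ker\bar{\sigma}$ equals the number of equivalence classes on $\{0,1\}^{k+1}$ generated by the elementary moves $(s,0)\leftrightarrow(s\oplus e_i,1)$ with $s_i=0$.

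The crux of the argument, and the only nontrivial step, is classifying these equivalence classes. My proposal is to show that for each $a\in\{-1,0,1,\dots,k\}$ the set
\begin{align*}
C_a = \{(s,0) : |s| = a\} \cup \{(s,1) : |s| = a+1\}
\end{align*}
is a single equivalence class. Invariance of $a$ under an elementary move is immediate, since $s_i=0$ forces $|s\oplus e_i|=|s|+1$, so the $C_a$'s are unions of classes. The extreme classes $C_{-1}=\{(0^k,1)\}$ and $C_k=\{(1^k,0)\}$ are singletons because no move applies there. For $0\leq a\leq k-1$, connectivity within $C_a$ reduces to the ``token-transfer'' two-step composition $(s,0)\to(s+e_l,1)\to(s+e_l-e_m,0)$, valid whenever $l\notin\supp(s)$ and $m\in\supp(s)$, which by iteration links any two weight-$a$ strings inside the $c=0$ layer; a single further elementary move then joins the two layers of $C_a$. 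This token-transfer transitivity is the only delicate point; everything else is bookkeeping. Counting the $k+2$ values of $a$ gives $\dim\ker\bar{\sigma} = k+2$, and hence $\rank\bar{\sigma} = 2^{k+1} - k - 2$, as claimed.
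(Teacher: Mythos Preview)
Your argument is correct and takes a genuinely different, more elementary route than the paper. The paper invokes the Schur--Weyl decomposition $\sum_i M^{(i)} = \bigoplus_{\alpha \vdash_2 k-1}\bigoplus_{\mu=\alpha+\square}\1_{V_\mu}\otimes\1_{W_{2,\alpha}}$, takes the trace to obtain $\sum_{\alpha,\mu} d_\mu m_\alpha$, and then evaluates this sum by expanding in terms of the explicit dimension formulas and carrying out a somewhat lengthy binomial-coefficient computation with separate case analyses for odd and even $k$. You instead observe directly that $\sum_i M^{(i)}$ is the support projector of $\bar\sigma$, so that the desired trace is simply $\rank\bar\sigma$, and you compute $\dim\ker\bar\sigma$ by a clean combinatorial argument on the Hamming-weight classes. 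Your approach bypasses the representation theory and the binomial-sum manipulations entirely; the paper's approach, while heavier, fits naturally into the Schur--Weyl framework already developed for the PBT fidelity calculations and makes the block structure of $\sum_i M^{(i)}$ explicit.
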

\begin{proof}
  As shown in \cite{Christandl_2018,Leditzky_2020}, Schur-Weyl decomposition gives the following expression for the sum over all PGM elements:
  \begin{align}
    \sum_{i=1}^k M^{(i)} = \bigoplus_{\alpha \vdash_2 k-1} \bigoplus_{\mu = \alpha + \square} \1_{V_\mu} \otimes \1_{W_{2,\alpha}},
  \end{align}
  and taking the trace gives
  \begin{align}
    x\coloneqq \sum_{i=1}^k \Tr M^{(i)} = \sum_{\alpha \vdash_2 k-1} \sum_{\mu = \alpha + \square} d_\mu \cdot m_{\alpha}.
  \end{align}
  In the following we parametrize Young diagrams $\alpha\vdash_2 k-1$ as $\alpha = (k-j-1,j)$, where $j=1,\dots, p\coloneqq\floor{\frac{k-1}{2}}$ indicates the length of the second row.
  In this notation,
  \begin{align}
      x &= \sum_{j=0}^p m_{(k-j-1,j)}(d_{(k-j,j)} + d_{(k-j-1,j+1)}) \label{eq:x-with-k}\\
      &= \sum_{j=0}^p \frac{k-2j}{k+1}\left[(k-2j+1)\binom{k+1}{j} + (k-2j-1)\binom{k+1}{j+1} \right],
  \end{align}
  where we used \eqref{eq:dim-irreps} in the second equality.
  Using the identities $\binom{m}{\ell}=\binom{m}{m-\ell}$ and $\ell\binom{m}{\ell} = m\binom{m-1}{\ell-1}$, this expression can be further simplified to
  \begin{align}
      x = \sum_{j=0}^p (k-2j)\left( \binom{k}{j+1} - \binom{k}{j-1}\right),\label{eq:x-simplified}
  \end{align}
  where we use the convention $\binom{k}{-1}=0$.

  To evaluate \eqref{eq:x-simplified} we use the following partial sum identities, which can be proved by noting that the sequence $\binom{k}{0},\dots,\binom{k}{k}$ is symmetric around $k/2$ and using $\sum_{j=0}^k \binom{k}{j} = 2^k$ and $\sum_{j=0}^k j \binom{k}{j} = k2^{k-1}$:
  \begin{subequations}
  \begin{enumerate}
      \item If $k$ is odd, then
      \begin{align}
          \sum_{j=0}^{(k-1)/2} \binom{k}{j} &= 2^{k-1}\\
          \sum_{j=0}^{(k-1)/2} j \binom{k}{j} &= k\left( 2^{k-2} - \frac{1}{2}\binom{k-1}{(k-1)/2}\right).
      \end{align}
      \item If $k$ is even, then
      \begin{align}
          \sum_{j=0}^{(k-2)/2} \binom{k}{j} &= \frac{1}{2}\left( 2^k - \binom{k}{k/2}\right)\\
          \sum_{j=0}^{(k-2)/2} j \binom{k}{j} &= k\left( 2^{k-2} - \binom{k-1}{(k-2)/2}\right).
      \end{align}
  \end{enumerate}
  \label{eq:partial-sum-identities}
  \end{subequations}
  
  First, let $k$ be odd so that $k-1$ is even and $p=\frac{k-1}{2}$ as defined in \eqref{eq:x-with-k}.
  We use \eqref{eq:partial-sum-identities} to compute the four sums appearing in the expression \eqref{eq:x-simplified} for $x$ (using some simple change of variables steps):
  \begin{align}
      \sum_{j=0}^p \binom{k}{j+1} &= 2^{k-1} - 1 + \binom{k}{p+1}\\
      \sum_{j=0}^p \binom{k}{j-1} &= 2^{k-1} - \binom{k}{p}\\
      \sum_{j=0}^p j\binom{k}{j+1} &= k 2^{k-2} - 2^{k-1} + 1 -\frac{k}{2}\binom{k-1}{p} + \frac{k-1}{2} \binom{k}{p+1}\\
      \sum_{j=0}^p j \binom{k}{j-1} &= k 2^{k-2} + 2^{k-1} - \frac{k}{2} \binom{k-1}{p} - \frac{k+1}{2} \binom{k}{p}.
  \end{align}
  Substituting these in \eqref{eq:x-simplified} and canceling most terms, we arrive at
  \begin{align}
      x = 2^{k+1} - 2 - k + \binom{k}{p+1} - \binom{k}{p}.
  \end{align}
  Noting that $\binom{k}{p+1} = \binom{k}{(k+1)/2} = \binom{k}{(k-1)/2} = \binom{k}{p}$ then concludes the proof of the lemma for odd $k$. 
  The case of even $k$ is handled analogously and yields the same result, proving the lemma.
\end{proof}

\section{Symmetries in $\pbtcl(\Phi^{+\otimes n})$ and SDP construction}
\label{Appendix:pbtcl-symmetries}

Let $\rho \coloneqq \Phi^{+\otimes n}_{A^nB^n} \otimes \pure{0}_{B_{n+1}} \otimes \pure{1}_{B_{n+2}}$.
The PBT states $\rho^{(i)} \coloneqq \Tr_{B_i^c} \rho$ satisfy the following symmetric properties:
  \begin{enumerate}[(i)]
    \item $(\sigma_{A^n}\otimes \1_C) \rho^{(i)} (\sigma^{-1}_{A^n} \otimes \1_C) = \rho^{\sigma(i)}$ for any $\sigma \in S_n$ and $i \leq n$; \label{it:s1}
    \item $[\pi_{A_i^c} \otimes \1_{A_iC}, \rho^{(i)}] = 0$ for any $\pi \in S_{n-1}$ and $i \leq n$; \label{it:s2}
    \item $[\sigma_{A^n} \otimes \1_C, \rho^{(j)}] = 0$ for any $\sigma \in S_n$ and $j \in \{n+1, n+2\}$; \label{it:s3}
    \item $[U^{\otimes n}_{A^n} \otimes \overline{U}_C,\rho^{(i)}]$ for any $U \in \cU(2)$ and $i \leq n$; \label{it:s4}
    \item $[(U_\theta^{\otimes n})_{A^n} \otimes (\overline{U}_{\theta})_{C}, \rho^{(j)}] = 0$ for any $U_\theta \coloneqq \mathrm{diag}(1,e^{i\theta})$ and $j \in \{n+1, n+2\}$; \label{it:s5}
    \item $X^{\otimes n+1} \rho^{n+1} X^{\otimes n+1} = \rho^{n+2}$ for the Pauli operator $X$. \label{it:s6}
  \end{enumerate}
By the permutation symmetries \Cref{it:1} and \Cref{it:2},
  \begin{align}
    \sum_{i=1}^{n+2} \Tr[\rho^{(i)} \Pi^{(i)}]
    &= \frac{1}{n!} \sum_{\sigma \in S_n} \sum_{i=1}^{n+2} \Tr[\sigma \rho^{(i)} \sigma^{-1} \sigma \Pi^{(i)} \sigma^{-1}]\\
    &= \frac{1}{n!} \sum_{\sigma \in S_n} (\sum_{i=1}^{n} \Tr[\rho^{\sigma(i)} \sigma \Pi^{(i)} \sigma^{-1}] + \sum_{j=n+1}^{n+2}\Tr[\rho^{(j)}\sigma \Pi^{(j)} \sigma^{-1}])\\
    &= \frac{1}{n!}\sum_{\sigma\in S_n} (\sum_{i=1}^n \Tr[\rho^{(i)} \sigma \Pi^{(\sigma^{-1}(i))} \sigma^{-1}] + \sum_{j=n+1}^{n+2}\Tr[\rho^{(j)}\sigma \Pi^{(j)} \sigma^{-1}]) \\
    &= \sum_{i=1}^{n+2} \Tr[\rho^{(i)} \wt{\Pi}^{(i)}],
  \end{align}
  where $\wt{\Pi}^{(i)}$'s are defined as follows.
  \begin{align}
    \wt{\Pi}^{(i)} \coloneqq \begin{cases}
        \ds \frac{1}{n!} \sum_{\sigma\in S_n} \sigma \Pi^{(\sigma^{-1}(i))} \sigma^{-1} & (i \leq n)\\
        \ds \frac{1}{n!}\sum_{\sigma\in S_n} \sigma \Pi^{(i)} \sigma^{-1} & (i > n)
    \end{cases}
  \end{align}
  It is easy to check that for any $i$ and $\pi \in S_n$ we have $\pi\wt{\Pi}^{(i)}\pi^{-1} = \wt{\Pi}^{(i)}$ and $\wt{\Pi}^{(i)} \geq 0$.
  It follows that $\wt{\Pi}^{(i)}$'s form a valid POVM since
  \begin{align}
    \sum_{i=1}^{n+1} \wt{\Pi}^{(i)}
    &= \frac{1}{n!} \sum_{\sigma} \left[\sigma\left(\sum_{i=1}^n \Pi^{(\sigma^{-1}(i))}\right) \sigma^{-1} + \sigma\left(\Pi^{(n+1)} + \Pi^{(n+2)}\right)\sigma^{-1}\right]\\
    &= \frac{1}{n!} \sum_{\sigma} \left[\sigma\left(\sum_{i=1}^n \Pi^{(i)}\right) \sigma^{-1} + \sigma\left(\Pi^{(n+1)} + \Pi^{(n+2)}\right)\sigma^{-1}\right]\\
    &= \frac{1}{n!} \sum_{\sigma} \sigma\left(\sum_{i=1}^{n+2} \Pi^{(i)}\right) \sigma^{-1} = \1.
  \end{align}
  This shows that we can assume the optimal POVM to have permutation symmetry, and similarly, symmetries on $\rho^{(i)}$'s induce the following symmetries on the POVM elements $\Pi^{(i)}$.
  \begin{enumerate}[(i)]
    \item $\sigma_{A^n} \Pi^{(i)} \sigma^{-1}_{A^n} = \Pi^{\sigma(i)}$ for any $\sigma \in S_n$ and $i \leq n$; \label{it:sym1}
    \item $[\sigma_{A^n} \otimes \1_C, \Pi^{(j)}] = 0$ for any $\sigma \in S_n$ and $j > n$;\label{it:sym2}
    \item $[\pi_{A_i^c} \otimes \1_{A_iC}, \Pi^{(i)}] = 0$ for any $\pi \in S_{n-1}$ and $i \leq n$; \label{it:sym3}
    \item $[U_\theta^{\otimes n} \otimes \overline{U}_{\theta}, \Pi^{(i)}] = 0$ for any $U_\theta \coloneqq \mathrm{diag}(1,\theta)$, $i \in [n+2]$; \label{it:sym4}
    \item $[X^{\otimes n+1}, \Pi^{(i)}] = 0$ for any $i \leq n$; \label{it:sym5}
    \item $X^{\otimes n+1} \Pi^{(n+1)} X^{\otimes n+1} = \Pi^{(n+2)}$. \label{it:sym6}
  \end{enumerate}
Note that $U_\theta^{\otimes n} \otimes U_{-\theta}$ can be expressed as
\begin{equation}
  U_\theta^{\otimes n} \otimes U_{-\theta} = \sum_{k=0}^n \Big(\sum_{s \in S_{k,n}} e^{ik\theta}\pure{s}_{A^n}\otimes \pure{0}_C + e^{i(k-1)\theta}\pure{s}_{A^n} \otimes \pure{1}_C\Big),
\end{equation}
where $S_{k,n}$ is a subset of $\{0,1\}^{n}$ defined as
\begin{align}
  S_{k,n} \coloneqq \begin{cases}
   \{ s\in \{0,1\}^n \mid |s| = k\} & \text{for $k \in \{0,1,\dots,n\}$}\\
   \varnothing & \text{otherwise}
  \end{cases}
\end{align}
with $|s|$ denoting the Hamming distance of the string $s$.
Define $D_{k,n} \coloneqq \sum_{s \in S_{k,n}} \pure{s}$ if $S_{k,n}$ is not empty and $D_{k,n} \coloneqq 0$ if $S_{k,n}$ is empty.
Then the symmetry condition \Cref{it:sym4} reduces to
\begin{gather}\label{eq:phase-sym-2}
  \Pi^{(i)} = \sum_{k,\ell = -1}^n e^{i(k-\ell)\theta} \Big(D_{k,n} \otimes \pure{0} + D_{k+1} \otimes \pure{1} \Big)
   \,\Pi^{(i)}\,\Big(D_{\ell} \otimes \pure{0} + D_{\ell+1} \otimes \pure{1}\Big)
\end{gather}
Since $\{D_{k,n}\}_{k=1}^n$'s are orthogonal, i.e. $D_{k,n} D_{\ell,n} = 0$ for $k\neq \ell$, in order for \Cref{eq:phase-sym-2} to hold for any $\theta$, the term with $e^{i(k-\ell)\theta}$ should be zero for $k\neq \ell$.
This means that for $s, t \in \{0,1\}^{n}$,
\begin{enumerate}[(1)]
  \item $\bra{s0}\Pi^{(i)}\ket{t0} = 0$\q if\q $|s| \neq |t|$; \label{it:1}
  \item $\bra{s1}\Pi^{(i)}\ket{t0} = 0$\q if\q $|s| \neq |t| + 1$; \label{it:2}
  \item $\bra{s0}\Pi^{(i)}\ket{t1} = 0$\q if\q $|s| + 1 \neq |t|$; \label{it:3}
  \item $\bra{s1}\Pi^{(i)}\ket{t1} = 0$\q if\q $|s| \neq |t|$. \label{it:4}
\end{enumerate}
Let $\overline{x} \in \{0,1\}^k$ denote $x \oplus 1^k$, e.g. $\overline{0100} = 1011$.
Using other symmetry conditions, we impose additional constraints on $\Pi^{(n)}$ and $\Pi^{(n+1)}$ that for any $i,j,x,y \in \{0,1\}$, $p,p',q,q' \in \{0,1\}^{n-1}$ and $s,s',t,t' \in \{0,1\}^n$,
\begin{enumerate}[(1)]
  \setcounter{enumi}{4}
  \item $\bra{sx} \Pi^{(n)} \ket{ty} = \bra{\overline{sx}}\Pi^{(n)} \ket{\overline{ty}}$; \label{it:5}
  \item $\bra{pix} \Pi^{(n)} \ket{qjy} = \bra{\overline{p}ix} \Pi^{(n)} \ket{\overline{q}jy}$; \label{it:6}
  \item $\bra{i}_{A_n}\bra{px}_{A_n^c C} \Pi^{(n)} \ket{j}_{A_n}\ket{qy}_{A_n^c C} = \bra{i}_{A_n}\bra{p'x}_{A_n^c C} \Pi^{(n)} \ket{j}_{A_n}\ket{q'y}_{A_n^c C}$ if $|p| = |p'|$ and $|q| = |q'|$; \label{it:7}
  \item $\bra{sx}\Pi^{(n+1)}\ket{ty} = \bra{s'x}\Pi^{(n+1)}\ket{t'y}$ if $|s| = |s'|$ and $|t| = |t'|$. \label{it:8}
\end{enumerate}

\noindent The positivity conditions for $\Pi^{(i)}$'s are sufficient with the positivity conditions for $\Pi^{(n)}$ and $\Pi^{(n+1)}$.
The POVM condition adds the constraints
\begin{enumerate}[(1)]
  \setcounter{enumi}{8}
  \item $\Pi^{(n)} \geq 0$, $\Pi^{(n+1)} \geq 0$ \label{it:9}
  \item $\ds \left(\frac{n}{|S_n|}\sum_{\pi\in S_n}\bra{\pi(s)x} \Pi^{(n)} \ket{\pi(s)x} \right) + \bra{sx}\Pi^{(n+1)}\ket{sx} + \bra{\overline{sx}}\Pi^{(n+1)}\ket{\overline{sx}} = 1$ \label{it:10}
  \item $\ds \left(\frac{n}{|S_n|}\sum_{\pi \in S_n}\bra{\pi(s)x} \Pi^{(n)} \ket{\pi(t)y} \right) + \bra{sx}\Pi^{(n+1)}\ket{ty} + \bra{\overline{sx}}\Pi^{(n+1)}\ket{\overline{ty}} = 0$ if $s \neq t$ or $x \neq y$. \label{it:11}
\end{enumerate}

The fidelity to maximize is
\begin{align}
  \frac{1}{4}\sum_{i=1}^{n+2} \Tr[\Pi^{(i)} \Tr_{B_i^c}(\rho)]
  &= \frac{1}{4}(n\cdot \Tr[\Pi^{(n)} \Phi^+_{A_nC} \otimes \wh{\1}_{A_n^c}] + 2\cdot \Tr[\Pi^{(n+1)} \wh{\1}_{A^n} \otimes \pure{0}_C])\\
  &= \frac{1}{4\cdot 2^{n}}\Big(n\cdot \Tr[(\bra{00}+\bra{11})_{A_nC}\Pi^{(n)} (\ket{00}+\ket{11})_{A_nC}] + 2\cdot \Tr[\bra{0}_C \Pi^{(n+1)} \ket{0}_C]\Big)\\
  &= \frac{1}{2^{n+2}}\sum_{k=0}^n 2n\cdot \Tr[\Pi^{(n)} D_{k,n-1}^{A_n^c} \otimes \Phi^+_{A_nC} ]
  + 2 \cdot \Tr[\Pi^{(n+1)} D_{k,n}^{A^n}\otimes \pure{0}_C]
\end{align}
which is a linear equation in terms of the variables in $\Pi^{(n)}$ and $\Pi^{(n+1)}$.

Consider the case for $n = 2$.
The symmetry conditions combined with Hermiticity result in $\Pi^{(2)}$ taking the form
\begin{align}
  \Pi^{(2)} = \begin{pmatrix}
    x_1 & 0 & 0 & 0                 & 0 & \bar{x}_5 & \bar{x}_6 & 0 \\
    0 & x_2 & x_3 & 0               & 0 & 0 & 0 & x_6 \\
    0 & \bar{x}_3 & x_4 & 0         & 0 & 0 & 0 & x_5 \\
    0 & 0 & 0 & x_7                 & 0 & 0 & 0 & 0 \\
    0 & 0 & 0 & 0                   & x_7 & 0 & 0 & 0 \\
    x_5 & 0 & 0 & 0                 & 0 & x_4 & x_3 & 0 \\
    x_6 & 0 & 0 & 0                 & 0 & \bar{x}_3 & x_2 & 0 \\
    0 & \bar{x}_6 & \bar{x}_5 & 0   & 0 & 0 & 0 & x_1
  \end{pmatrix},
\end{align}
where $x_1, x_2, x_4, x_7 \in \bR$ and $x_3,x_5,x_6 \in \bC$, while the condition \Cref{it:7} further simplifies to the real symmetric matrix with all elements $x_i \in \bR$.
Likewise, by applying the condition \Cref{it:8} alongside Hermiticity to $\Pi^{(3)}$, we obtain 
\begin{align}
  \Pi^{(3)} = \begin{pmatrix}
    a_1 & 0 & 0 & 0                 & 0 & a_4 & a_4 & 0 \\
    0 & a_2 & a_3 & 0               & 0 & 0 & 0 & a_5 \\
    0 & \bar{a}_3 & a_2 & 0         & 0 & 0 & 0 & a_5 \\
    0 & 0 & 0 & a_9                 & 0 & 0 & 0 & 0 \\
    0 & 0 & 0 & 0                   & a_{10} & 0 & 0 & 0 \\
    \bar{a}_4 & 0 & 0 & 0           & 0 & a_6 & a_7 & 0 \\
    \bar{a}_4 & 0 & 0 & 0           & 0 & \bar{a}_7 & a_6 & 0 \\
    0 & \bar{a}_5 & \bar{a}_5 & 0   & 0 & 0 & 0 & a_8
  \end{pmatrix}
\end{align}
where $a_1, a_2, a_6, a_8, a_9, a_{10} \in \bR$ and $a_4,a_5 \in \bC$.
The POVM condition $\sum_i \Pi^{(i)} = \1$ corresponds to
\begin{gather}
  2x_1 + a_1 + a_8 = x_2 + x_4 + a_2 + a_6 = 2x_7 + a_9 + a_{10} = 1 \\
  2x_3 + a_3 + \bar{a}_7 = x_5 + x_6 + a_4 + \bar{a}_5 = 0
\end{gather}
It is clear that $a_{10} = x_7 = 0$ and $a_9 = 1$.
Consequently, the objective function to maximize can be expressed as
\begin{equation}\label{eq:obj}
  \frac{1}{4}\sum_{i=1}^{4}\Tr[\Tr_{B_i^c}(\rho) \Pi^{(i)}] = \frac{1}{8}(2x_1 + 2x_2 + 4x_6 + a_1 + 2a_2 + 1).
\end{equation}

Note that we can also assume $\Pi^{(3)}$ to be real symmetric since $\mathrm{Re}(\Pi^{(3)}) = (\Pi^{(3)} + \overline{\Pi^{(3)}})/2$ is positive, and replacing $\Pi^{(3)}$ with $\mathrm{Re}(\Pi^{(3)})$ does not change the value of \Cref{eq:obj}.

Finally, we obtain a simplified SDP program:
\begin{align}
  \text{maximize} \; &\frac{1}{8}(2x_1 + 2x_2 + 4x_6 + a_1 + 2a_2 + 1)\\
  \text{subject to}
  \; &\begin{pmatrix}
    x_1 & x_5 & x_6 \\ x_5 & x_4 & x_3 \\ x_6 & x_3 & x_2
  \end{pmatrix},
  \begin{pmatrix}
    a_1 & a_4 & a_4 \\ a_4 & a_6 & a_7 \\ a_4 & a_7 & a_6
  \end{pmatrix},
  \begin{pmatrix}
    a_2 & a_3 & a_5 \\ a_3 & a_2 & a_5 \\ a_5 & a_5 & a_8
  \end{pmatrix} \geq 0\\
  &2x_1 + a_1 + a_8 = x_2 + x_4 + a_2 + a_6 = 1 \\
  &2x_3 + a_3 + a_7 = x_5 + x_6 + a_4 + a_5 = 0.
\end{align}
Solving the aforementioned SDP yields the optimal value $\approx 0.6484$.

\section{Computing $\F_{\pbt}(\wt{\Sigma}^{k,n})$ (Proof of \Cref{P:k-epr-sym})}

\label{sec:A-lemmas}

\begin{proposition*}[Restatement of Prop.~\ref{P:k-epr-sym}]
Let $k, n$ be positive integers such that $k\leq n$. Then
\begin{align}\label{eq:sym-fidelity}
\begin{aligned}
  F_{\pbt}(\wt{\Sigma}^{k,n})
  = (2^{k+2}(n-k+1))^{-1}\Bigg[(n-2k+1) \delta_{n-1 \geq 2k}
  + \sum_{i=1}^{k} \delta_{n+1\geq 2i} {\left(\sqrt{ (n-2i+3) \binom{k}{i-1}} + \sqrt{(n-2i+1) \cdot \binom{k}{i}} \right)}^2 \Bigg]
\end{aligned}
\end{align}
where $\delta$ is the indicator function.
Moreover, for a fixed $k$, we have
\begin{equation}\label{eq:sym-fixed-k-app}
  \lim_{n\to \infty} \F_{\pbt}(\Sigma^{k,n}) = 2^{-k-2}\Bigg(1 + \sum_{i=1}^{k} \Bigg(\sqrt{\binom{k}{i-1}} + \sqrt{\binom{k}{i}} \Bigg)^2\Bigg).
\end{equation}
\end{proposition*}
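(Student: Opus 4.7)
The plan is to directly generalize the Schur--Weyl/pretty-good-measurement calculation from the $k=1$ case carried out in the main text. The key input is the fidelity formula \eqref{eq:sym-optimal}, which reduces the computation to determining the coefficients $x_\mu \coloneqq c_\mu m_\mu d_\mu = \Tr[\wt{\Sigma}^{k,n}_{B^n} P_\mu]$ for each $\mu \vdash_2 n$. Since $P_\mu$ lies in the commutant of both the $S_n$-permutation action and $U^{\otimes n}$ on $(\bC^2)^{\otimes n}$, the twirling in the definition of $\wt{\Sigma}^{k,n}_{B^n}$ drops out, giving $x_\mu = \Tr[\Sigma^{k,n}_{B^n} P_\mu] = \tfrac{1}{2^k}\bra{+}^{\otimes(n-k)} \Tr_{B^k}[P_\mu] \ket{+}^{\otimes(n-k)}$.

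The heart of the argument is to iterate the partial-trace identity \eqref{eq:sym-partial-tr} $k$ times, yielding
\begin{equation*}
\Tr_{B^k}[P_\mu] \;=\; m_\mu \sum_{\beta \vdash_2 n-k} \frac{N^{(k)}_{\mu,\beta}}{m_\beta}\, P_\beta,
\end{equation*}
where $N^{(k)}_{\mu,\beta}$ counts the downward paths of length $k$ from $\mu$ to $\beta$ in Young's lattice (each step removes a single box). Since $\ket{+}^{\otimes(n-k)}$ lives in the fully symmetric $S_{n-k}$-irrep labeled by $(n-k,0)$, only the $\beta=(n-k,0)$ term survives, and with $m_{(n-k,0)} = n-k+1$ we obtain $x_\mu = m_\mu\, N^{(k)}_{\mu,(n-k,0)}/[2^k(n-k+1)]$. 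For $\mu = (\mu_1,\mu_2)$ with $\mu_2 \leq n-k$, the Young-validity constraint $\mu_1-i \geq \mu_2-j$ is automatic along every ordering of removals, so $N^{(k)}_{\mu,(n-k,0)} = \binom{k}{\mu_2}$; and $x_\mu = 0$ whenever $\mu_2 > k$, since no length-$k$ path to $(n-k,0)$ exists.

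With $x_\mu$ in hand, the claim follows by organizing the sum in \eqref{eq:sym-optimal} according to the second row of $\alpha$. For $\alpha=(n-i,i-1)$ with $i = 1,\dots,k$ (which is a valid partition exactly when $n+1 \geq 2i$), the two children $\mu = \alpha + \square$ are $\mu_a = (n-i+1,\,i-1)$, always a Young diagram with $m_{\mu_a} = n-2i+3$, and $\mu_b = (n-i,\,i)$, valid exactly when $n \geq 2i$, with $m_{\mu_b} = n-2i+1$. Together they contribute the summand $\bigl(\sqrt{(n-2i+3)\binom{k}{i-1}} + \sqrt{(n-2i+1)\binom{k}{i}}\bigr)^2$. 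The remaining partition $\alpha = (n-1-k,\,k)$, which is valid only when $n-1 \geq 2k$, gives $\mu_b = (n-1-k,\,k+1)$ whose second row exceeds $k$ and hence $x_{\mu_b} = 0$; only $\mu_a = (n-k,\,k)$ survives, producing the isolated term $(n-2k+1)\,\delta_{n-1\geq 2k}$. Pulling out the common prefactor $1/[2^{k+2}(n-k+1)]$ then recovers \eqref{eq:sym-fidelity}. The limit \eqref{eq:sym-fixed-k-app} follows by letting $n \to \infty$: each factor $(n-2i+c)/(n-k+1)$ tends to $1$, so every surviving square root approaches $\sqrt{\binom{k}{i-1}}$ or $\sqrt{\binom{k}{i}}$, while the isolated constant contributes $1$.

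The main technical obstacle is the combinatorial identity $N^{(k)}_{\mu,(n-k,0)} = \binom{k}{\mu_2}$. The reflection principle gives the exact path count as $\binom{k}{\mu_2} - \binom{k}{\mu_2+k-n-1}$, so one must verify that the reflected binomial vanishes for every $\mu$ contributing nonzero weight in the claimed formula. A careful case analysis of the boundary $\mu_2 = k$ (the isolated term) versus $\mu_2 \leq k-1$ (the main sum), together with the $\delta$-conditions on $n$, $i$, and $k$, is where the bookkeeping is most delicate and most liable to hide an off-by-one.
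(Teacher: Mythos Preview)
Your proposal is correct and follows essentially the same route as the paper: iterate the partial-trace identity \eqref{eq:sym-partial-tr} to express $x_\mu$ as $m_\mu$ times a Bratteli path count divided by $2^k m_{(n-k,0)}$, identify that count with $\binom{k}{\mu_2}$, and then organize the fidelity sum \eqref{eq:sym-optimal} over $\alpha\vdash_2 n-1$ by the length of the second row. Your explicit mention of the reflection-principle correction $\binom{k}{\mu_2}-\binom{k}{\mu_2+k-n-1}$ is in fact more careful than the paper's own argument, which simply asserts $t_{(n-k,0)\to(n-i,i)}=\binom{k}{i}$ without discussing the Young-validity constraint along intermediate steps.
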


\begin{proof}

Given a partition $\lambda\vdash_2 n$, define $x_{\lambda} \coloneqq \Tr[\wt{\Sigma}^{k,n} P_\lambda]$ if $\lambda$ is a valid Young diagram and $x_{\lambda} = 0$ otherwise, where $P_\lambda$ is the isotypical projection onto the $S_n$-irrep $W_\lambda$.
Using the partial trace formula for the Young projector in \eqref{eq:sym-partial-tr} iteratively yields the following result:
\begin{equation}
  x_{\lambda}
  = \frac{1}{2^k} \bra{+}^{\otimes n-k} \Tr_{B^k}P_\lambda \ket{+}^{\otimes n-k}
  = \frac{1}{2^k} \sum_{\beta \in \lambda - k\cdot \square} \frac{m_\lambda}{m_{\beta}} \bra{+}^{\otimes n-k} P_{\beta} \ket{+}^{\otimes n-k} \label{eq:x-lambda}.
\end{equation}
The summation in \eqref{eq:x-lambda} is taken over all Young diagrams $\beta$ that can be obtained by iteratively removing $k$ boxes, while ensuring that each removal maintains a valid Young diagram.
For any $\beta \neq (n-k,0)$, the inner product appearing in the sum is 0, and for $\beta = (n-k,0)$
\begin{align}
x_\lambda
  = \frac{m_\lambda \cdot t_{(n-k,0) \to \lambda}}{2^k m_{(n-k,0)}}
  = \frac{m_\lambda \cdot t_{(n-k,0) \to \lambda}}{2^k \cdot (n-k+1)}
\end{align}
where $t_{\mu\to \lambda}$ denotes the number of paths from $\mu$ to $\lambda$ in the Bratteli diagram, whose vertices are Young diagrams $\nu \vdash_2 m$ for $m = 1, 2, \dots, n$, and $\alpha \to \beta$ is an edge if $\beta \in \alpha + \square$.
For $i \leq k$ the number $t_{(n-k)\to (n-i,i)}$ is given by the recursive expression
\begin{align}
t_{(n-k)\to (n-i,i)} = t_{(n-k) \to (n-i-1,i)} + t_{(n-k) \to (n-i,i-1)},
\end{align}
and solving this leads to $t_{(n-k)\to (n-i,i)} = \binom{k}{i}$, and $t_{(n-k,0)\to (n-i,i)} = 0$ if $i > k$,
Alternatively, we can think of $t_{(n-k) \to (n-i,i)}$ as choosing $i$ number of steps out of $k$ steps to add a box to the second row in the Young diagram.

\begin{align}
\F_{\pbt}&(\wt{\Sigma}^{k,n})
    = \frac{1}{4} \sum_{i=0}^{k} \delta_{n-i-1\geq i}\cdot \Bigg(\sum_{\lambda \in (n-1-i,i) + \square} \sqrt{x_\lambda}\Bigg)^2\\
    &= \frac{1}{4} \Bigg( \delta_{n-k-1\geq k} \cdot x_{(n-k,k)} + \sum_{i=0}^{k-1} \delta_{n-i-1\geq i} \cdot \Bigg(\sqrt{x_{(n-i,i)}} + \sqrt{x_{(n-i-1,i+1)}}\Bigg)^2\Bigg)\\
  &= \frac{1}{2^{k+2}(n-k+1)} \Bigg[ \delta_{n-1 \geq 2k} \cdot \frac{m_{(n-k,k)} t_{(n-k)\to (n-k,k)}}{2^{k+2}(n-k+1)} +  \sum_{i=0}^{k-1} \delta_{n-1\geq 2i} \Bigg(\sqrt{ m_{(n-i,i)} \cdot t_{(n-k)\to (n-i,i)}}  \\
  & \qquad {} + \sqrt{m_{(n-i-1,i+1)} \cdot t_{(n-k) \to (n-i-1,i+1)}}\Bigg)^2   \Bigg]  \\
  &= \frac{1}{2^{k+2}(n-k+1)} \Bigg(\delta_{n-1 \geq 2k} \cdot m_{(n-k,k)} + \sum_{i=0}^{k-1} \delta_{n-1\geq 2i}\Bigg(\sqrt{ m_{(n-i,i)} \cdot \binom{k}{i}} + \sqrt{m_{(n-i-1,i+1)} \cdot \binom{k}{i+1}} \Bigg)^2\Bigg).
\end{align}
Recall that $m_\lambda$ is the number of semi-standard tableaux of shape $\lambda = (\lambda_1,\lambda_2)$ with each entry being $1$ or $2$.
If $T$ is such semi-standard tableau, first $\lambda_2$ columns must have all 1's in the first row and all 2's in the second row.
This means
\begin{align}
  m_{(\lambda_1,\lambda_2)} = m_{(\lambda_1-\lambda_2)} = \lambda_1-\lambda_2 + 1,
\end{align}
and plugging these values yields \eqref{eq:sym-fidelity-app} and \eqref{eq:sym-fixed-k-app}.
\end{proof}

\section{Analytical results for $\F_{\pbtcl}(\Phi^+_{AB})$} \label{sec:A:1-epr-Fpbt}
The PBT ensemble for $\pbtcl$ with resource state $\Phi^+_{AB}$ is composed of the following states:
\begin{align}
  \rho^{(1)} = \Phi^+_{AB} \q\q
  \rho^{(2)} = \wh{\1}_A \otimes \pure{0}_B \q\q
  \rho^{(3)} = \wh{\1}_A \otimes \pure{1}_B.
\end{align}
After relabeling the basis elements, they have the form
\begin{align}
  \begin{aligned}
\rho^{(1)} &= \begin{pmatrix} P & 0 \\ 0 & 0 \end{pmatrix}  &
  \rho^{(2)} &= \begin{pmatrix} Q & 0 \\ 0 & R \end{pmatrix}  &
  \rho^{(3)} &= \begin{pmatrix} R & 0 \\ 0 & Q \end{pmatrix},
  \end{aligned}
 \label{eq:block}
\end{align}
where
\begin{equation}
  P = \begin{pmatrix}\frac{1}{2} & \frac{1}{2} \\[.5em] \frac{1}{2} & \frac{1}{2} \end{pmatrix} \q
  Q = \begin{pmatrix} \frac{1}{2} & 0 \\[.5em] 0 & 0 \end{pmatrix} \q
  R = \begin{pmatrix} 0 & 0 \\[.5em] 0 & \frac{1}{2} \end{pmatrix}.
\end{equation}
Our objective is to find a POVM $\Pi = \{\Pi^{(1)}, \Pi^{(2)}, \Pi^{(3)}\}$ that maximizes $\sum_i \Tr[\rho^{(i)} \Pi^{(i)}]$.
We first observe that it is sufficient to consider the POVMs whose elements are of block diagonal matrix form in the basis used in \eqref{eq:block}.
If each $\Pi^{(i)}$ is of the form
\begin{align}
  \Pi^{(i)} = \begin{pmatrix}
    A_i & C_i \\ C_i^\dagger & B_i,
  \end{pmatrix}
\end{align}
the objective function yields
\begin{align}
  \sum_i \Tr[\rho^{(i)} \Pi^{(i)}] = \Tr[PA_1 + QA_2 + RB_2 + RA_3 + QB_3],
\end{align}
which has no dependence on $C_i$'s.
If $\Pi$ achieves the maximum where $C_i \neq 0$ for some $i$, then the operators $\wt{\Pi}^{(i)}$ given by
\begin{align}
  \wt{\Pi}^{(i)} = \begin{pmatrix}
    A_i & 0 \\ 0 & B_i
  \end{pmatrix}
\end{align}
forms a POVM since $A_i, B_i \geq 0$ for all $i$, and $\{\wt{\Pi}^{(i)}\}_i$ also achieves the maximum.
With a similar argument, we can further assume that the maximizing $\Pi^{(i)}$'s are of the form
\begin{align}
  \Pi^{(1)} = \begin{pmatrix} N_1 & 0 \\ 0 & 0 \end{pmatrix} \q
  \Pi^{(2)} = \begin{pmatrix} N_2 & 0 \\ 0 & \pure{1} \end{pmatrix} \q
  \Pi^{(3)} = \begin{pmatrix} N_3 & 0 \\ 0 & \pure{0} \end{pmatrix}.
\end{align}
Then the fidelity of $\pbtcl$ for $\Phi^+$ reduces to
\begin{align}
  \F_{\pbtcl}(\Phi^+) = \frac{1}{4} \left(\max_{N\; \POVM} \Tr [PN_1 + QN_2 + RN_3] + 1\right),
\end{align}
where the maximization is over all POVMs $\{N_i\}_{i=1}^3$ on $\bC^2$.
Let
\begin{align}
  M^{(i)} = \begin{pmatrix} a_i & b_i \\ \bar{b}_i & c_i \end{pmatrix}
\end{align}
By symmetry, we can assume $a_1 = c_1$, $a_2 = c_3$ and $a_3 = c_2$.
Then
\begin{align}
  \Tr[PN_1 + QN_2 + RN_3] = a_1 + \text{Re}(b_1) + a_2
\end{align}
with positivity constraints
\begin{align}
  a_1,a_2,c_2 \geq 0 \q\q a_1^2 - b_1 \bar{b}_1 \geq 0 \q\q a_2c_2 - b_2\bar{b}_2 \geq 0
\end{align}
and the constraints for $\sum_i N_i = \1$ are given by
\begin{align}
  a_1 + a_2 + c_2 = 1 \q\q b_1 + b_2 + b_3 = 0.
\end{align}
Solving the system of equations yields the optimal POVM
\begin{align}
  N_1 = \begin{pmatrix}
    \frac{4}{9} & \frac{4}{9}\\
    \frac{4}{9} & \frac{4}{9}
  \end{pmatrix}\q
  N_2 = \begin{pmatrix}
    \frac{4}{9} & -\frac{2}{9}\\
    -\frac{2}{9} & \frac{1}{9}
  \end{pmatrix}\q
  N_3 = \begin{pmatrix}
    \frac{1}{9} & -\frac{2}{9}\\
    -\frac{2}{9} & \frac{4}{9}
  \end{pmatrix}
\end{align}
and we have $\F_{\pbtcl}(\Phi^+) = \frac{7}{12} \approx 0.5833$.

\end{document}